\theoremstyle{plain}
\newtheorem{theo}{Theorem}[section] 
\newtheorem{prop}[theo]{Proposition}
\newtheorem{lemma}[theo]{Lemma}
\newtheorem{cor}[theo]{Corollary}
\theoremstyle{definition}
\newtheorem{rem}[theo]{Remark}
\newcommand{\R}{{\mathbb{R}}}
\newcommand{\N}{{\mathbb{N}}}
\newcommand{\C}{{\mathbb{C}}}
\newcommand{\Z}{{\mathbb{Z}}}
\newcommand{\be}{{\beta}}
\newcommand{\al}{{\alpha}}
\newcommand{\la}{{\lambda}}
\newcommand{\De}{{\Delta}}
\newcommand{\si}{{\sigma}}
\newcommand{\ka}{{\kappa}}
\newcommand{\om}{{\omega}}
\newcommand{\Ga}{{\Gamma}}
\newcommand{\ep}{\epsilon}
\newcommand{\Ci}{{\mathcal{C}}^{\infty}} 
\newcommand{\Cl}{\mathcal{C}}
\newcommand{\op}{\operatorname}
\newcommand{\con}{\overline}
\newcommand{\bigo}{\mathcal{O}} 
\newcommand{\Hilb}{\mathcal{H}}
\newcommand{\er}{\op{er}}
\newcommand{\bigWedge}{\textstyle{\bigwedge}}
\begin{document}

\author{Laurent Charles and Benoit Estienne \footnote{B.E. was supported by Grants No. ANR-17-CE30-0013-01 and No. ANR-16-CE30-0025.}}

\title{Entanglement entropy and Berezin-Toeplitz operators}

\maketitle

\begin{abstract} 
We consider Berezin-Toeplitz operators on compact K\"ahler manifolds whose
symbols are characteristic functions. When the support of the
characteristic function has a smooth boundary, we prove a two-term Weyl
law, the second term being proportional to the Riemannian volume of the
boundary. As a consequence, we deduce the area law for the entanglement
entropy of integer quantum Hall states.  Another application is for the
determinantal processes with correlation kernel the Bergman kernels of a positive line bundle : we prove
that the number of points in a smooth domain is asymptotically normal. 
\end{abstract}

The area laws for entanglement entropy have been widely discussed in recent years in condensed matter and quantum field theories.  Typically, one considers a many-particle state and a geometric partition of the space in two sub-regions. The von Neumann entropy of the reduced state of a sub-region measures the degree of entanglement between the two regions. The area law states that this entanglement entropy is proportional to the volume of the boundary of the sub-region. These area laws have been verified in many different systems, such as harmonic and spin chains, strongly correlated fermionic systems and quantum field theories, see for instance the surveys \cite{Pe_Ei}, \cite{Sur_ACP}, \cite{Sur_AFOV}.  

So far, in the mathematical literature, only the case of the free Fermi gas has been considered \cite{Gioev}, \cite{HeLeSp}, \cite{LSS1}, \cite{LeSoSp}.
The goal of the present paper is to address the case of Integer Quantum Hall (IQH) states. Unlike the Fermi gas, these systems have a spectral gap above the ground state energy and the corresponding area law has no logarithmic correction. 

As it was observed in \cite{Gioev_Klich}, the area law for a Fermi gas is related to a conjecture by Widom on the correction terms in the Weyl law for pseudo-differential operators with singular symbols. Similarly, for IQH states, the area law amounts to estimating a weighted spectral average of a Berezin-Toeplitz operator $T$ whose symbol is the characteristic function of the sub-region. The theory of Berezin-Toeplitz operators is well-developed but here we have to face several difficulties: first, the symbol of the operator $T$ is singular not even continuous; second, the expectation we have to estimate is the trace of $f(T)$ with a function $f$ having a logarithmic singularity at $0$ and $1$ precisely where the concentration of eigenvalues is highest; third, the relevant term in this estimate is a correction to the leading order term.

The results we obtain are quite general, for any domain with a smooth boundary of a compact Riemann surface, or in higher dimension of a compact K\"ahler manifold. This generality was a surprise to us, because even if the physics papers provide a lot of numerical evidence and heuristic arguments, they treat only very particular integrable geometries with a lot of symmetries, which allows one to compute explicitly an eigenbasis for the reduced state \cite{RS1,RS2}. 

Besides the entanglement entropy estimate, our Weyl law have applications
to determinantal processes whose correlation kernel is the Bergman kernel of
a positive line bundle. In this context, very general results are already
known for linear statistics, cf. \cite{Be1}, \cite{Be2}.  
 We give new estimates by providing complete asymptotic expansions for the
variance and the higher cumulants of the number of points in a smooth
domain. In particular, the variance is proportional to the volume of the
boundary at first order, a variant of the area law. This leads after a convenient rescaling to new convergences to the
normal distribution. 
  Our result could also find applications in the context of random matrix theory, in particular the distribution of eigenvalues in a smooth domain of the complex plane for the complex Ginibre ensemble. 
To compare with the existing litterature, similar central limit theorems hold more generally for Coulomb gases \cite{LeSe}, \cite{Se_surveyICM}, our setting corresponding to the inverse temperature $\be =2$. But to our knowledge, all theses results are for smooth enough linear statistics, the case of characteristic functions being new.

Our proofs are based on Bergman kernel asymptotics. The technical part consists of estimating singular oscillatory integrals. To do this, we develop a generalisation of the Laplace method which is of independent interest.

Before we state our results precisely, let us mention that partial closed
results in the Bargmann space were already obtained by Oldfield
\cite{Oldfield}, using Weyl quantization. Bergman kernel techniques have
already been used with success for the study of quantum Hall effect,
cf. the survey \cite{Kl} for instance. In the context of Berezin-Toeplitz operators in K\"ahler manifolds, two recent papers on different but related subjects are \cite{ZeZh} on partial Bergman kernels and \cite{PEU} on the quantization of submanifolds. Finally, the appendix of \cite{Pol_oim} is devoted to multiplicative properties of the Toeplitz operators with a characteristic function symbol.

\section{Statements of the results} 

\subsection{Toeplitz operators with characteristic function symbol} \label{sec:toepl-oper-with}
Let $L \rightarrow M$ be a positive holomorphic Hermitian line bundle on a compact manifold $M$. 
For any positive integer $k$, let $\mathcal{H}_k$ be the space of holomorphic sections of $L^k$.
 For any function $f$ on $M$, the {\em Toeplitz operator} with multiplicator $f$ is the endomorphism  $T_{f,k}$ of $\mathcal{H}_k$ such that
$$ \langle T_{f,k} s ,t \rangle = \langle f s , t \rangle , \qquad \forall \; s,t \in \mathcal{H}_k.$$
Here the scalar product $\langle \cdot, \cdot \rangle$ of sections of $L^k$  is defined by integrating the pointwise scalar product against the Riemannian measure $\mu$, 
the Riemannian metric $g$ of $M$ being determined by the curvature $\Theta(L)$ of the Chern connection of $L$, that is $g (X,Y) = - i \Theta(L) ( X,jY)$ with $j$ the complex structure. 

The definition of the Toeplitz operator $T_{f,k}$  makes sense for any integrable function $f$. So in particular we can consider the characteristic function of a mesurable subset $A$ of $M$. We denote by $T_{A,k}$ the corresponding Toeplitz operator. 

All the results we will prove concern the {\em semi-classical limit} $k \rightarrow \infty$. The dimension of $\mathcal{H}_k$ is equal to first order
\begin{gather} \label{eq:dim_vol}
 \op{dim} \Hilb_k = \left(\frac{k}{2\pi} \right)^n \left( \op{vol} (M) + \bigo \left(k^{-1}\right) \right)
\end{gather}
where $n$ is the complex dimension of $M$ and $\op{vol} (M)$ is the Riemannian volume of $M$.

For the reader not familiar with this geometric setting, it can be interesting to consider the example of the projective space $\mathbb{P}^N$ with $L_{\mathbb{P}^N}$ the dual of the tautological line bundle. More generally, let $M$ be a closed complex submanifold of $\mathbb{P}^N $ and $L$ be the restriction of $L_{\mathbb{P}^N}$ to $M$. In this case, the spaces $\mathcal{H}_k$ have a simple concrete description. 
Let $\pi$ be the projection from the unit sphere of $\C^{N+1}$ onto $\mathbb{P}^N$ and let $\tilde M := \pi^{-1} (M)$. Then when $k$ is sufficiently large the holomorphic sections of $L^k$ lifts to polynomial functions of $\C^{N+1}$ and we have a natural identification 
$$\mathcal{H}_k  \simeq \C_k [\tilde{M}],$$ 
where  $\C_k [\tilde {M}]$ consists of the restriction to $\tilde{M}$ of homogeneous polynomials of $\C^{N+1}$ with degree $k$. Furthermore, the scalar product is given in $\C_{k} [\tilde{M}]$ by $ \langle f, g \rangle = C_{n,N} \int_{\tilde M} f(x) \con{g} (x) \; d \tilde{\mu} (x)$ where $\tilde{\mu}$ is the Riemannian measure of $\tilde{M}$ for the Riemannian metric of $\tilde{M}$ induced by the Euclidean scalar product of $\C^{N+1}$ and $C_{n,N}$ is a positive constant independent of $k$ and $M$.

Let us return to the general situation and consider a subset $A$ of $M$ with an empty boundary. Then $M$ is the disjoint union of two open sets: $A$ and its complementary set $A^c$. So $\mathcal{H}_{k} = \mathcal{H}_k(A) \oplus \mathcal{H}_k (A^c)$ where $\mathcal{H}_k(A)$ (resp. $\mathcal{H}_k(A^c)$) consists of the sections vanishing on $A^c$ (resp. on $A$). Furthermore $T_{A,k}$ is the projector onto $\mathcal{H}_k(A)$ with kernel $\mathcal{H}_{k}(A^c)$. 
This property does not hold in general. Actually, if $M$ is connected and $A$ and $A^c$ have non empty interiors, then $0$ and $1$ are {not even} in the spectrum of $T_{A,k}$. Nevertheless, $T_{A,k}$ looks like a projector in the sense that its eigenvalues concentrate at $0$ and $1$.  Let us explain this more precisely. 

First of all, for any measurable set $A$ of $M$, $T_{A,k}$ is a Hermitian endomorphism, whose spectrum $\op{sp}(T_{A,k})$ is contained in $[0,1]$.  Then it is proved in \cite{Berndtsson}, \cite{Lin}  that if the boundary of $A$ has a null measure, then for any $\ep \in (0,1/2)$ we have 
\begin{gather} \label{eq:est_large}
\begin{split}
 \bigl| \op{sp} (T_{A,k}) \cap [0,\epsilon]  \bigr| = \Bigl(\frac{k}{2\pi} \Bigr)^n \bigl( \mu (A^c) + \op{o} (1) \bigr) \\ 
\Bigl| \op{sp} (T_{A,k}) \cap [1-\epsilon,1]  \bigr| = \Bigl(\frac{k}{2\pi} \Bigr)^n \bigl( \mu (A)+ \op{o} (1) \bigr)  
\end{split}
\end{gather}
in the limit $k \rightarrow \infty$, where the eigenvalues are counted with multiplicity.  Here and in the sequel we denote by $\mu (B) = \int_B \mu$ the measure of a subset $B$ of $M$. 

The next question is about the possible eigenvalues in $[\ep, 1-\ep]$. As a consequence of (\ref{eq:dim_vol}) and (\ref{eq:est_large}), we have   
 $\bigl| \op{sp} (T_{A,k}) \cap [\ep,1- \epsilon]  \bigr| = \op{o} (k^{n})$. Also it is not difficult to see that the corresponding eigenstates are concentrated on the boundary of $A$. More precisely, let $\mathcal{H}_k (\ep)$ be the sum of the eigenspaces of $T_{A,k}$ with eigenvalue in $[\ep, 1-\ep]$. Then for any compact subset $K$ of $M$ not intersecting $\partial A$, for any $N>0$, there exists $C_{N,K}$ such that for any $s \in \mathcal{H}_k(\ep)$ and $x \in K$, we have $$| s(x) | \leqslant C_{K,N} k^{-N} \|s \|.$$ So we can consider that the states in $\mathcal{H}_k(\ep)$ live in the interface of $A$ and its complementary set. 
 
 Our first result shows that the number of eigenvalues in $[\ep, 1- \ep]$ has a remarkable simple universal behavior when the boundary of $A$ is smooth. For any closed submanifold $B$ of $M$, denote by $\op{vol} ( B)$ the Riemannian volume of $B$. 

\begin{theo} \label{theo:intro}
Let $A$ be a subset of $M$ with a smooth boundary. Then for any $a,b$ such that $0<a<b<1$, we have 
\begin{gather} \label{eq:comptage}
\Bigl| \op{sp} (T_{A,k}) \cap [a,b] \Bigr| = k^{n-1/2} \frac{\op{vol} (\partial A)}{ ( 2 \pi )^n }   ( \er^{-1} (b) - \er^{-1} (a) ) + \op{o} ( k^{n-1/2})
\end{gather}
where $\er : \R \rightarrow [0,1]$ is the function defined by $\er(x) = \pi^{-1/2} \int_{-\infty}^x e^{-t^2} \; dt $.
\end{theo}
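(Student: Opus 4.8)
The plan is to reduce the counting function $|\op{sp}(T_{A,k})\cap[a,b]|$ to a trace of a smooth function of $T_{A,k}$ and then evaluate that trace via Bergman kernel asymptotics localized near $\partial A$. More precisely, for a test function $\chi\in\Ci(\R)$, write $\op{tr}\,\chi(T_{A,k}) = \sum_{\la\in\op{sp}(T_{A,k})}\chi(\la)$; by \eqref{eq:est_large} the eigenvalues in $(0,1)$ not near $\partial A$ contribute negligibly, so if $\chi$ vanishes near $0$ and $1$ then $\op{tr}\,\chi(T_{A,k})$ captures exactly the interface eigenvalues. Sandwiching the indicator $\mathbf 1_{[a,b]}$ between smooth functions supported in $(0,1)$ and letting the approximation tighten, the theorem follows once we show
\begin{gather} \label{eq:trace-target}
\op{tr}\,\chi(T_{A,k}) = k^{n-1/2}\,\frac{\op{vol}(\partial A)}{(2\pi)^n}\int_{\R}\chi(\er(x))\,dx + \op{o}(k^{n-1/2})
\end{gather}
for all $\chi\in\Ci_c((0,1))$, since a change of variables $\la=\er(x)$ turns $\int\chi(\er(x))\,dx$ into $\int_a^b\chi(\la)\,(\er^{-1})'(\la)\,d\la$, consistent with the density $(\er^{-1}(b)-\er^{-1}(a))$ after specializing to $\chi\to\mathbf1_{[a,b]}$.

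To prove \eqref{eq:trace-target} I would expand $\chi(T_{A,k})$ using that $T_{A,k}$ is itself a Toeplitz-type operator. The first step is to establish a functional calculus: since $T_{A,k} = \Pi_k M_{\mathbf1_A}\Pi_k$ with $\Pi_k$ the Bergman projector, its powers $T_{A,k}^m$ have Schwartz kernels expressible as iterated integrals of the Bergman kernel $\Pi_k(x,y)$ over $A$, whose off-diagonal decay $|\Pi_k(x,y)|\sim k^n e^{-k\,d(x,y)^2/C}$ concentrates all the action within $O(k^{-1/2})$ of $\partial A$. The second step is the local model: in Bergman normal coordinates near a boundary point, rescaling by $\sqrt k$, the operator $T_{A,k}$ converges to the model Toeplitz operator on the Bargmann space $L^2_{\mathrm{hol}}(\C^n,e^{-|z|^2})$ with symbol the indicator of a half-space $\{x_1>0\}$. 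That model operator is, by translation invariance in the $2n-1$ directions parallel to the boundary and explicit computation in the transverse direction, unitarily a direct integral over $\R^{2n-1}$ of the one-dimensional operator "multiplication by $\mathbf1_{[0,\infty)}$ compressed to the Bargmann space on $\C$", whose spectrum is exactly $\{\er(x):x\in\R\}$ with the appropriate spectral density — this is where the error function $\er$ enters, via $\int_0^\infty |z|$-Gaussian tails, i.e. $\int_{-\infty}^{x}e^{-t^2}\,dt$. The third step assembles the pieces: $\op{tr}\,\chi(T_{A,k}) = \int_{\partial A}(\text{local trace density})\,d\op{vol}_{\partial A} + \op{o}(k^{n-1/2})$, where the local density is computed in the model and equals $k^{n-1/2}(2\pi)^{-n}\int_\R\chi(\er(x))\,dx$ by the direct-integral decomposition.

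The main obstacle is controlling the functional calculus $\chi(T_{A,k})$ uniformly in $k$ given that the symbol $\mathbf1_A$ is discontinuous: one cannot simply quote smooth Toeplitz calculus, and the naive approach via $\chi(T_{A,k})=\frac1{2\pi i}\oint\chi(z)(z-T_{A,k})^{-1}\,dz$ requires resolvent bounds near the spectrum $[0,1]$. I expect the cleanest route is to approximate $\mathbf1_A$ by smooth cutoffs $\rho_\de$ that interpolate on a shell of width $\de$ around $\partial A$, control $\|T_{\rho_\de,k}-T_{A,k}\|$ and the difference of traces in terms of $\de$ and $k$, and optimize $\de=\de(k)\to 0$ slowly relative to $k^{-1/2}$; the singular oscillatory integral estimates and the generalized Laplace method advertised in the introduction are presumably exactly the tools that make this quantitative. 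A secondary technical point is justifying the rescaled convergence to the half-space Bargmann model with sufficiently uniform remainder estimates along all of $\partial A$ (using compactness of $\partial A$ and smoothness to get uniform constants), and checking that the transverse one-dimensional computation indeed produces the error function rather than some other profile — this is a direct but delicate Gaussian integral that pins down the constant $(2\pi)^{-n}$ and the shape of the spectral density.
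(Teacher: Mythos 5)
Your overall skeleton (sandwich $\mathbf 1_{[a,b]}$ between continuous test functions supported in $(0,1)$, reduce to a trace asymptotic, localize to $\partial A$ via Bergman kernel decay) matches the paper's, but the central analytic step is routed through an argument that does not work as described. You propose to control $\op{tr}\,\chi(T_{A,k})$ for smooth $\chi$ by smoothing the \emph{symbol}, replacing $\mathbf 1_A$ by a cutoff $\rho_\de$ and optimizing $\de(k)$. The trace-class error of this replacement is governed by $\op{tr}\,T_{|\rho_\de-\mathbf 1_A|,k}\sim (k/2\pi)^n\int|\rho_\de-\mathbf 1_A|\,d\mu \sim k^n\de$, which is $\op{o}(k^{n-1/2})$ only if $\de\ll k^{-1/2}$; but then $\rho_\de$ varies below the semiclassical scale and no smooth Toeplitz calculus applies uniformly, while your stated choice $\de\gg k^{-1/2}$ makes the error swamp the term you are trying to compute. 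The paper avoids functional calculus entirely: it works with \emph{polynomials $f$ vanishing at $0$ and $1$}, which are spanned by $h_p(x)=x^p-x^{p+1}$, so that $\op{tr}\,f(T_{A,k})$ is a finite combination of the explicit iterated Bergman-kernel integrals $\op{tr}(T_{A,k}^p-T_{A,k}^{p+1})=\int_{A^p\times A^c}\Pi_k(x_1,x_2)\cdots\Pi_k(x_{p+1},x_1)\,\mu^{p+1}$, each localizing on $\Delta_{p+1}(\partial A)$ and evaluated by a degenerate stationary phase over a \emph{conic} domain $\{0\le s\le \min_i t_i^1\}$ (this conicity is what produces $k^{n-1/2}$ rather than $k^{n}$ or $k^{n-(p+1)/2}$). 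The passage from polynomials to continuous $\chi$ supported in $(0,1)$ is then done by Weierstrass approximation on the \emph{test function} together with positivity of the functionals $f\mapsto\op{tr}f(T_{A,k})$ and the a priori bound on $\op{tr}(T_A-T_A^2)$. This vanishing-at-the-endpoints structure is the missing idea in your write-up: without it you have no way to express $\chi(T_{A,k})$ in terms of kernel integrals, and the resolvent route you mention requires bounds you cannot supply.

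Your identification of the error function via a rescaled half-space model in Bargmann space, decomposed as a direct integral over the boundary directions, is a genuinely different route from the paper's and is morally correct (though the transverse model has continuous spectrum, so ``spectrum $=\{\er(x)\}$ with a spectral density'' needs to be phrased as a direct-integral/spectral-measure statement, and upgrading pointwise rescaled convergence of kernels to trace asymptotics with uniform remainders along $\partial A$ is a substantial piece of work you leave open). The paper sidesteps all of this: it first proves that the leading coefficient of $\op{tr}(T_A^p-T_A^{p+1})$ has the universal form $C_{p,n}\op{vol}(\partial A)$, reduces $C_{p,n}$ to $C_{p,1}$ by a product argument, and then evaluates $C_{p,1}$ on a compact cylinder $\C/i\Z$ with a half-cylinder domain, where $T_A$ is exactly diagonal with eigenvalues $\er(\ell/\sqrt k)$ and an Euler--Maclaurin summation gives $\int_\R g_p(\er(x))\,dx$. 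That computation is where $\er$ actually enters, and it requires no operator-convergence statement.
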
 
The function $\er$ is increasing from $0$ to $1$, it is given in terms of the standard error function $\op{erfc}$ by $\er (x) = 1 - \op{erfc}(x) /2$. The transition between Theorem \ref{theo:intro} and estimates (\ref{eq:est_large}) is rather subtle
because $k^n$ is replaced by $k^{n-1/2}$ and the volumes of $A$ and $A^c$ are replaced by the volume of the boundary. So not surprisingly, the quantity $ | \er^{-1} (b) - \er^{-1} (a)|$ diverges as $a \rightarrow 0$ or $b \rightarrow 1$, which prevents us to reach (\ref{eq:est_large}) from Theorem \ref{theo:intro}. Nevertheless, we can improve Theorem \ref{theo:intro} and say something on the limits $a \rightarrow 0$, $b \rightarrow 1$ by considering weighted sums of eigenvalues as follows. 
Introduce the measure $m$ of $(0,1)$ such that $m ([a,b]) =  | \er^{-1} (b) - \er^{-1} (a) |$ and the corresponding integral  
$$ I(f) := \int_0^1 f(x) \; dm (x) = \int_{-\infty}^{\infty} f(\er (x) ) \; dx$$
for functions $f: (0,1) \rightarrow \C$. We have $ d m (t) = \delta (t) \; dt$ with a smooth density $\delta : ( 0,1) \rightarrow \R$ such that $\delta ( t) = \delta ( 1-t)$ and $$\delta ( t) \sim \frac{1}{2} t^{-1} \bigl ( \ln ( 1/t) \bigr) ^{-1/2}$$ as $t \rightarrow 0$. Consequently, the integral $I(f)$ converges for any continuous function $f: [0,1] \rightarrow \C$ which is H\"older continuous at $0$ and $1$ with  $f(0) = f(1)=0$.

\begin{theo} \label{theo:intro_2}
Let $A$ be a subset of $M$ with a smooth boundary.
For any continuous  $f : [0,1] \rightarrow \C$ satisfying $|f(t) | = \bigo ( t^p)$  and $|f(1-t) | = \bigo ( t^p)$ for some positive $p$, we have  
\begin{gather} \label{dev:f_trace}
 \op{tr} ( f ( T_{A,k}))  = k^{n-1/2} \frac{\op{vol} (\partial A)}{ ( 2 \pi)^n} I (f) \bigl( 1 + \op{o} (1) \bigr).
\end{gather}
If $f$ is polynomial with $f(0) = f(1) =0$, then we have a complete asymptotic expansion 
\begin{gather} \label{eq:expansion} 
 \op{tr}(f(T_{A,k})) = k^{n-1/2} \sum_{\ell = 0 }^{N} c_{\ell} (f) k^{- \ell/2 } + \bigo ( k^{n-N-1}) , \qquad \forall N \in \N 
\end{gather}
where $c_0 ( f) = (2 \pi )^{-n} \op{vol} (\partial A) I(f)$.  
\end{theo}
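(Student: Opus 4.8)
The plan is to prove the polynomial case \eqref{eq:expansion} first, by analysing $\op{tr}(T_{A,k}^m)$ for each $m\in\N$, and to deduce \eqref{dev:f_trace} from it by approximation. A polynomial $f$ with $f(0)=f(1)=0$ can be written $f(t)=\sum_{m\ge 1}a_m t^m$, and then $\sum_m a_m=f(1)=0$, so
$$\op{tr}(f(T_{A,k}))=\sum_m a_m\op{tr}(T_{A,k}^m)=-\sum_m a_m R_{m,k},\qquad R_{m,k}:=\op{tr}(T_{A,k})-\op{tr}(T_{A,k}^m)=\sum_j(\la_j-\la_j^m)\ge 0,$$
where the $\la_j$ are the eigenvalues of $T_{A,k}$; the bulk contribution, carried by $\op{tr}(T_{A,k})$, has cancelled and everything is reduced to the nonnegative boundary term $R_{m,k}$. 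Writing $K_k$ for the Bergman kernel of $\Hilb_k$ (the Schwartz kernel with respect to $\mu$ of the orthogonal projector onto $\Hilb_k$), $T_{A,k}$ has Schwartz kernel $(x,y)\mapsto K_k(x,y)\mathbf{1}_A(y)$, so $\op{tr}(T_{A,k}^m)=\int_{A^m}K_k(x_1,x_2)K_k(x_2,x_3)\cdots K_k(x_m,x_1)\,d\mu(x_1)\cdots d\mu(x_m)$. Using the reproducing identity $\int_M K_k(x,y)K_k(y,z)\,d\mu(y)=K_k(x,z)$ to carry out the integrals over $x_2,\dots,x_m$ on all of $M$ identifies $\op{tr}(T_{A,k})=\int_A K_k(x,x)\,d\mu(x)=\int_A\bigl(\int_{M^{m-1}}\prod_i K_k(x_i,x_{i+1})\bigr)d\mu(x_1)$, whence
$$R_{m,k}=\int_A\ \int_{M^{m-1}\setminus A^{m-1}}K_k(x_1,x_2)K_k(x_2,x_3)\cdots K_k(x_m,x_1)\,d\mu(x_2)\cdots d\mu(x_m)\,d\mu(x_1).$$

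By the Gaussian off-diagonal decay of the Bergman kernel, the integrand above is negligible unless all the $x_i$ lie within $\bigo(k^{-1/2}\ln k)$ of one another; since $x_1\in A$ while some $x_j$ lies in $A^c$, this confines $x_1$ to a tube of width $\bigo(k^{-1/2}\ln k)$ around $\partial A$. The technical heart of the proof is then a Laplace-type analysis of $R_{m,k}$ on this tube, using the generalisation of the Laplace method announced in the introduction. Near $\partial A$, in Fermi coordinates $(y',s)$ ($y'$ parametrising $\partial A$ locally, $s$ the signed distance, $A=\{s\ge 0\}$), one places $x_1$ at rescaled distance $s=\si/\sqrt k$ and writes $x_i=x_1+u_i/\sqrt k$ for $i\ge 2$, then inserts the near-diagonal Bergman asymptotics $K_k(x,y)=(k/2\pi)^n e^{k\ph(x,y)}\bigl(a_0(x,y)+a_1(x,y)k^{-1}+\cdots\bigr)$, with $\op{Re}\ph\le 0$ vanishing to second order on the diagonal and $a_0(x,x)=1$. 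After rescaling, the integrand becomes a Gaussian times an oscillation times a classical symbol in $(\si,u_2,\dots,u_m)$, to be integrated over $\{\si\ge 0,\ \min_{2\le j\le m}(\text{normal component of }u_j)<-\si\}$; the constraint together with the Gaussian weight forces integrability in $\si$, so no logarithm survives, and term-by-term integration of the symbol expansion produces, for each $N$, an expansion $R_{m,k}=k^{n-1/2}\sum_{\ell=0}^{N}r_{m,\ell}k^{-\ell/2}+\bigo(k^{n-N-1})$. I expect the main obstacle to be exactly here: controlling the oscillatory integral over the non-smooth domain $M^{m-1}\setminus A^{m-1}$ and extracting a genuine asymptotic expansion from it.

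The leading coefficient $r_{m,0}$ is the value of the same integral for the flat model, namely the half-space $H=\{\op{Re}z_n>0\}$ in the Bargmann space of $\C^n$ with $K_k$ replaced by the Gaussian Bergman kernel. For that model the Toeplitz operator $T_H$ is translation-invariant in the directions tangent to $\partial H$; it decomposes as a direct integral of rank-one operators indexed by the rescaled tangential momentum $q\in\R$, with eigenvalue $\er(q)$, and the degeneracy gives $r_{m,0}=(2\pi)^{-n}\op{vol}(\partial A)\int_{-\infty}^{\infty}\bigl(\er(q)-\er(q)^m\bigr)dq$, which is finite since $\er-\er^m$ vanishes at $\pm\infty$, and vanishes for $m=1$ consistently with $R_{1,k}=0$. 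Hence, using $\sum_m a_m=0$ once more,
$$c_0(f)=-\sum_m a_m r_{m,0}=(2\pi)^{-n}\op{vol}(\partial A)\int_{-\infty}^{\infty}\Bigl(\sum_m a_m\er(q)^m\Bigr)dq=(2\pi)^{-n}\op{vol}(\partial A)\int_{-\infty}^{\infty}f(\er(q))\,dq=(2\pi)^{-n}\op{vol}(\partial A)\,I(f),$$
and with $c_\ell(f)=-\sum_m a_m r_{m,\ell}$ in general this proves \eqref{eq:expansion}.

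Finally \eqref{dev:f_trace} follows by approximation. A continuous $f$ with $f(0)=f(1)=0$ and $|f(t)|,|f(1-t)|=\bigo(t^p)$ satisfies $|f(t)|\le C(t(1-t))^p$ on $[0,1]$ and is therefore $m$-integrable. Given $\eps>0$, truncate $f$ near $0$ and $1$, apply the Weierstrass theorem, and correct the endpoint values to get a polynomial $h$ with $h(0)=h(1)=0$ that is $\eps$-close to $f$ in supremum norm on a fixed $[\de,1-\de]$ and satisfies $\|f-h\|_{L^1(m)}<\eps$. Then $\op{tr}(f(T_{A,k}))-\op{tr}(h(T_{A,k}))=\sum_j(f-h)(\la_j)$, and one splits this sum at $\de$ and $1-\de$: the middle part is $\bigo(\eps\,k^{n-1/2})$ by the bound $\bigl|\op{sp}(T_{A,k})\cap[\de,1-\de]\bigr|=\bigo(k^{n-1/2})$ from Theorem~\ref{theo:intro}, while the two end parts are controlled through $|f(\la_j)|\le C\la_j^p$ and $|h(\la_j)|\le C_h\la_j$ together with a uniform estimate $\bigl|\op{sp}(T_{A,k})\cap[\eta,2\eta]\bigr|=\bigo(k^{n-1/2})$ valid for all $\eta\in(0,1/2]$ (and its mirror near $1$), which the same oscillatory integral analysis supplies and which is the second point requiring care. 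Combining with \eqref{eq:expansion} applied to $h$ and letting $k\to\infty$, then $\eps,\de\to 0$, gives \eqref{dev:f_trace}.
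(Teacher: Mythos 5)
Your treatment of the polynomial case follows essentially the paper's route: reduce to traces of products of Bergman kernels over a region pinned to $\partial A$, rescale by $\sqrt k$, and apply a degenerate Laplace method over a conic domain. The paper works with the basis $h_p(x)=x^p-x^{p+1}$, so that $\op{tr}(T_A^p-T_A^{p+1})$ is an integral over the clean product $A^p\times A^c$ and the conic domain is simply $\{0\leqslant s\leqslant\min_i t_i^1\}$; your choice $R_{m,k}=\op{tr}(T_A)-\op{tr}(T_A^m)$ leads to $A\times(M^{m-1}\setminus A^{m-1})$, a union of conic pieces, which is workable but messier. Your identification of the leading constant via the flat half-space model in Bargmann space is a legitimate alternative to the paper's argument (universality of $C_{p,n}$ plus an explicit computation on the cylinder $\C/i\Z$, where the spectrum is exactly $\{\er(\ell/\sqrt k)\}$), and it yields the same $I(f)$.

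The genuine gap is in the approximation step. Your dyadic argument near the spectral edges needs $\bigl|\op{sp}(T_{A,k})\cap[\eta,2\eta]\bigr|\leqslant Ck^{n-1/2}\eta^{-q}$ with some $q<p$, uniformly in $\eta$, and you assert that ``the same oscillatory integral analysis supplies'' this. It cannot: the oscillatory-integral machinery only produces traces of \emph{integer} powers, i.e.\ bounds of the form $\sum_j\la_j^m(1-\la_j)\leqslant C_mk^{n-1/2}$, and these are consistent with, say, $k^{n-1/2}\ln k$ eigenvalues sitting at $\la\approx(\ln k)^{-2}$ --- a configuration that leaves every polynomial moment intact but makes $\sum_j\la_j^p$ of order $k^{n-1/2}(\ln k)^{1-2p}\gg k^{n-1/2}$ for small $p$, destroying (\ref{dev:f_trace}). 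What is needed is a \emph{fractional} moment bound $\op{tr}\bigl(T_A^q(1-T_A)^q\bigr)\leqslant C_qk^{n-1/2}$ for $q\in(0,1]$, which the paper proves by an entirely different mechanism (Theorem \ref{theo:trace_estim_plus}): writing the trace against normalized coherent states, using concavity of $x\mapsto x^q$ to get $\langle T_A^qf_x,f_x\rangle\leqslant\|f_x\|_A^{2q}$, and integrating the Gaussian decay of $\|f_x\|_A$ off $\partial A$. Without this (or an equivalent) ingredient, your passage from polynomials to H\"older-continuous $f$ does not close; with it, your dyadic splitting becomes essentially the paper's positivity-plus-Weierstrass argument in a different guise.
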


It is likewise that the expansion (\ref{eq:expansion}) holds also for
smooth functions but our proof works only for polynomial functions. 
It is important that we prove (\ref{dev:f_trace}) under the H\"older assumption, because for our application to entanglement entropy, we will consider the function $f(x) = -x \ln x - (1-x) \ln (1-x)$. 

Estimate (\ref{dev:f_trace}) may be viewed as a subprincipal term in the Weyl law for $T_{A,k}$, as stated in the following result. 
\begin{cor} \label{cor:Weyl_law_2_terms}
For any continuous function $g: [0,1] \rightarrow \C$ which is H\"older-continuous at $0$ and $1$, we have 
\begin{xalignat*}{2} \op{tr} ( g ( T_{A,k}))  =  \Bigl( \frac{k}{2\pi} \Bigr)^n \bigl(  & g(0) \mu (A^c) + g(1) \mu (A)  +  k^{-1/2} \op{vol} (\partial A) I(f)  \\ & + \op{o} ( k^{-1/2}) \bigr) 
\end{xalignat*}
where $f(x) = g(x) - g(0)(1-x) - g(1) x$. 
\end{cor}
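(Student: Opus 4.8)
The plan is to reduce the statement to Theorem~\ref{theo:intro_2} by peeling off from $g$ the part that is affine in the eigenvalue. Write
$$ g(x) = g(0)(1-x) + g(1)\,x + f(x), \qquad f(x) = g(x) - g(0)(1-x) - g(1)\,x , $$
so that, by construction, $f(0)=f(1)=0$. Since $T_{A,k}$ is Hermitian with spectrum in $[0,1]$, the functional calculus is unambiguous and the trace is additive; noting that the characteristic function of $M$ equals $1$ and that of $A^c$ equals $1-\chi_A$, so that $T_{A^c,k} = \op{Id} - T_{A,k}$, we get
$$ \op{tr}(g(T_{A,k})) = g(0)\,\op{tr}(\op{Id} - T_{A,k}) + g(1)\,\op{tr}(T_{A,k}) + \op{tr}(f(T_{A,k})) . $$

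First I would dispose of the two affine contributions using only the leading-order Bergman kernel asymptotics. For any integrable $h$ one has $\op{tr}(T_{h,k}) = \int_M h\,\rho_k\,d\mu$, where $\rho_k$ is the Bergman density on the diagonal, and the standard expansion gives $\rho_k = \left(\frac{k}{2\pi}\right)^n\bigl(1 + \bigo(k^{-1})\bigr)$ uniformly on $M$ — this is precisely the precision already recorded in (\ref{eq:dim_vol}), since $\op{dim}\Hilb_k = \op{tr}(\op{Id})$. Applying this with $h$ equal to the characteristic function of $A$, respectively of $A^c$, yields
$$ \op{tr}(T_{A,k}) = \Bigl(\frac{k}{2\pi}\Bigr)^n\mu(A) + \bigo(k^{n-1}), \qquad \op{tr}(\op{Id} - T_{A,k}) = \Bigl(\frac{k}{2\pi}\Bigr)^n\mu(A^c) + \bigo(k^{n-1}) . $$

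Next I would check that $f$ meets the hypotheses of Theorem~\ref{theo:intro_2}. If $g$ is H\"older at $0$ and at $1$ with exponents $\alpha_0,\alpha_1$, then near $0$ we have $f(t) = \bigl(g(t)-g(0)\bigr) + \bigl(g(0)-g(1)\bigr)t$, so $|f(t)| = \bigo(t^p)$, and symmetrically $|f(1-t)| = \bigo(t^p)$, with $p = \min(\alpha_0,\alpha_1,1) > 0$; in particular $I(f)$ converges. Theorem~\ref{theo:intro_2} then gives $\op{tr}(f(T_{A,k})) = k^{n-1/2}(2\pi)^{-n}\op{vol}(\partial A)\,I(f) + \op{o}(k^{n-1/2})$. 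Adding the three contributions, and observing that the errors $\bigo(k^{n-1})$ from the affine part are absorbed into $\op{o}(k^{n-1/2})$, one factors out $\left(\frac{k}{2\pi}\right)^n$ and reads off exactly the asserted two-term law.

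There is no genuine obstacle here; the one point that needs care is that the affine part must be controlled with \emph{relative} error $\bigo(k^{-1})$ — that is, one really uses the subleading precision of the Bergman expansion recorded in (\ref{eq:dim_vol}), not merely the $\op{o}(1)$ relative precision of (\ref{eq:est_large}) — since an error of size $\op{o}(k^n)$ would swamp the $k^{n-1/2}$ term that is the whole content of the corollary.
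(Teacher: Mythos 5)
Your proof is correct and follows essentially the same route as the paper: decompose $g(x) = g(0)(1-x) + g(1)x + f(x)$, estimate the two affine traces with the $\bigo(k^{-1})$-relative-error Bergman trace formula, and apply Theorem~\ref{theo:intro_2} to $f$ after checking its H\"older hypotheses. Your closing remark about needing the subleading precision for the affine part is exactly the right point of care.
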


\begin{rem} 
The previous results hold as well in the Bargmann space $\mathcal{B}_{\hbar}$. Recall that for any $\hbar >0$, 
 $\mathcal{B}_{\hbar}$ is the subspace of $L^2 ( \C^n, e^{-\hbar^{-1} |z|^2}\mu )$ consisting of holomorphic functions. Here $\mu$ is the Lebesque measure of $\C^n$. For any measurable subset $A$ of $\C^n$, the Toeplitz operator $T_{A}(\hbar) $ is the bounded operator of $\mathcal{B}_\hbar$ defined by the same formula $\langle T_{A}(\hbar) s,t \rangle = \langle 1_A s, t \rangle$ as previously. Assume that $A$ is bounded, so that $T_{A}(\hbar)$ is a trace class operator. Then setting $\hbar = k^{-1} $, the second estimate of (\ref{eq:est_large}), Theorem \ref{theo:intro} and  Theorem \ref{theo:intro_2} hold. The proof is essentially the same.

In this particular case, the results are not completely new: in \cite{Oldfield}, Oldfield proved Theorem \ref{theo:intro} and Equation (\ref{dev:f_trace}) of Theorem \ref{theo:intro_2} for smooth $f$.  Previously, De Mari, Feichtinger and Nowak had obtained lower and upper bounds for $\op{tr}  (f ( T_{A} ( \hbar)))$ in \cite{MaFeNo}. Actually, the volume of $\partial A$ does not appear explicitly in the work of Oldfield, but Equations (\ref{eq:comptage}) and (\ref{dev:f_trace}) with smooth $f$ can be deduced from it. However, (\ref{dev:f_trace}) with a H\"older continuous $f$ does not follows from \cite{Oldfield}, neither the expansion (\ref{eq:expansion}).  The proof of \cite{Oldfield} is based on Weyl quantization, an approach which is not available in our geometric setting.  Instead we will use Bergman kernel techniques.
\qed \end{rem}

\begin{rem} Similar results to Theorem \ref{theo:intro} are known for
  truncated Toeplitz matrices. Let $V_k$ be the subspace of $L^2 (S^1)$
  spanned by the Fourier modes $e_0, \ldots, e_k$ where $e_\ell
  (x) = e^{ i \ell x}$. Let $A \subset S^1$ be a finite union of
  intervals and consider the endomorphism $M_k(A)$ of $V_k$ given by
  $\langle M_k (A) s ,t \rangle = \langle 1_A s, t \rangle$ for any $s,t
  \in V_k$. Then by \cite{Ba}, \cite{BaWi}, \cite{LaWi}, for any $0<a<b<1$, we have
\begin{gather*} \Bigl| \op{sp} ( M_k (A)) \cap [0,a] \Bigr| = k \frac{ \ell (S^1 \setminus A) }{ 2
  \pi} + \bigo ( \ln k ) , \\ \Bigl| \op{sp} ( M_k (A)) \cap [b,1] \Bigr| = k \frac{ \ell ( A) }{ 2
  \pi} + \bigo ( \ln k ) 
\end{gather*} 
where we denote by $\ell (B)$ the length of a subset $B$ of $S^1$. Furthermore 
$$ \Bigl| \op{sp} (M_k (A) \cap [a,b] \Bigr| \sim \ln k \frac{ | \partial A| }{ 2
  \pi^2} ( m^{-1} (b) - m^{-1} (a) ) 
$$ where $|\partial A|$ is the number of boundary points of $A$ and $m(x) =
\frac{1}{2} (  1+ \tanh (x/2)) $. To compare with (\ref{eq:comptage}), the
$k^{-1/2}$ is replaced by $\ln k$  and the function $\er$ by
the function $m$. A generalisation of this result is known for
Wiener-Hopf operators, with the same logarithm and function $m$. This is the
so-called Widom conjecture, proved in dimension 1 by Widom itself
\cite{Widom} and in
higher dimension by Sobolev \cite{Sob}.
\qed \end{rem}

Let us sketch the main step of our proof. We first show the asymptotic expansion (\ref{eq:expansion}) for polynomial $f$. To do this we use that for any positive integer $p$
\begin{xalignat}{2} \label{eq:12}
\begin{split}  
 \op{tr} (T_{A,k}^p - T_{A,k}^{p+1} ) =  & \int_{ A^p \times A^c }  \Pi_k(x_1,x_2) \Pi_k( x_2, x_3) \ldots \Pi_k ( x_p, x_{p+1}) \\ & \Pi_k (x_{p+1}, x_1) \; \mu ( x_1) \ldots \mu ( x_{p+1}) 
\end{split}
\end{xalignat}
where $\Pi_k$ is the Bergman kernel of $L^k$. The asymptotic behavior of the Bergman kernel is well-known \cite{Ze}, \cite{MaMa}, \cite{oim_op}. From this we estimate the integral (\ref{eq:12}) by adapting the Laplace method. Typically, we have to handle integrals of the form 
\begin{gather} \label{eq:int_exemple}  
I(k) = \int_D e^{ - k ( t^{2}/2 + t^3 f(t,s) ) }  g(t,s) dt \; ds 
\end{gather}
where $f$ and $g$ are smooth functions, $g$ being compactly supported.
When $D =\R^2$, this is easily done by viewing $s$ as a parameter and applying the standard Laplace method. We get in this case $I(k) = k^{-1/2} \sum_{\ell \in \N} a_{\ell} k^{-\ell} $. We actually have a different kind of domain $D$ which is conic and contained in $\{ |s | \leqslant C|t| \}$ for some $C>0$. In this case,  $$I(k) = k^{ -1}  \sum_{\ell \in \N} b_{\ell} k^{-\ell/2} ,$$ 
so there are two important differences: even if the phase is degenerate only in the $t$ direction, the variable $s$ has to be considered as non-degenerate in the sense that $I(k)$ is of order $k^{-1}$. Furthermore, the expansion of $I(k)$ is in power of $k^{-1/2}$. We will develop a  Laplace method for generalisations of the integral (\ref{eq:int_exemple}) in any dimension and deduce the asymptotic expansion of  (\ref{eq:12}).

Once this is done, we can show the estimate (\ref{dev:f_trace}) for smooth $f$ and Theorem \ref{theo:intro}. To prove (\ref{dev:f_trace}) for H\"older-continuous $f$, we need the additional estimate: 
$$\op{tr}\bigl(  T^p_{A,k}(1-T_{A,k})^p \bigr) \leqslant C_p k^{n-1/2}$$ 
for any $p>0$, that we deduce from a Berezin-Lieb inequality.   

The proof in the Bargmann space requires a little extra work because $\C^n$ is not bounded, but it is also simpler because the Bergman kernel is given by an explicit formula.  

\subsection{Application to IQH states} \label{sec:appl-iqh-stat}

\subsubsection*{Free fermions}

Let $\mathcal{E}_k$ be the space of square integrable sections of $L^k$. It is a Hilbert space with the scalar product defined as previously by integrating the pointwise scalar product of sections against the Riemannian volume element. Let $d_k$ be the dimension of $\mathcal{H}_k$ and $(s_i, \;i =1, \ldots, d_k)$ be an orthonormal basis of $\mathcal{H}_k$. Define 
\begin{gather} \label{eq:fermionic_state} 
 \Psi := s_1 \wedge \ldots \wedge s_{d_k} \in {\bigWedge}^{d_k} \mathcal{E}_k .
\end{gather}
As we will explain at the end of this Section, $\mathcal{H}_k$ is the first Landau level of a magnetic Laplacian. So $\Psi$ represents a Fermionic state in which the first Landau level is fully occupied. $\Psi$ has norm $1$ for the natural scalar product of  $\bigwedge^{d_k} \mathcal{E}_k$. This space is not complete but it does not matter for what we will do. 

Since we work with square integrable sections instead of holomorphic sections, for any measurable set $A$ of $M$, we have the decomposition $\mathcal{E}_k = \mathcal{E}_k(A) \oplus \mathcal{E}_k(A^c)$, where $\mathcal{E}_k(A)$, $\mathcal{E}_k(A^c)$ are the image and kernel of the multiplication by the characteristic function $1_A$. Consequently
\begin{gather} \label{eq:dec_A_comp}
 {\bigWedge}^{d_k}  \mathcal{E}_k = \bigoplus_{\ell = 0 }^{d_k} \bigl( {\bigWedge} ^{\ell} \mathcal{E}_k (A) \bigr) \otimes \bigl( {\bigWedge}^{d_k - \ell} \mathcal{E}_k (A^c) \bigr) 
\end{gather}
Let $N_A$ be the endomorphism of $\bigwedge ^{d_k} \mathcal{E}_k$ acting by multiplication by $\ell$ in the $\ell$-th summand of (\ref{eq:dec_A_comp}). In statistical quantum mechanic, $N_A$ is the observable for the the number of particles in $A$. From the state $\Psi$ and the observable $N_A$, we obtain a probability distribution $N_A^\Psi$ defined by
\begin{gather} \label{def:N_A_Slater} 
 \Pr \, ( N_A^\Psi = \ell  ) = \| \Psi_{\ell} \|^2 , \qquad \ell =0 , \ldots , d_k 
\end{gather}
where $\Psi = \sum \Psi_\ell$ is the decomposition of $\Psi$ in the sum (\ref{eq:dec_A_comp}).  

In the mathematical litterature, the Slater determinants and their
associated distributions appeared as determinantal processes \cite{Be2}, \cite{AHM}
defined as follows. Identify first $\wedge^{d_{k}} \mathcal{H}_k$ with the
subspace of antisymmetric vectors of $\mathcal{H}^{\otimes d_k}$. Realize
$\mathcal{H}^{\otimes d_k}$ as the space of holomorphic sections of
$(L^k)^{\boxtimes d_k} \rightarrow M^{d_k}$. Then each normalized vector
$\Phi$ of $H^{0}(M^{d_k}, (L^{k})^{\boxtimes d_k} )$ defines a probability
measure $\mu_{\Phi}$ of $M^{d_{k}}$ given by $\mu_{\Phi}=   |\Phi|^2
\mu^{\boxtimes d_k}$. The measure $\mu_{\Psi}$ associated to $\Psi$ is a
determinantal process whose number of points in a given set $A$ has the same
distribution as $N_A^\Psi$.

The probability distribution $N_A^{\Psi}$ can be completely described in
terms of spectral invariants of the Toeplitz operator $T_A = T_{A,k}$. Let
$\la_{1} \leqslant \la_2 \leqslant \ldots \leqslant \la_{d_k}$ be the
eigenvalues of $T_A$. Recall that for $p\in [0,1]$, the Bernoulli random
variable $B(p)$ takes the value $1$ with probability $p$ and the value $0$
with probability $1-p$. Then $N_A ^{\Psi}$ has the same distribution has
the sum of independent random variables $B( \la_{i})$, $i =1 , \ldots ,
d_k$.  It is actually a general property of determinantal processes that
their number of points in a given set is a sum of Benoulli random variables, \cite{HKPV}.
This important fact appeared also implicitely in the physics litterature \cite{Klich},
\cite{PhysRevLett.100.086602}. We provide a short proof for our Fermionic states in section \ref{sec:quant-prob-ferm}.

An easy consequence is the computation of the cumulants of $N_A^\Psi$ as
spectral invariants of $T_A$. In particular, 
the expectation and variance of $ N_A^{\Psi}$ are given by 
$$ \mathbb{E} ( N_A^\Psi ) = \op{tr} (T_A), \qquad  \mathbb{E}\bigl( (N_A^\Psi)^2\bigr) - \bigl( \mathbb{E}(N_A^\Psi) \bigr)^2 = \op{tr} (T_A - T_A^2) .$$
More generally, the $\ell$-th cumulant is given by $\ka_{\ell} ( N_A^{\Psi} ) = \op{tr}  P_\ell ( T_{A})$
where the $P_{\ell}$ are defined recursively by $P_1 (X) = X$ and $P_{\ell+1} (X) =  X
(1-X)P'_{\ell} (X)$.
An application of Theorem \ref{theo:intro_2} gives the following asymptotics
expansion. 

\begin{theo} \label{theo:cumulant_estimate}
Assume that $A$ has a smooth boundary. Then we have the complete asymptotic expansions:
\begin{xalignat*}{2} 
&  \mathbb{E} ( N_A^\Psi )  = k^n  \Biggl( \frac{\mu(A)}{ (2 \pi)^n }  +  \sum_{m=1}^{N} \al_{m}(A) k^{-m} +  \bigo ( k^{-N-1} ) \Biggr),\\ 
&  \ka_{2\ell} ( N_A^\Psi )  = k^{n-1/2} \Biggl(\frac{\op{vol} (\partial A)}{(2\pi)^n} I(P_{2\ell}) +  \sum_{m=1}^{N} \be_{\ell,m}(A) k^{-m} +  \bigo ( k^{-N-1} ) \Biggr)    \\
&  \ka_{2\ell+1} ( N_A^\Psi )  =  k^{n-1}   \Biggl(\sum_{m=1}^{N} \eta_{\ell,m}(A) k^{-m} +  \bigo ( k^{-N-1} ) \Biggr)  .
\end{xalignat*}
for any $N$ and  $\ell \in \N^*$. 
\end{theo}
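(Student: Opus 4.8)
The plan is to derive Theorem \ref{theo:cumulant_estimate} as a direct consequence of Theorem \ref{theo:intro_2}, combined with the remark from Section \ref{sec:appl-iqh-stat} that the cumulants of $N_A^{\Psi}$ are given by $\ka_{\ell}(N_A^\Psi) = \op{tr}\, P_\ell(T_{A,k})$, where $P_1(X) = X$ and $P_{\ell+1}(X) = X(1-X)P'_\ell(X)$. So the whole point is to understand the polynomials $P_\ell$ well enough to invoke the right part of \eqref{eq:expansion} and compute the leading coefficient. First I would establish by induction the structural properties of $P_\ell$: namely that $P_\ell$ is a polynomial with $P_\ell(0) = P_\ell(1) = 0$ for $\ell \geqslant 1$ (immediate from the recursion, since the factor $X(1-X)$ kills both endpoints for $\ell \geqslant 2$, and $P_1(0)=0$, $P_1(1)=1$ — so actually one must treat $\ell = 1$, i.e. the expectation, separately, which is why the first line of the theorem is of order $k^n$ and not covered by \eqref{eq:expansion}).

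For the expectation $\mathbb{E}(N_A^\Psi) = \op{tr}(T_{A,k})$: I would apply Corollary \ref{cor:Weyl_law_2_terms} with $g(x) = x$, giving $g(0)=0$, $g(1)=1$, so $\op{tr}(T_{A,k}) = (k/2\pi)^n(\mu(A) + \op{o}(1))$ at leading order. To get the full expansion in powers of $k^{-1}$ I would instead appeal directly to the Bergman kernel off-diagonal asymptotics that underlie the whole paper: $\op{tr}(T_{A,k}) = \int_A \Pi_k(x,x)\,\mu(x)$, and $\Pi_k(x,x)$ has a complete asymptotic expansion $(k/2\pi)^n(1 + \sum_{m\geqslant 1} b_m(x) k^{-m})$ uniformly on $M$, so integrating over the fixed domain $A$ yields the stated expansion with $\al_m(A) = (2\pi)^{-n}\int_A b_m(x)\,\mu(x)$. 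This step uses no boundary smoothness, only that $A$ is measurable with finite measure.

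For the higher cumulants $\ell \geqslant 2$: here $P_\ell$ is polynomial with $P_\ell(0)=P_\ell(1)=0$, so \eqref{eq:expansion} of Theorem \ref{theo:intro_2} applies verbatim and gives $\op{tr}(P_\ell(T_{A,k})) = k^{n-1/2}\sum_{j=0}^N c_j(P_\ell) k^{-j/2} + \bigo(k^{n-N-1})$ with $c_0(P_\ell) = (2\pi)^{-n}\op{vol}(\partial A)\,I(P_\ell)$. The parity statement — that odd cumulants are of order $k^{n-1}$, i.e. that $c_0(P_{2\ell+1}) = 0$ and in fact the $k^{-1/2}$ coefficient vanishes too — I would obtain from the symmetry of the measure $m$: since $\delta(t) = \delta(1-t)$, one has $I(f) = 0$ whenever $f(t) = -f(1-t)$. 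So I need to check that $P_{2\ell+1}$ is antisymmetric under $X \mapsto 1-X$ while $P_{2\ell}$ is symmetric. This follows by induction: if $Q(X) := P_\ell(1-X)$ then $Q'(X) = -P'_\ell(1-X)$, so $P_{\ell+1}(1-X) = (1-X)X P'_\ell(1-X) = -X(1-X)(-P'_\ell(1-X)) $... tracking signs, $P_{\ell+1}(1-X) = -(-1)^{\ell}$ ... more carefully: assuming $P_\ell(1-X) = (-1)^{\ell-1}P_\ell(X)$, one gets $P_{\ell+1}(1-X) = (1-X)X \cdot \frac{d}{dX}[P_\ell(1-X)] \cdot(-1)\cdot(-1)$, wait — $P_{\ell+1}(Y) = Y(1-Y)P'_\ell(Y)$ evaluated at $Y = 1-X$ gives $(1-X)X P'_\ell(1-X)$, and $P'_\ell(1-X) = -\frac{d}{dX}[P_\ell(1-X)] = -(-1)^{\ell-1}P'_\ell(X)$, so $P_{\ell+1}(1-X) = -(-1)^{\ell-1}X(1-X)P'_\ell(X) = (-1)^{\ell}P_{\ell+1}(X)$, completing the induction with base $P_1(1-X) = 1-X = -(P_1(X) - 1)$; one handles the base $\ell=2$ directly, $P_2(X) = X(1-X)$ which is symmetric, matching $(-1)^{2-1}$... so the precise statement is $P_\ell(1-X) = (-1)^{\ell-1}(P_\ell(X) + \text{const})$ with the constant relevant only at $\ell = 1$; for $\ell \geqslant 2$ it is clean. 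Hence for odd $\ell \geqslant 3$, $P_\ell$ is antisymmetric about $1/2$ so $I(P_\ell) = 0$, and since all the coefficients $c_j$ in \eqref{eq:expansion} come from the same Laplace-type expansion which respects this symmetry (the half-integer powers pair up), the whole $k^{n-1/2}$-scale contribution vanishes and one is left with order $k^{n-1}$; the coefficients $\eta_{\ell,m}(A)$, $\be_{\ell,m}(A)$ are then just renamings of the remaining $c_j(P_\ell)$.

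The main obstacle, modest as it is, lies in making the parity/vanishing argument airtight at the level of the \emph{full} expansion \eqref{eq:expansion} rather than merely its leading term: I need that the involution $t \mapsto 1-t$ acts on the entire asymptotic expansion of $\op{tr}(P_\ell(T_{A,k}))$, which requires knowing that the construction of the coefficients $c_j(f)$ in Theorem \ref{theo:intro_2} is natural enough to intertwine with replacing $A$ by $A^c$ and $f(t)$ by $f(1-t)$ — using that $T_{A^c,k} = \op{Id} - T_{A,k}$ and that $\partial A = \partial A^c$ with the same Riemannian volume. Granting that symmetry (which should be transparent from the proof of \eqref{eq:expansion}), the odd cumulants have their $k^{n-1/2}$ and even $k^{n-1/2-j/2}$ for $j$ even... the cleanest formulation is: $c_j(f) = (-1)^{?}c_j(\tilde f)$ where $\tilde f(t) = f(1-t)$, forcing $c_j(P_{2\ell+1}) = 0$ for all even $j$, which is exactly the half-integer-power positions, leaving only odd $j$, i.e. the genuine $k^{n-1}$, $k^{n-2}$, $\ldots$ scale. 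Everything else is bookkeeping: collecting terms, renaming constants, and noting that the error terms $\bigo(k^{n-N-1})$ are inherited directly.
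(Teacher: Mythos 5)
Your proposal is correct and follows essentially the same route as the paper: the cumulant formula $\ka_\ell(N_A^\Psi)=\op{tr} P_\ell(T_{A,k})$, the parity of the $P_\ell$ under $X\mapsto 1-X$, the symmetry $c_j(f(1-\cdot))=(-1)^j c_j(f)$ of the coefficients in (\ref{eq:expansion}) (which the paper states as the second assertion of Theorem \ref{theo:pol_convergence} and proves via $g(T_A)=f(T_{A^c})$, $T_{A^c}=\op{id}-T_A$ and the reflection $D\mapsto -D$ of the stationary-phase domain), and the diagonal Bergman kernel expansion for the expectation. The only blemish is the sign bookkeeping in your parity induction: the clean statement is $P_\ell(1-X)=(-1)^\ell P_\ell(X)$ for $\ell\geqslant 2$ (base $P_2(X)=X(1-X)$ symmetric, the recursion flipping the sign at each step), which is the relation your final conclusion in fact uses.
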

We will also provide estimates for the generating function of $N_A^\Psi$,
cf. Proposition \ref{prop:cumulant_generating_function}.  
By Theorem \ref{theo:cumulant_estimate}, the variance of $N_A^\Psi$ is
equal to 
\begin{gather} \label{eq:var_estim}
 \ka_2 (N_A^\Psi) =  k^{n-1/2} \frac{\op{vol}
(\partial A) }{ (2\pi)^{n}}I(P_{2}) + \bigo( k^{n-3/2} ).
\end{gather}
A consequence is the
following tail estimate and convergence to the normal distribution. Introduce the fluctuation
$\widetilde{N}_A^\Psi := N_A^\Psi - \mathbb{E} ( N_A ^{\Psi})$ and the
critical exponent $\al_c = n/2 - 1/4$.

\begin{cor} \label{cor:concentration+CLT} $ $ 
\begin{enumerate} 
\item there exists $k_0$ and $C>0$ such that for any $k \geqslant k_0$, for
  any $\beta>0$, we have $  \Pr \, \bigl(
  |\widetilde{N}_A^\psi | \geqslant k^{\al_c + \be}  \bigr) \leqslant e^{ - k^{\min (\al_c+ \be, 2 \be )} /C}. $
\item  $k^{-\al_c} \widetilde{N}_A^\psi$ converges in distribution to a centered normal random variable with variance $ (2\pi)^{-n} \op{vol} (\partial A)  I(P_{2})$.
\end{enumerate}
\end{cor}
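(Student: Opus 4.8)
The plan is to deduce Corollary \ref{cor:concentration+CLT} from the cumulant asymptotics of Theorem \ref{theo:cumulant_estimate}, together with the variance estimate (\ref{eq:var_estim}), by purely probabilistic arguments. The key structural input is that $N_A^\Psi$ is a sum of independent Bernoulli variables $B(\la_i)$, so its moment generating function factorizes as $\mathbb{E}(e^{t N_A^\Psi}) = \prod_i (1 - \la_i + \la_i e^t)$, and the logarithm of the centered version is $\log \mathbb{E}(e^{t \widetilde{N}_A^\Psi}) = \sum_{\ell \geqslant 1} \ka_\ell(N_A^\Psi) t^\ell / \ell! - t\,\mathbb{E}(N_A^\Psi)$, i.e. the cumulant generating function has $\ka_\ell$ as coefficients. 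This is exactly where Theorem \ref{theo:cumulant_estimate} (or the promised Proposition \ref{prop:cumulant_generating_function} on the generating function) enters.

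For part (2), the CLT, I would rescale: set $\widehat N := k^{-\al_c}\widetilde N_A^\Psi$ with $\al_c = n/2 - 1/4$, so that by (\ref{eq:var_estim}) its variance $\ka_2(\widehat N) = k^{-2\al_c}\ka_2(N_A^\Psi) \to \si^2 := (2\pi)^{-n}\op{vol}(\partial A)\,I(P_2)$. The higher cumulants scale as $\ka_\ell(\widehat N) = k^{-\ell \al_c}\ka_\ell(N_A^\Psi)$; using Theorem \ref{theo:cumulant_estimate}, $\ka_\ell(N_A^\Psi) = \bigo(k^{n-1/2})$ for $\ell$ even $\geqslant 2$ and $\bigo(k^{n-1})$ for $\ell$ odd $\geqslant 3$, while $k^{-\ell\al_c} = k^{-\ell n/2 + \ell/4}$. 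For $\ell \geqslant 3$ the exponent $n - 1/2 - \ell\al_c = n - 1/2 - \ell n/2 + \ell/4 = n(1 - \ell/2) - 1/2 + \ell/4$ is negative (it equals $-1/4$ at $\ell=3$, $n=1$ and decreases), so $\ka_\ell(\widehat N) \to 0$ for all $\ell \geqslant 3$; since $\ka_1(\widehat N) = 0$ by centering, all cumulants converge to those of $\mathcal{N}(0,\si^2)$. The normal distribution being determined by its moments (equivalently, its cumulants), convergence of all cumulants implies convergence in distribution. One must justify the passage from cumulants to distribution: either invoke the method of moments after checking the moments converge (which follows from the cumulant-moment polynomial relations), or control the cumulant generating function on a fixed interval around $0$ to get convergence of Laplace transforms; the factorized form $\prod_i(1-\la_i+\la_i e^t)$ with $\la_i \in [0,1]$ makes the series estimates clean.

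For part (1), the tail bound, the natural tool is an exponential Markov / Chernoff inequality adapted to the fact that $\widetilde N_A^\Psi$ is a centered sum of independent bounded variables. For $t > 0$, $\Pr(\widetilde N_A^\Psi \geqslant \lambda) \leqslant e^{-t\lambda}\,\mathbb{E}(e^{t\widetilde N_A^\Psi})$, and $\log \mathbb{E}(e^{t\widetilde N_A^\Psi}) = \sum_i \log(1-\la_i+\la_i e^t) - t\la_i \leqslant \frac{e^t - 1 - t}{1}\sum_i \la_i \cdot(\text{something})$; more precisely, using $\log(1-p+pe^t) - tp \leqslant p(e^t - 1 - t) \leqslant \frac{t^2}{2}e^{|t|}p$ termwise, one gets $\log\mathbb{E}(e^{t\widetilde N}) \leqslant \frac{t^2}{2}e^{|t|}\op{tr}(T_A)$, and $\op{tr}(T_A) = \ka_1(N_A^\Psi) = \bigo(k^n)$. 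Wait — that bound is too crude; the exponent $\al_c + \be$ in the statement with the curious $\min(\al_c+\be, 2\be)$ suggests the variance $\bigo(k^{n-1/2}) = \bigo(k^{2\al_c})$ is the relevant scale for moderate deviations while a Bernstein-type second term kicks in for large deviations. I would instead use the Bernstein inequality for sums of independent $[0,1]$-valued variables: $\Pr(|\widetilde N| \geqslant \lambda) \leqslant 2\exp(-\lambda^2/(2(V + \lambda/3)))$ with $V = \ka_2(N_A^\Psi) \leqslant C' k^{2\al_c}$. Taking $\lambda = k^{\al_c + \be}$, the exponent is $\sim \lambda^2/(2\max(V,\lambda/3)) \sim \min(\lambda^2/V, \lambda) = \min(k^{2\al_c + 2\be}/k^{2\al_c}, k^{\al_c+\be}) = \min(k^{2\be}, k^{\al_c+\be})$, which is exactly $k^{\min(\al_c+\be, 2\be)}$. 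So part (1) follows from Bernstein's inequality plus the variance bound (\ref{eq:var_estim}).

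The main obstacle is minor but worth care: Bernstein's inequality as usually stated applies to sums of independent variables, which is fine here, but one should make sure the constant $C$ and the threshold $k_0$ can be chosen uniformly in $\be > 0$ — this works because the Bernstein bound is uniform once $V \leqslant C' k^{2\al_c}$ holds for $k \geqslant k_0$, which is guaranteed by (\ref{eq:var_estim}) for large $k$. For the CLT, the only real subtlety is legitimizing convergence in distribution from convergence of cumulants: I would note that the Bernstein estimate already gives uniform sub-Gaussian-type tails for $k^{-\al_c}\widetilde N_A^\Psi$ on compact ranges, hence uniform integrability of all moments, so convergence of moments (equivalently cumulants) to those of a Gaussian upgrades to convergence in distribution by the method of moments. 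Everything else is bookkeeping with the explicit exponents from Theorem \ref{theo:cumulant_estimate}.
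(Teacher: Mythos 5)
Your proof is correct. For part (1) you arrive, after a false start, at exactly the paper's argument: the paper invokes the Chernoff bound in the form of \cite[Theorem 2.1.3]{Tao}, which for a centered sum of independent variables bounded by $1$ with total variance $V$ is precisely the Bernstein-type bound $\exp(-c\min(\lambda^2/V,\lambda))$ you use, and combining it with $V=\ka_2(N_A^\Psi)\leqslant C'k^{2\al_c}$ from (\ref{eq:var_estim}) at $\lambda=k^{\al_c+\be}$ gives $\min(k^{2\be},k^{\al_c+\be})$ as you compute. For part (2), however, you take a genuinely different route: the paper applies the Lindeberg--Feller central limit theorem to the triangular array $k^{-\al_c}\widetilde B(\la_i)$, for which the Lindeberg condition is automatic because each summand is bounded by $k^{-\al_c}\rightarrow 0$, so only the variance asymptotics (\ref{eq:var_estim}) are needed. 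You instead prove convergence of all cumulants of the rescaled variable (the exponent bookkeeping $n-1/2-\ell\al_c<0$ for $\ell\geqslant 3$ is right, and is even generous for odd cumulants, which are $\bigo(k^{n-2})$ by Theorem \ref{theo:cumulant_estimate}) and then pass to convergence in distribution by the method of moments, using that moments are universal polynomials in cumulants and that the Gaussian is moment-determinate. Your route costs more --- it consumes the full cumulant expansion of Theorem \ref{theo:cumulant_estimate} rather than just the variance, and requires the moment-determinacy step --- but it buys convergence of all moments as a byproduct and matches the spirit of the discussion of the model variable $X(\al)$ in Section \ref{sec:appl-iqh-stat}. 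Both arguments are complete and correct.
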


The variance estimate (\ref{eq:var_estim})  can be compared with earlier results. Consider the
linear statistics of $\mu_{\Psi}$, that is the random variables of
$(M^{d_k}, \mu_{\Psi})$ of the form $\mathcal{N}[u](x_1, \ldots, x_{d_k}) =
\sum_{i=1}^{d_{k}} u(x_i)$ with $u:M \rightarrow \R$ a general
function. When $u$ is the characteristic function $1_A$, $\mathcal{N}[u]$
has the same distribution as $N_A^\Psi$. By  \cite[Theorem 1.5]{Be1}, when
$u$ is Lipschitz, the variance of $\mathcal{N}[u]$ is of order
$k^{n-1}$. So the regularity of $u$ brings a smaller variance.


The relevance of the estimates of the higher cumulants is not
obvious. Still, we can define a model probability distribution whose cumulants are the leading order terms appearing in Theorem \ref{theo:cumulant_estimate}. Let $\widetilde{B}(p) = B(p) - p$ be the fluctuation of the Bernoulli distribution. For any $n \in \N$ and $\al >0$, let $X_n (\al)$ be the sum of $2n+1$ independent random variables $\widetilde{B}( \op{er} ( \al m ) )$, $m =-n, -n +1, \ldots, n$. Then $X_n (\al)$ converges as $n \rightarrow \infty$ to a random variable $X(\al)$ with vanishing odd cumulants and even ones given by  
\begin{gather*} \kappa_{2 \ell} (X(\al)) = \sum_{m \in \Z} P_{2 \ell} ( \op{er} ( \al k) ) = \al^{-1} I ( P_{2 \ell}) + \bigo ( \al^{\infty}) . 
\end{gather*}
Choosing $\al^{-1}_k  =  k^{n-1/2} \op{vol} (\partial A)/(2\pi)^n$, we recover the leading order term of Theorem \ref{theo:cumulant_estimate}, that is for any $\ell \geqslant 2$, $$\kappa_{\ell}  ( N_A^{\Psi} ) = \kappa_{\ell} ( X(\al_k)) + \bigo (k^{n-1}). $$ An interpretation for this result is that the spectrum of $T_A$ can be approximated by the numbers $\op{er} ( \al_k m )$, $m \in \Z$ for what concerns the cumulant computations.

\subsubsection*{Entanglement entropy} 

Before we present our last application let us introduce the notion of entanglement, cf \cite{BLPY}, \cite{HaRa} for detailed presentations. Let $H$ be a complex Hilbert space that we assume finite dimensional to simplify the exposition. Denote by $\mathcal{L}(H)$ the space of self-adjoint endomorphisms of $H$ and by $\mathcal{S}(H)$ the subset of all positive trace 1 endomorphisms. The elements of $\mathcal{L}(H)$ are the {\em observables}, the elements of $\mathcal{S}(H)$ are the {\em mixed states}. To any observable $A$ and mixed state $\rho$ of $H$ is associated a probability distribution whose moments are $\op{tr} ( \rho ^n A)$, $n=1, 2, \ldots $.
  $\mathcal{S}(H)$ is a convex set whose extreme points are the rank one projectors, which are called the {\em pure state}. A nonzero vector $\phi$ of $H$ defines a pure state $P(\phi)$ which is the orthogonal projection onto $\C \phi$. Furthermore any mixed state $\rho$ may be considered as a classical statistical mixture of pure states. The von Neumann {\em entropy} of a mixed state $\rho \in \mathcal{S}(H)$ is defined by 
$$ S( \rho)  = - \op{tr} ( \rho \ln \rho )  \in [ 0,\infty)  .$$
It measures the amount of ``mixedness'' of a state. Typically, $S( \rho) =0$  if and only if $\rho$ is pure, whereas $S(\rho)$ is maximal for $\rho = ( \op{dim} H)^{-1} \op{id}_H$.

Consider now two finite dimensional Hilbert spaces $H_1$, $H_2$ and let $H = H_1 \otimes H_2$. Given a mixed state $\rho \in \mathcal{S}(H)$, the partial trace $\rho_1 := \op{tr}_{H_2} ( \rho) $ is a mixed state of $H_1$, called the {\em reduced} state. It is characterized amongst the mixed states of $H_1$ by  
$$ \op{tr} ( \rho_1 A_1 ) = \op{tr} \bigl( \rho ( A_1 \otimes \op{id}_{H_2} ) \bigr), \qquad \forall A_1 \in \mathcal{L} (H_1). $$
So $\rho_1$  represents the state $\rho$ in the subsystem $H_1$ since it allows to compute the expectation values of all the observables of $H_1$. In contrast to classical systems, it is possible that $\rho$ is pure and $\rho_1$ is not, so $\rho$ is not a product of pure states. For such {\em entangled} pure state, we define  the {\em entanglement entropy} as the von Neumann entropy of $\rho_1$. This quantity is a measure of the degree of entanglement of $\rho$. 

Back to our application, the Fermionic state $\Psi$ defined in (\ref{eq:fermionic_state}) may be viewed as a state of a bipartite system $\bigwedge \mathcal{E}_k = \bigl( \bigwedge \mathcal{E}_k(A) \bigr) \otimes \bigl( \bigwedge \mathcal{E}_k(A^c) \bigr) $. The reduced state $\rho_A $ of $\Psi$ is a finite rank endomorphism and its von Neumann entropy is given in terms of the Toeplitz operator $T_A$ by 
\begin{gather} \label{eq:entropy}
 S ( \rho_A) = \op{tr} \bigl( f (T_A)\bigr) \qquad \text{ with } \quad f (x) = -x \ln (x) - (1-x) \ln (1-x)
\end{gather}
This relation is generally deduced from Wick's Theorem \cite{Pe}. For the convenience of the reader, we will present an elementary alternative proof in Section \ref{sec:quant-prob-ferm}. Now we can deduce from Theorem \ref{theo:intro_2} the asymptotic behavior of $S( \rho_A)$. 

\begin{theo}
If $A$ has a smooth boundary, we have
$$ S( \rho_A) = k^{n-1/2} \frac{\op{vol}( \partial A) }{(2\pi)^n} I(f) ( 1 + \op{o} (1))$$ in the limit $k \rightarrow \infty$, where $f (x) = -x \ln x - (1-x) \ln (1-x)$. 
\end{theo}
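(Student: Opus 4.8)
The plan is to deduce this theorem directly from Theorem \ref{theo:intro_2}, specifically from estimate (\ref{dev:f_trace}), combined with the identity (\ref{eq:entropy}) expressing the entanglement entropy as $S(\rho_A) = \op{tr}(f(T_A))$ with $f(x) = -x\ln x - (1-x)\ln(1-x)$. The only genuine work is to verify that this particular $f$ satisfies the hypotheses of Theorem \ref{theo:intro_2}, and that $I(f)$ is finite. First I would check that $f$ extends to a continuous function on $[0,1]$ with $f(0) = f(1) = 0$: this is immediate since $t \ln t \to 0$ as $t \to 0^+$. Next I would observe that near $t = 0$ we have $|f(t)| = |{-t\ln t} - (1-t)\ln(1-t)| \leqslant C t \ln(1/t) + C t \leqslant C_p t^p$ for any $p < 1$, and symmetrically $|f(1-t)| = \bigo(t^p)$ for any $p < 1$ by the symmetry $f(t) = f(1-t)$. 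Hence $f$ satisfies the H\"older-type decay hypothesis $|f(t)| = \bigo(t^p)$, $|f(1-t)| = \bigo(t^p)$ with, say, $p = 1/2$.

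Having checked the hypotheses, Theorem \ref{theo:intro_2} applies verbatim and gives
\begin{gather*}
\op{tr}(f(T_{A,k})) = k^{n-1/2} \frac{\op{vol}(\partial A)}{(2\pi)^n} I(f) (1 + \op{o}(1)),
\end{gather*}
which is exactly the claimed expansion for $S(\rho_A)$. It remains only to note that $I(f)$ is a well-defined finite (and, in fact, strictly positive) constant: using the description $I(f) = \int_{-\infty}^{\infty} f(\er(x))\, dx$ from the paragraph preceding Theorem \ref{theo:intro_2}, convergence at $\pm\infty$ follows because $\er(x) \to 0$ (resp. $1$) faster than any polynomial as $x \to -\infty$ (resp. $+\infty$), so $f(\er(x))$ decays like a Gaussian tail, while positivity follows from $f > 0$ on $(0,1)$. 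Alternatively one invokes the remark in the excerpt that $I(f)$ converges for any continuous $f$ on $[0,1]$ that is H\"older at $0$ and $1$ with $f(0) = f(1) = 0$ — which we have just verified.

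There is essentially no obstacle here: the theorem is a corollary, and the difficulty has been entirely front-loaded into Theorem \ref{theo:intro_2}, whose proof under the H\"older (rather than merely polynomial) decay hypothesis is precisely what was set up with the entanglement-entropy application in mind, as remarked after the statement of Theorem \ref{theo:intro_2}. If one wanted to be scrupulous, the single point deserving a line of justification is the reduction (\ref{eq:entropy}) of $S(\rho_A)$ to $\op{tr}(f(T_A))$, but that identity is stated in the excerpt (to be proved in Section \ref{sec:quant-prob-ferm}) and may be assumed. So the proof is: cite (\ref{eq:entropy}), verify $f$ is continuous on $[0,1]$ with $f(0)=f(1)=0$ and $|f(t)|, |f(1-t)| = \bigo(t^{1/2})$, and apply (\ref{dev:f_trace}).
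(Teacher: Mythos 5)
Your proposal is correct and follows exactly the paper's route: the paper obtains this theorem by combining the identity (\ref{eq:entropy}), proved in Section \ref{sec:quant-prob-ferm}, with estimate (\ref{dev:f_trace}) of Theorem \ref{theo:intro_2}, and indeed remarks that the H\"older-type hypothesis was designed precisely for this $f$. Your verification that $|f(t)|$ and $|f(1-t)|$ are $\bigo(t^p)$ for any $p<1$ is the only detail the paper leaves implicit, and you carry it out correctly.
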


\subsubsection*{Landau level} 

For an electron in a plane with an external perpendicular magnetic field $B$, the Hamiltonian is $H = \frac{1}{2} ( P_x^2 + P_y^2)$ with $P_x = \frac{1}{i} \partial_x + \frac{1}{2} B y$ and $P_y = \frac{ 1}{i} \partial_y - \frac{1}{2} B x $, cf. for instance \cite[Chapter 10]{Ezawa}. Here we have put all the physical constants equal to 1 except the perpendicular component $B$ of the magnetic field. Then it is customary to introduce the operators $a = \frac{1}{\sqrt{ 2B} } (P_x + i P_y)$, $a^* = \frac{1}{\sqrt{ 2B} } (P_x - i P_y)$ 
so that 
\begin{gather} \label{eq:13}
H = B \Bigl( a^* a + \frac{1}{2} \Bigr).
\end{gather}
So the lowest eigenvalue of $H$ is $B/2$ and the corresponding eigenspace, the lowest Landau level, consists of the $\psi$ satisfying $a \psi=0$. Introduce the complex coordinate $z =  (x + i y )/\sqrt 2$, then $( P_x + i P_y)/ \sqrt 2 = \frac{1}{i} ( \partial_{\con{z}} + B z )$. We conclude that $a \Psi =0$ if and only if $\psi = f e^{ - B|z|^2}$ with $f$ holomorphic. So the lowest Landau level is identified with the Bargmann space $\mathcal{B}_{\hbar}$ with $\hbar = B^{-1}$. 

In the geometric setting, starting from a Riemannian manifold $(M,g)$ and a Hermitian line bundle $L \rightarrow M$ equipped with a connection $\nabla$, we define the magnetic Laplacian 
$$H = \nabla^* \nabla$$ acting on $\Ga (M, L)$. 
The magnetic field $B$ is equal to $i$ multiplied by the curvature of $\nabla$. If $M = \R^2$ and $L$ is the trivial line bundle with connection $\nabla = d  + \frac{i}{2} B ( y dx - x dy )$, we recover the previous Hamiltonian.

Suppose now that $L$ is a positive holomorphic Hermitian line bundle $L \rightarrow M$, with the connection $\nabla$ being the Chern connection and the Riemannian metric $g$ being the metric determined by the curvature of $\nabla$ as explained in the beginning of Section \ref{sec:toepl-oper-with}. Then for any positive integer $k$, we have a magnetic Laplacian $H_k $ acting on sections of $L^k$. 

We have another natural Laplacian acting on sections of $L^k$ which is the holomorphic Laplacian $\Delta_k = \con \partial ^* \con{\partial}$. The Bochner-Kodaira formula relates these two Laplacians
$$ H_k = 2 \Delta_k + nk .$$
It is a generalization of (\ref{eq:13}). A proof may be found in \cite[7.3]{Dem2} or  \cite[Proposition 3.71]{BeGeVe}. 
By definition $\Delta_k$ is non-negative and  its kernel is the space $\Hilb_k$ of holomorphic sections of $L^k$. So the lowest eigenvalue of $H_k$ is $nk$, and the lowest Landau level is $\mathcal{H}_k$.  The magnetic field is $i \Theta (L^k) = i k \Theta (L)$. So the semiclassical limit $k \rightarrow \infty $ corresponds to a large magnetic field.

\subsection*{Acknowledgements}

The authors would like to express their warm gratitude to Beno\^it Dou\c{c}ot for bringing about this collaboration and for pointing out that the IQH entanglement entropy can be computed in terms of the spectrum of a convenient Toeplitz operator. B.E. also thanks Nicolas Regnault and Semyon Klevtsov for valuable discussions.

\section{K\"ahler quantization}

Let $L \rightarrow M$ be a holomorphic Hermitian line bundle on a complex compact manifold $M$. Recall that $L$ has a natural connection $\nabla$, called the Chern connection and   determined by the condition that it is compatible with the holomorphic and Hermitian structures.   If $\si$ is a local holomorphic frame of $L$, then $\nabla \si =- (\partial \varphi ) \otimes s$ where $\varphi = - 2 \ln |\si|$. The curvature of $\nabla$ is given by $\Theta (L) =  \partial \con{\partial} \varphi$. So $\om := i \Theta(L)$ is a real two form given in local complex coordinates by
$$ \om = i \frac{\partial^2 \varphi }{\partial z_i \partial \con{z}_j} \; dz_i \wedge d\con{z}_j .$$
We assume that $L$ is positive, meaning that $(\partial^2 \varphi/ \partial z_i \partial \con{z}_j)$ is a positive definite matrix at any point. So $\om$ is a K\"ahler form. We also have a Riemannian metric defined by $g(X,Y) = \om ( X, jY)$. The Riemannian volume density will be denoted by $\mu$. 

Let $A \rightarrow M$ be another holomorphic Hermitian line bundle not necessarily positive.  Since $M$ is compact, the space $\mathcal{H}_k$
of holomorphic sections of $L^k \otimes A$ is finite dimensional. Its dimension is given at first order in the limit $k \rightarrow \infty$ by 
\begin{gather} \label{eq:dim_est}
 \dim \mathcal{H}_k = \Bigl( \frac{k}{2 \pi } \Bigr)^n \Bigl( \mu (M) + \bigo(k^{-1}) \bigr) , 
\end{gather}
 where $\mu (M) = \int_M \mu$.  $\mathcal{H}_k$ has a natural scalar product 
\begin{gather} \label{eq:scalar_product}
 \langle s,t \rangle = \int_M (s,t) (x) \; d\mu (x) , \qquad s, t \in \Hilb_k
\end{gather}
where $(s,t)(x)$ denotes the pointwise scalar product at $x$. 
To any function $f$ of $M$ and $k \in \N$ is associated a Toepliz operator $T_f : \mathcal{H}_k \rightarrow \Hilb_k$ defined by
\begin{gather} \label{eq:Toeplitz}
 \langle T_f s, t \rangle = \langle fs, t\rangle, \qquad \forall s,t \in \mathcal{H}_k, 
\end{gather}
where the scalar product on the right-hand side is still defined by (\ref{eq:scalar_product}). Basic properties of Toeplitz operators are
\begin{gather} \label{eq:prop_Toeplitz_mult} 
 T_f T_g  = T_{fg} + \bigo ( k^{-1}) \\  \label{eq:prop_Toeplitz_trac}
\op{tr} T_f = \Bigl( \frac{k}{2 \pi } \Bigr)^n \int_M f \; d\mu  + \bigo ( k^{-1})
\end{gather}
for any smooth functions $f,g$, where the $\bigo$ depends on $f$ and $g$. Observe that the Toeplitz operator $T_f$ is well-defined by (\ref{eq:Toeplitz}) for any integrable  multiplicator $f$. The trace estimate (\ref{eq:prop_Toeplitz_trac}) still holds in this case, whereas (\ref{eq:prop_Toeplitz_mult}) is not true for integrable functions as we will see. 

The Bergman kernel of $\mathcal{H}_k$ is the holomorphic section $\Pi_k$ of $(L^k\otimes A)\boxtimes (\con{L}^k \otimes \con A)$ defined by 
$$ \Pi_k (x,y) = \sum_{i=1}^{d_k} s_{i,k} (x) \otimes \con{s}_{i,k} (y) $$
where $(s_{i,k})_{i =1 , \ldots, d_k}$ is any orthonormal basis of $\mathcal{H}_k$.
The restriction of the Bergman kernel to the diagonal can be considered as a function on $M$  because we have natural identification $L_x \otimes \con{L}_x \simeq \C$ and $A_x \otimes \con{A}_x \simeq \C$ given by the metrics of $L$ and $A$. We have 
\begin{gather} \label{eq:diag}
 \Pi_k (x,x) = \Bigl(\frac{k}{2 \pi } \Bigr)^n ( 1 + \bigo ( k^{-1}))
\end{gather} 
where the $\bigo$ is uniform in $x \in M$. 
Furthermore for any compact set $K$ of $M^2$ not intersecting the diagonal, for any $N$,  there exists $C_N$ such that 
\begin{gather} \label{eq:out_diag}
 |\Pi_k (x,y) | \leqslant C_N k^{-N} , \qquad \forall (x,y) \in K .
\end{gather}
The transition between (\ref{eq:diag}) and (\ref{eq:out_diag}) may be described as follows. Define a distance on $M$ by embedding $M$ into $\R^N$ and restricting the Euclidean distance. Denote by $|x-y|$ the distance between $x$ and $y$. Then  there exists constants $C >0$, $(C_N)_{N\in \N}$ such that for any $x, y\in M$ and $N\in \N$ we have
\begin{gather} \label{eq:decroissance} 
 \bigl| \Pi_k (x,y) \bigr| \leqslant Ck^{-n} e^{-k|x-y|^2/C} + C_N k^{-N}.
\end{gather}
We can actually characterize the Bergman kernel up to a $\bigo ( k^{-\infty})$ in terms of geometric datas as follows. For any tangent vector $X$ of $M$, we denote by $X^{1,0}$ and $X^{0,1}$ its holomorphic and antiholomorphic parts, so $X^{1,0} = \frac{1}{2} (X - ij X) $ and $X^{0,1} = \frac{1}{2}  (X+ ij X)$, where $j$ is the complex structure.

\begin{theo} \label{theo:bergman-kernel}
We have for any $k \in \N$
$$ \Pi_k (x,y) = \Bigl( \frac{k}{2 \pi } \Bigr)^n E^k (x,y) a(x,y, k) + R_k (x,y)$$
where 
\begin{enumerate} 
\item $E$ is a section of $L \boxtimes \con{L}$ independent of $k$. $|E|<1$ outside the diagonal, $E=1$ on the diagonal. Furthermore $\nabla E = \frac{1}{i} \al_E \otimes E$ on a neighborhood of the diagonal, where $\al_E$ is a one form of $M^2$ vanishing along the diagonal and such that for any vector fields $X_1$, $X_2$, $Y_1$ and $Y_2$ of $M$, we have 
\begin{gather} \label{eq:der_seconde_E} 
 \mathcal{L}_{(X_1,X_2)} \al_E (Y_1, Y_2) = \om ( X_1^{0,1} - X_2^{0,1} , Y_1) + \om ( X_1^{1,0} - X_2^{1,0}, Y_2).
\end{gather}
\item $a(\cdot, k)$ is a sequence of sections of $A \boxtimes \con{A}$ having an asymptotic expansion 
$$ a(\cdot, k) = a_0 + k^{-1} a_1 + \ldots + k^{-N} a_N + \bigo ( k^{-(N+1)}), \qquad \forall N$$
where the $\bigo$ is uniform on $M^2$, the coefficients $a_\ell$ are sections of $A \boxtimes \con{A}$, the restriction of $a_0$ to the diagonal being constant equal to $1$. 
\item $R_k$ is a section of $(L^k\otimes A)\boxtimes (\con{L}^k \otimes \con A)$ whose pointwise norm is uniformly in $\bigo(k^{-N})$ for any $N$. 
\end{enumerate}
\end{theo}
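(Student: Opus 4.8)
The plan is to establish the off-diagonal Bergman kernel expansion of Theorem \ref{theo:bergman-kernel} by combining the well-known peaked expansion of $\Pi_k$ near the diagonal with a globalization argument based on the exponential decay estimate (\ref{eq:decroissance}). The starting point is the standard result (due to Zelditch \cite{Ze}, Ma--Marinescu \cite{MaMa}, \cite{oim_op}) that near the diagonal, in a suitable trivialization, $\Pi_k$ admits an asymptotic expansion of the form $(k/2\pi)^n e^{-k\psi(x,y)}(b_0(x,y) + k^{-1}b_1(x,y) + \cdots)$ where $\psi$ is a phase function with $\Re\psi \geq 0$, vanishing to second order along the diagonal, and holomorphic in $x$, antiholomorphic in $y$; its Hessian transverse to the diagonal is governed by $\om$. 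The first step is therefore to identify the section $E$: I would set $E = e^{-\psi}$ locally and check that this local definition patches into a global section of $L\boxtimes\con L$. This is where the holomorphy/antiholomorphy of $\psi$ and the transition functions of $L^k\boxtimes\con L^k$ must be reconciled — the point is that $E$ is essentially the "reproducing kernel of the model" and is characterized intrinsically, so its germ along the diagonal is canonical. I would verify the stated properties of $E$: $E=1$ on the diagonal, $|E|<1$ off the diagonal (from $\Re\psi>0$ away from the diagonal, using positivity of $L$), and the connection identity $\nabla E = \frac{1}{i}\al_E\otimes E$ with $\al_E = i\,d\log E = -i\,d\psi$; then (\ref{eq:der_seconde_E}) is just a restatement of the transverse Hessian of $\psi$ in terms of $\om$, obtained by differentiating twice and comparing with the known second-order jet of the Bergman phase.

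Next I would absorb the ambiguity coming from the choice of local model into the amplitude: writing $a(x,y,k) := E^{-k}(x,y)\,(2\pi/k)^n\,\Pi_k(x,y)$ on a fixed neighborhood $U$ of the diagonal where $E$ is nonvanishing, the near-diagonal expansion shows $a(\cdot,k)$ has an asymptotic expansion $a_0 + k^{-1}a_1 + \cdots$ in sections of $A\boxtimes\con A$ over $U$, uniform on a slightly smaller neighborhood, with $a_0|_{\Delta} = 1$ (the leading coefficient of the diagonal expansion (\ref{eq:diag})). The delicate point here is that $E^{-k}$ grows off the diagonal, so this formula only makes sense locally; I would then extend $a(\cdot,k)$ to a global section of $A\boxtimes\con A$ on all of $M^2$ in an essentially arbitrary smooth way (using a partition of unity), keeping the same expansion near the diagonal. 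The difference $R_k := \Pi_k - (k/2\pi)^n E^k a(\cdot,k)$ is then a smooth section of $(L^k\otimes A)\boxtimes(\con L^k\otimes\con A)$ that I must show is $\bigo(k^{-\infty})$ in pointwise norm, uniformly on $M^2$.

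To control $R_k$ I would split $M^2$ into the near-diagonal region $U$ and its complement. On $U$: by construction $R_k$ is the tail of the near-diagonal asymptotic expansion, so for each $N$ it is bounded by $k^{-N}$ times $(k/2\pi)^n|E|^k \lesssim k^{-N}$ times a fixed function — here I use $|E|\leq 1$ and, more precisely, that on any compact subset of $U\setminus\Delta$ one has $|E|\leq e^{-c|x-y|^2}$ so the factor $(k/2\pi)^n|E|^k$ stays bounded; near the diagonal the honest remainder estimate of the peaked expansion gives the uniform $\bigo(k^{-N})$ bound after accounting for the $(k/2\pi)^n$ prefactor by taking one more term. Outside $U$: the term $(k/2\pi)^n E^k a(\cdot,k)$ is $\bigo(k^{-\infty})$ because $|E|<1$ is bounded away from $1$ on the compact set $M^2\setminus U$, so $(k/2\pi)^n |E|^k \to 0$ faster than any power; and $\Pi_k$ itself is $\bigo(k^{-\infty})$ there by the off-diagonal decay (\ref{eq:out_diag}) (equivalently (\ref{eq:decroissance})). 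Hence $R_k = \bigo(k^{-\infty})$ pointwise, uniformly, which is statement (3).

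The main obstacle is the globalization step: patching the local phase functions $\psi$ into a genuine global section $E$ of $L\boxtimes\con L$ and checking the connection/Hessian identities intrinsically rather than in a single chart. The local expansions produced by the parametrix constructions are chart-dependent, and one must argue that the freedom in the choice of chart changes $\psi$ only by terms that are holomorphic in $x$ and antiholomorphic in $y$ and vanish to second order on the diagonal — exactly the freedom that can be reabsorbed into the amplitude $a$ or into the local trivialization of $L\boxtimes\con L$. Once one accepts (as is classical) that the full off-diagonal expansion of $\Pi_k$ is determined modulo $\bigo(k^{-\infty})$ by the geometry, the identification $E = e^{-\psi}$ and the normalization $a_0|_\Delta=1$ are forced, and the remaining estimates on $R_k$ are routine consequences of positivity of $L$ together with (\ref{eq:diag}) and (\ref{eq:out_diag}).
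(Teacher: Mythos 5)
The paper does not prove this theorem: it is imported from \cite{oim_op}, where it is deduced from the Boutet de Monvel--Sj\"ostrand parametrix for the Szeg\"o kernel, with \cite{Ze} and \cite{MaMa} cited as alternative sources. Your overall architecture --- near-diagonal peaked expansion, identification of the phase with a canonical section $E$ of $L\boxtimes\con L$, globalization, and off-diagonal decay to control $R_k$ --- is exactly the standard route behind those references, so at the level of strategy there is nothing to object to. Two points deserve attention, one of which is a genuine gap as written.

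The gap is the definition $a(x,y,k):=E^{-k}(x,y)\,(2\pi/k)^n\,\Pi_k(x,y)$ on a \emph{fixed} neighborhood $U$ of the diagonal. At a point of $U$ at fixed distance $d>0$ from the diagonal, $|E|^{-k}$ grows like $e^{ckd^2}$, while the only information on $\Pi_k$ there is the bound $Ck^{-n}e^{-k|x-y|^2/C}+C_Nk^{-N}$ of (\ref{eq:decroissance}); the product of $|E|^{-k}$ with the $C_Nk^{-N}$ tail is not $\bigo(k^{-\infty})$ and need not even be bounded. So the $a$ you define is not controlled on $U$, let alone endowed with a uniform asymptotic expansion with smooth $k$-independent coefficients. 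The standard fix is to take the coefficients $a_\ell$ directly from the parametrix construction (transport equations, or stationary phase in the Boutet de Monvel--Sj\"ostrand calculus), cut them off near the diagonal, and prove the remainder estimate in the weighted form $\Pi_k-(k/2\pi)^nE^k\sum_{\ell\leqslant N}k^{-\ell}a_\ell=\bigo\bigl(k^{n-N-1}|E|^k\bigr)+\bigo(k^{-\infty})$; with that formulation the rest of your argument (absorbing the $k^n$ prefactor by taking $n$ extra terms, exponential decay of $|E|^k$ off the diagonal) goes through exactly as you describe.

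Two smaller remarks. First, $\al_E=i\,d\log E$ only makes sense in a chosen flat trivialization; as a section of $L\boxtimes\con L$, the covariant derivative $\nabla E$ picks up the connection form of the frame $\si\boxtimes\con\si$, so (\ref{eq:der_seconde_E}) must be verified for $\nabla$ and not for the chart derivative of $\psi$ --- this is where the holomorphy in $x$ and antiholomorphy in $y$ of the phase enter, since they are what produce the $X^{0,1}$ and $X^{1,0}$ splitting on the right-hand side. Second, your patching claim is correct but for a reason worth making explicit: $E$ is determined by the stated conditions only modulo a factor $1+\bigo(|x-y|^\infty)$, and such a change multiplies $E^k$ by $1+\bigo(k^{-\infty})$ on the region $|x-y|\leqslant k^{-1/2+\ep}$ where the kernel is not already negligible; this is precisely why the chart-dependence of the parametrix phase is harmless and can be absorbed into $a$ and $R_k$.
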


Estimates (\ref{eq:diag}), (\ref{eq:out_diag}) and (\ref{eq:decroissance}) are all consequences of Theorem \ref{theo:bergman-kernel}. In particular, to deduce (\ref{eq:decroissance}), observe that the conditions satisfied by $E$ imply that $\varphi_E = - 2 \ln |E|$ vanish along the diagonal and is positive outside the diagonal. A short computation from (\ref{eq:der_seconde_E}) gives the Hessian of $\varphi_E$ along the diagonal: for any vector fields $X_1$, $X_2$ of $M$, 
$$ \bigl( \mathcal{L}_{(X_1,0)} \mathcal{L}_{(X_2,0)} \varphi_E \bigr)(x,x)  = g\bigl( X_1,X_2\bigr)  (x).$$
where we denote by $\mathcal{L}$ the Lie derivative and by $g$ the Riemannian metric as above.

Theorem \ref{theo:bergman-kernel} has been deduced in \cite{oim_op} from the Szeg\"o kernel description of \cite{BoSj}. For other results on the Bergman kernel asymptotic behaviour, we refer the reader to \cite{Ze} and \cite{MaMa}.

\section{First spectral estimates} 

The Toeplitz quantization is real and positive in the sense that for any  integrable function $f$ of $M$, the corresponding Toeplitz operators $T_{f}$ is Hermitian when $f$ is real and non negative when $f$ is non negative. Since for any mesurable set $A$ of $M$, the characteristic function of $A$ satisfies $0 \leqslant 1_{A} \leqslant 1$, we deduce that $T_{A} := T_{1_A}$ is Hermitian with its spectrum in $[0,1]$. 

As mentioned in the introduction, if $ A \subset M$ has a non empty interior and $M$ is connected, then $0$ is not an eigenvalue of $T_{A}$. Indeed, if $T_A s =0$ then $\langle 1_A s, s \rangle =0$, which implies that $s =0$ on $A$, so $s =0$ by analytic continuation. Similarly, if the exterior of $A$ is non empty and $M$ connected, $1$ is not an eigenvalue of $T_A$.

In the sequel we will use the probability measure $\nu$ of $M$ obtained by renormalising $\mu$, so $\nu (B) = \mu (B) / \mu (M)$ for any subset $B$. We denote by $d_k$ the dimension of $\mathcal{H}_k$.

\begin{theo} \label{theo:prod_singular}
If $A$ and $B$ are two subsets of $M$ with null boundary measure, then $d_k^{-1} \op{tr}( T_A T_B ) = \nu (A \cap B) + \op{o}(1)$ as $k \rightarrow \infty$.
\end{theo}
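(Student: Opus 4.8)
The plan is to express $\op{tr}(T_A T_B)$ directly in terms of the Bergman kernel and then reduce to an estimate on the diagonal using the off-diagonal decay. Writing $\Pi_k$ for the Bergman kernel of $\Hilb_k$ and using that $T_A$ has integral kernel $(x,y)\mapsto \mathbf 1_A(x)\Pi_k(x,y)$, the composition $T_A T_B$ has kernel $\mathbf 1_A(x)\int_M \Pi_k(x,z)\mathbf 1_B(z)\Pi_k(z,y)\,d\mu(z)$, and taking the trace gives
\begin{gather*}
\op{tr}(T_A T_B) = \int_{A\times B} |\Pi_k(x,z)|^2 \; d\mu(x)\,d\mu(z).
\end{gather*}
(Here $|\Pi_k(x,z)|^2$ means the pointwise norm squared with respect to the metrics, which is the natural scalar function.) So the whole statement becomes an assertion about the asymptotics of this double integral: namely $d_k^{-1}$ times it tends to $\nu(A\cap B)=\mu(A\cap B)/\mu(M)$.

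Next I would localize near the diagonal. By the decay estimate \eqref{eq:decroissance}, $|\Pi_k(x,z)|^2 \le C k^{-2n} e^{-2k|x-z|^2/C} + C_N k^{-N}|\Pi_k(x,z)|$; more usefully, the mass of $\int |\Pi_k(x,z)|^2\,d\mu(z)$ concentrates in a $k^{-1/2}$-neighborhood of $x$. Precisely, using Theorem~\ref{theo:bergman-kernel} and the Gaussian profile of $E$ (whose $|E|^2 = e^{-\varphi_E}$ has Hessian $g$ along the diagonal), for fixed $x$ one has $\int_M |\Pi_k(x,z)|^2\,d\mu(z) = \Pi_k(x,x)(1+\bigo(k^{-1}))$ — this is just the reproducing identity $\Pi_k * \Pi_k = \Pi_k$ evaluated on the diagonal, so actually $\int_M |\Pi_k(x,z)|^2\,d\mu(z) = \Pi_k(x,x)$ exactly. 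Then the point is that for $x$ in the interior of $A\cap B$ (at distance $\gg k^{-1/2}$ from $\partial A\cup\partial B$), restricting the $z$-integral to $B$ instead of all of $M$ changes it by only $\bigo(k^{-\infty})\Pi_k(x,x)$, since the excluded region is exponentially far in the Gaussian scale. Hence
\begin{gather*}
\int_B |\Pi_k(x,z)|^2\,d\mu(z) = \mathbf 1_{\op{int}(B)}(x)\,\Pi_k(x,x) + (\text{error}),
\end{gather*}
and integrating over $A$ and using $\Pi_k(x,x) = (k/2\pi)^n(1+\bigo(k^{-1}))$ together with $d_k = (k/2\pi)^n(\mu(M)+\bigo(k^{-1}))$ from \eqref{eq:dim_est} yields $d_k^{-1}\op{tr}(T_AT_B) \to \mu(A\cap B)/\mu(M)$.

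The main obstacle is controlling the contribution of the boundary layer: the set of $x\in A$ within distance $\sim k^{-1/2}$ of $\partial A \cup \partial B$, where the localization argument fails and $\int_B|\Pi_k(x,z)|^2 d\mu(z)$ is some intermediate value between $0$ and $\Pi_k(x,x)$. Since $\partial A$ and $\partial B$ have null $\mu$-measure (not necessarily smooth here!), I cannot use a tubular-neighborhood volume bound of order $k^{-1/2}$; instead I need a dominated-convergence / uniform-integrability argument. The clean way: bound $0\le \int_B |\Pi_k(x,z)|^2\,d\mu(z) \le \Pi_k(x,x) \le Ck^n$ uniformly, so $d_k^{-1}\int_B|\Pi_k(x,z)|^2 d\mu(z) \le C'$ is a uniformly bounded family of functions of $x$; it converges pointwise to $\mathbf 1_{\op{int}(B)}(x)\cdot \mu(M)^{-1}$ for a.e. $x$ (everywhere off $\partial B$, which is null); and $A$ has finite measure, so dominated convergence gives $d_k^{-1}\int_A\!\int_B |\Pi_k|^2 \to \mu(A\cap \op{int}(B))/\mu(M) = \nu(A\cap B)$, the last equality because $\partial B$ is null. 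The one technical point inside this is justifying the pointwise convergence $d_k^{-1}\int_B|\Pi_k(x,z)|^2 d\mu(z)\to \mathbf 1_{\op{int}(B)}(x)/\mu(M)$ at each fixed $x\notin\partial B$ uniformly enough in $x$, which follows from \eqref{eq:decroissance} once one chooses the localization radius shrinking slower than the Gaussian width, e.g. $k^{-1/2}\log k$.
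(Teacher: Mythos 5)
Your argument is correct, but it takes a genuinely different route from the paper. You work directly with the identity $\op{tr}(T_AT_B)=\int_{A\times B}|\Pi_k(x,z)|^2\,d\mu(x)d\mu(z)$, the exact reproducing identity $\int_M|\Pi_k(x,z)|^2 d\mu(z)=\Pi_k(x,x)$, the off-diagonal decay of the Bergman kernel to get pointwise convergence of $d_k^{-1}\int_B|\Pi_k(x,\cdot)|^2d\mu$ to $\mathbf 1_{\op{int}(B)}(x)/\mu(M)$ off the null set $\partial B$, and then dominated convergence over $A$. The paper instead never touches the pointwise kernel asymptotics here: it introduces the probability measures $\nu_k=d_k^{-1}|\Pi_k|^2\,\mu\otimes\mu$ on $M^2$, shows via the smooth-symbol calculus (\ref{eq:prop_Toeplitz_mult})--(\ref{eq:prop_Toeplitz_trac}) and density of product functions that $\nu_k$ converges weakly to the diagonal measure, and concludes by the Portmanteau theorem applied to $C=A\times B$. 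The paper's route is softer (it needs only the leading-order Toeplitz calculus for smooth symbols and abstract measure theory), while yours is more quantitative: it yields rates of convergence away from the boundaries and in fact uses only that $\partial B$ (equivalently, by symmetry of the trace, $\partial A$) is null. Two small points to tidy up: the Schwartz kernel of $T_A$ on $\Hilb_k$ is $(x,y)\mapsto \mathbf 1_A(y)\Pi_k(x,y)$ rather than $\mathbf 1_A(x)\Pi_k(x,y)$ (the displayed trace formula is unaffected, by cyclicity); and the closing worry about uniformity in $x$ is unnecessary --- dominated convergence only needs the pointwise limit at each fixed $x\notin\partial B$ together with the uniform bound $d_k^{-1}\Pi_k(x,x)\leqslant C$, and at fixed $x$ the excluded region $M\setminus B$ stays at a fixed positive distance from $x$, so (\ref{eq:out_diag}) already suffices.
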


When $A$ and $B$ are more regular, we can improve this result  \cite[Appendix]{Pol_oim}.

\begin{proof} By (\ref{eq:prop_Toeplitz_mult}) and (\ref{eq:prop_Toeplitz_trac}), for any two smooth functions $f$, $g$, we have 
\begin{gather} \label{eq:4}
d_k^{-1} \op{tr} (T_f T_g) = \int_M fg \; d\nu + \bigo(k^{-1}) 
\end{gather}
 Introduce the sequence $\nu_k$ of probability measures of $M^2$ such that 
$$\int_{M^2} f(x) g(y) d \nu_k (x,y) = d_k^{-1}\op{tr}( T_f T_g).$$ 
We can explicitly compute that $\nu_k =  d_{k}^{-1} |\Pi|^2  \mu \otimes \mu$. Since the functions of the form $h(x,y) = f(x) g(y)$ with $f$ and $g$ smooth, span a dense subset of $\mathcal{C}(M^2)$, (\ref{eq:4}) implies that $\nu_k$ converges weakly to the current of integration given by the measure $\nu$ on the diagonal $\Delta$, that is for any continuous function $h$ of $M^2$
\begin{gather} \label{eq:weak_convergence}
 \int_{M^2} h(x,y) \; d\nu_k (x,y) \rightarrow \int_{M} h(x,x) d\nu (x)
\end{gather}
By Portmanteau theorem \cite[Theorem 4.14.4]{Si}, this weak convergence implies that for any measurable subset $C$ of $M^2$ such that $\nu (  (\partial C) \cap \Delta) = 0$, we have $\nu_k (C) \rightarrow \nu ( C \cap \Delta)$. Applying this to $C = A \times B$ with $\nu(\partial A) = \nu ( \partial B) =0$, we get that $\nu_k (A\times B) \rightarrow  \nu (A\cap B)  $.  
\end{proof}

\begin{cor} If the boundary of $A$ is of measure zero, then for any $a,b$ such that $0<a<b<1$, we have
in the limit $k \rightarrow \infty$ 
\begin{xalignat*}{2}
  d_{k}^{-1} \bigl| \op{sp} (T_A) \cap [0, a] \bigr| & = \nu (A^c) + \op{o} (1) , \qquad d_{k}^{-1} \bigl| \op{sp} (T_A) \cap [a,b]\bigr| =  \op{o} (1), \\ 
 d_{k}^{-1} \bigl| \op{sp} (T_A) \cap [b, 1] \bigr| & = \nu (A) + \op{o} (1).
\end{xalignat*}
\end{cor}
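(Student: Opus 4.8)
The corollary follows from Theorem~\ref{theo:prod_singular} by a standard functional-calculus argument applied to the eigenvalue distribution of $T_A$. The idea is to encode the spectrum in a sequence of probability measures on $[0,1]$ and show it converges weakly to the Bernoulli-type measure $\nu(A^c)\de_0 + \nu(A)\de_1$.

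\textbf{Step 1: the spectral measure.} For each $k$, let $\si_k$ be the probability measure on $[0,1]$ given by $\si_k = d_k^{-1}\sum_{i=1}^{d_k}\de_{\la_i}$, where $\la_1,\ldots,\la_{d_k}$ are the eigenvalues of $T_A$. For a polynomial $P$ we have $\int P\,d\si_k = d_k^{-1}\op{tr}(P(T_A))$. Since $0 \leqslant T_A \leqslant 1$, all $\si_k$ are supported in $[0,1]$, so it suffices to control the moments $\int x^m\,d\si_k = d_k^{-1}\op{tr}(T_A^m)$.

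\textbf{Step 2: moments converge.} I would prove by induction on $m$ that $d_k^{-1}\op{tr}(T_A^m) \to \nu(A)$ as $k\to\infty$ for every $m\geqslant 1$. The case $m=1$ is (\ref{eq:prop_Toeplitz_trac}) (or $m=2$ is Theorem~\ref{theo:prod_singular} with $B=A$). For the inductive step one uses a Berezin--Lieb-type sandwiching: since $0 \leqslant 1_A \leqslant 1$ and $T_A$ is obtained by compressing multiplication by $1_A$, one has the operator inequalities $0 \leqslant T_A \leqslant \op{id}$ and more importantly $\op{tr}(T_A^{m+1}) \leqslant \op{tr}(T_A \cdot T_{A}^{m/\ldots})$-type bounds; concretely, writing $\op{tr}(T_A^m)$ in terms of the Bergman kernel as in (\ref{eq:12}) and using that $|\Pi_k|^2 \mu\otimes\mu$ concentrates on the diagonal (as established in the proof of Theorem~\ref{theo:prod_singular} via (\ref{eq:weak_convergence})), the $m$-fold integral over $A^m$ collapses to $\nu(A)$ in the limit. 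Alternatively, and more cleanly: the inequality $T_A^m \geqslant T_A^{m+1}$ gives $\op{tr}(T_A^m) \geqslant \op{tr}(T_A^{m+1})$, so the moments are nonincreasing and bounded below by $0$; combined with $\op{tr}(T_A^m) - \op{tr}(T_A^{m+1}) = \op{tr}(T_A^{m/2}(1-T_A)T_A^{m/2}) \to 0$ (which one extracts from the concentration of $|\Pi_k|^2$ near the diagonal, where $1-1_A$ vanishes on $\op{int}(A)$), one concludes all moments have the common limit $\nu(A)$. Hence $\int x^m\,d\si_k \to \nu(A) = \int x^m\,d(\nu(A^c)\de_0 + \nu(A)\de_1)$ for all $m\geqslant 0$.

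\textbf{Step 3: weak convergence and Portmanteau.} Convergence of all moments of probability measures on the compact interval $[0,1]$ implies weak convergence, so $\si_k \rightharpoonup \nu(A^c)\de_0 + \nu(A)\de_1$. Now apply the Portmanteau theorem exactly as in the proof of Theorem~\ref{theo:prod_singular}: for $0<a<b<1$ the sets $[0,a]$, $[a,b]$, $[b,1]$ have boundaries $\{0,a\}$, $\{a,b\}$, $\{b,1\}$ respectively, and the limit measure gives zero mass to each of $\{a\}$ and $\{b\}$, hence zero mass to the boundary of each interval. Therefore
\begin{gather*}
 d_k^{-1}\bigl|\op{sp}(T_A)\cap[0,a]\bigr| = \si_k([0,a]) \to \nu(A^c), \\
 d_k^{-1}\bigl|\op{sp}(T_A)\cap[a,b]\bigr| = \si_k([a,b]) \to 0, \\
 d_k^{-1}\bigl|\op{sp}(T_A)\cap[b,1]\bigr| = \si_k([b,1]) \to \nu(A),
\end{gather*}
which is the claim.

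\textbf{Main obstacle.} The only non-formal point is Step~2, i.e.\ promoting Theorem~\ref{theo:prod_singular} (the $m=2$ case) to all moments. The clean route is the monotonicity $\op{tr}(T_A^m)$ decreasing together with $\op{tr}(T_A^m - T_A^{m+1})\to 0$; the latter requires re-running the diagonal-concentration argument of Theorem~\ref{theo:prod_singular} for the $(m+1)$-fold Bergman-kernel integral (\ref{eq:12}), observing that the integrand localizes near the diagonal where the factor $1_{A^c}$ meets $A^{m}$ only on $\partial A$, which has null measure. This is a routine extension of the given proof but is where the real content sits; everything else is soft analysis.
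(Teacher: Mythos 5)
Your argument is correct and completable, but it is not the route the paper takes. The paper avoids higher moments entirely: it uses the single quadratic minorant $f(x)=1-x-a^{-1}x(1-x)\leqslant 1_{[0,a]}(x)$, computes $d_k^{-1}\op{tr} f(T_A)=\nu(A^c)+\op{o}(1)$ from the two facts $d_k^{-1}\op{tr}(T_A)\to\nu(A)$ and $d_k^{-1}\op{tr}(T_A^2)\to\nu(A)$ (Theorem \ref{theo:prod_singular} with $B=A$), deduces the lower bound $d_k^{-1}\bigl|\op{sp}(T_A)\cap[0,a]\bigr|\geqslant\nu(A^c)+\op{o}(1)$, repeats this for $[b,1]$ via $T_{A^c}=\op{id}-T_A$, and concludes because the two lower bounds already sum to $1+\op{o}(1)$, forcing equality everywhere and $\op{o}(1)$ mass in the middle. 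Your method-of-moments/Portmanteau scheme reaches the same weak limit $\nu(A^c)\de_0+\nu(A)\de_1$, but at the price of your Step 2, which you rightly identify as the only real content. The good news is that Step 2 does \emph{not} require re-running the Bergman-kernel diagonal-concentration argument for $m$-fold integrals: since $0\leqslant T_A\leqslant\op{id}$ and $x^m(1-x)\leqslant x(1-x)$ on $[0,1]$ for every $m\geqslant 1$, positivity of the functional calculus gives
$$0\leqslant\op{tr}\bigl(T_A^m-T_A^{m+1}\bigr)\leqslant\op{tr}\bigl(T_A-T_A^2\bigr)=\op{o}(d_k),$$
and telescoping from $m=1$ yields $d_k^{-1}\op{tr}(T_A^m)\to\nu(A)$ for all $m$; so everything reduces to the cases $m=1,2$, which is exactly the input the paper uses. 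One small point in Step 3: the boundary of $[0,a]$ in $\R$ is $\{0,a\}$ and the limit measure puts positive mass at $0$, so either take boundaries relative to the compact space $[0,1]$ (where $\partial[0,a]=\{a\}$) or argue with distribution functions at the continuity points $a,b$; with that adjustment the Portmanteau step is fine.
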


This result was already proved in \cite{Berndtsson} by the same method. 
\begin{proof} Let $f(x) = 1 -x -a^{-1} x (1-x)$. Then $f(x) \leqslant 1_{[0,a]}(x)$ on $[0,1]$. So $$\op{tr} (f(T_A)) \leqslant \op{tr} (1_{[0,a]}(T_A)) = \bigl| \op{sp}(T_A) \cap [0,a] \bigr|. $$ 
By (\ref{eq:prop_Toeplitz_trac}), $d_k^{-1} \op{tr}( T_A) = \nu (A) + \bigo ( k^{-1})$. By Theorem \ref{theo:prod_singular},  $d_k^{-1} \op{tr} (T_A^2) = \nu(A) + \op{o}(1)$.
Hence, we have $d_k^{-1} \op{tr} (T_A - T_A^2) = \op{o}(1)$. If follows that $d_k^{-1} \op{tr} (f(T_A)) =  1 - \nu (A) + \op{o} (1)$. So
$$ d_k^{-1} \bigl| \op{sp}(T_A) \cap [0,a] \bigr| \geqslant 1 - \nu (A) + \op{o} (1) .$$
Applying exactly the same argument to $T_{A^c} = \op{id} - T_A$ instead of $T_A$ and $1-b$ instead of $a$, we get 
$$ d_k^{-1} \bigl| \op{sp}(T_A) \cap [1-b,1] \bigr| \geqslant \nu (A) + \op{o} (1) .$$
and the result follows.
\end{proof}
Introduce now the coherent state $e_x$ at $x \in M$ given by $e_x = \Pi_k (\cdot ,x)\in \mathcal{H}_k \otimes \con{L}^k_x \otimes \con{A}_x$. We are going to estimate the norm of $e_x$ on $A$, 
$$ \| e_x \|_A^2 := \int_A | e_x(y) |^2  \; d\mu (y) = \int_A | \Pi_k (x,y) |^2 \; d\mu (y).$$
When the boundary of $A$ is a smooth closed submanifold of $M$, we can introduce a {\em defining function} $\rho$ of $A$. It is a smooth real valued function on $M$ such that $\{ \rho < 0 \} = \op{int} A$, $\{ \rho = 0 \} = \partial A$ and $0$ is a regular value.

\begin{prop} \label{prop:loc_et_coh} $ $
\begin{enumerate} 
\item
For any compact subset $K$ of $M$ not intersecting the closure of $A$, for any $N$, there exists $C$ such that $\| e_x \|_A \leqslant C k^{-N}$ for any $x \in K$. 
\item If $\partial A$ is smooth and $\rho$ is a defining function of $A$, we have for any $N$
$$ \| e_x \|_A \leqslant Ck^{n/2} e^{ - k \rho^2(x) /C }   + C_Nk^{-N} $$
for any $x \notin \con{A}$ with $C$, $C_N$ independent of $x$.
\end{enumerate}
\end{prop}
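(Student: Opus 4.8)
The plan is to deduce both statements from the Gaussian decay estimate \eqref{eq:decroissance} for the Bergman kernel, which gives $|\Pi_k(x,y)| \leqslant Ck^{-n} e^{-k|x-y|^2/C} + C_N k^{-N}$. Since $\|e_x\|_A^2 = \int_A |\Pi_k(x,y)|^2 \, d\mu(y)$, the first item is essentially immediate: if $K$ is a compact set disjoint from $\con A$, then $|x-y| \geqslant d := \op{dist}(K, \con A) > 0$ for all $x \in K$, $y \in A$, and hence $|\Pi_k(x,y)|^2 \leqslant 2C^2 k^{-2n} e^{-2k d^2/C} + 2C_N^2 k^{-2N}$ on $K \times A$. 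Integrating over $A$ (which has finite measure since $M$ is compact) and taking square roots gives $\|e_x\|_A \leqslant C' k^{-N}$ once $N$ is chosen large enough relative to the desired power, using that $e^{-2kd^2/C}$ is $\bigo(k^{-\infty})$. So the first item requires only unwinding definitions.

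For the second item, the point is to replace the crude lower bound $|x-y| \geqslant d$ by the sharper estimate $|x-y| \geqslant c\,|\rho(x)|$ valid uniformly for $x$ near $\partial A$ with $x \notin \con A$ and $y \in A$. This follows because $\rho$ is a defining function with $0$ a regular value: on a tubular neighborhood of $\partial A$, $|\rho(x)|$ is comparable to $\op{dist}(x, \partial A)$, and since $y \in A \subset \{\rho \leqslant 0\}$ while $\rho(x) > 0$, we have $\op{dist}(x, A) \geqslant \op{dist}(x, \partial A) \geqslant c|\rho(x)|$ for some $c > 0$. Away from $\partial A$ the distance $|x - y|$ is bounded below by a positive constant, and $|\rho(x)|$ is bounded above (as $M$ is compact), so the inequality $|x-y|^2 \geqslant c^2 \rho^2(x)$ can be arranged to hold globally on $\{x \notin \con A\} \times A$ after adjusting the constant $c$. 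Plugging this into \eqref{eq:decroissance} yields $|\Pi_k(x,y)|^2 \leqslant 2C^2 k^{-2n} e^{-2k c^2 \rho^2(x)/C} + 2C_N^2 k^{-2N}$ on that set. Integrating over $y \in A$ gives a factor $\mu(A) \leqslant \mu(M)$, and taking the square root produces
$$ \|e_x\|_A \leqslant C'' k^{-n/2} e^{-k c^2 \rho^2(x)/C} + C_N' k^{-N}.$$
This is slightly better than the claimed $Ck^{n/2} e^{-k\rho^2(x)/C} + C_N k^{-N}$ (the stated exponent $n/2$ on $k$ is a harmless weakening), and after renaming constants $C$ and relabelling $N$ we obtain exactly the asserted bound.

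The main obstacle — really the only nontrivial point — is establishing the comparison $|x-y| \geqslant c\,|\rho(x)|$ uniformly. This is where smoothness of $\partial A$ and the regular-value hypothesis on $\rho$ are used: one needs that near $\partial A$ the function $|\rho|$ controls the Euclidean distance to $\partial A$ from both sides, which is a standard consequence of the implicit function theorem and compactness of $\partial A$, together with the observation that a point $y \in A$ can only be closer to $x$ than $\partial A$ is if the segment from $x$ to $y$ avoids $\partial A$, which is impossible since $\rho(x) > 0 \geqslant \rho(y)$. The rest is routine: substitution into \eqref{eq:decroissance} and integration.
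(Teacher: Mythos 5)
Your overall strategy --- the Gaussian off-diagonal decay of the Bergman kernel combined with a lower bound for $|x-y|$ in terms of $\rho(x)$ --- is the same as the paper's, and your first item is correct and essentially identical to the paper's (which simply invokes (\ref{eq:out_diag})). But your second item has a genuine quantitative gap. First, a red flag you should have caught: the prefactor $k^{-n}$ printed in (\ref{eq:decroissance}) is inconsistent with (\ref{eq:diag}), since on the diagonal $\Pi_k(x,x)=(k/2\pi)^n(1+\bigo(k^{-1}))$; the correct prefactor is $k^{n}$. Your computation takes the printed $k^{-n}$ at face value and concludes a bound $k^{-n/2}e^{-k\rho^2(x)/C}$ that you describe as ``better than claimed'' --- this should have signalled an error, because $\|e_x\|_A \leqslant \|e_x\| = \Pi_k(x,x)^{1/2}\sim (k/2\pi)^{n/2}$ shows that $k^{n/2}$ is exactly the right normalization for this quantity.

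Second, and this is the real gap: with the correct prefactor $k^n$, your step ``integrating over $y\in A$ gives a factor $\mu(A)$'' yields only $\|e_x\|_A \leqslant Ck^{n}e^{-k\rho^2(x)/C}+C_Nk^{-N}$, which is weaker than the stated bound by a factor $k^{n/2}$ and is not good enough downstream: in Theorem \ref{theo:trace_estim_plus} one needs $\|f_x\|_A^2=\|e_x\|_A^2/\Pi_k(x,x)\leqslant Ce^{-k\rho^2(x)/C}+\bigo(k^{-\infty})$, and the extra polynomial factor cannot be absorbed into the exponential uniformly in $x$ (it is genuinely of order $k^n$ when $\rho(x)=\bigo(k^{-1/2})$). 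The missing idea is to exploit the Gaussian decay in the $y$-integration rather than crudely bounding the integral by $\mu(A)$: using $|x-y|^2\geqslant \rho^2(x)/C^2$, split $e^{-2k|x-y|^2/C}\leqslant e^{-k\rho^2(x)/C^3}\,e^{-k|x-y|^2/C}$ and then compute $\int_A e^{-k|x-y|^2/C}\,d\mu(y)=\bigo(k^{-n})$ by the change of variables $y\mapsto\sqrt{k}\,y$ in local coordinates. This gain of $k^{-n}$ turns $k^{2n}$ into $k^{n}$ inside the square root and produces the stated $k^{n/2}$; it is exactly the step the paper performs. Incidentally, the comparison $|x-y|\geqslant \rho(x)/C$ needs none of your tubular-neighbourhood or segment-crossing discussion, which is shaky anyway since the distance here is the extrinsic one: since $\rho$ is Lipschitz and $y\in A$ gives $\rho(y)\leqslant 0$, one has directly $\rho(x)\leqslant\rho(x)-\rho(y)\leqslant C|x-y|$.
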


\begin{proof} The first part follows directly from (\ref{eq:out_diag}). For the second part, we use (\ref{eq:decroissance}) and the fact that for any $x \notin A$ and $y \in A$, $\rho (x) \leqslant | \rho (x) - \rho (y) | \leqslant C |x -y |$ because $\rho$ is Lipschitz. So $ - |x-y|^2 \leqslant -\rho(x)^2/C^2$ and it follows that 
\begin{xalignat*}{2}
 \int_A e^{-k |x-y|^2/C' } \mu (y) & \leqslant  e^{ -k\rho (x)^{2} /2C^2 C'} \int_A e^{ - k | x-y|^2/ 2C'} \mu (y) \\ 
& \leqslant    C'' k^{-n}  e^{ -k\rho (x)^{2} /2C^2 C'}
\end{xalignat*}
by computing the integral in coordinates with a change of variable $x'= \sqrt{k} x$, which concludes the proof.
\end{proof}

Let us deduce the estimate for section $s$ in the sum $\Hilb_k (\epsilon)$ of the eigenspaces of $T_{A}$ with eigenvalue in $[\ep, 1-\ep]$. 
\begin{prop} $ $
\begin{enumerate} 
\item For any $\ep \in (0,1/2)$, $N \in \N$ and compact subset $K$ of $M$ not intersecting $\partial A$, there exists $C$ such that for any $s \in \Hilb_k ( \ep)$, we have 
$$ |s (x) | \leqslant C k^{-N} \| s\| , \qquad \forall \; x \in K, \; k \in \N.$$
\item Assume $A$ has a smooth boundary. Then for any $\ep \in (0,1/2)$ and $N \in \N$, there exists $C$ such that for any  $s$ in $\Hilb_k ( \ep)$ with norm 1, we have 
$$ |s(x) | \leqslant Ck^{n -1/4} e^{ - k \rho^2(x) /C }   + Ck^{-N}
$$ 
where $\rho$ is a defining function of $A$.
\end{enumerate}
\end{prop}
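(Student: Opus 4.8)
The plan is to deduce both estimates from the pointwise bound on eigenfunctions via the coherent state $e_x$, together with the spectral localization provided by Proposition~\ref{prop:loc_et_coh}. The key identity is the reproducing property: for any $s \in \Hilb_k$ and $x \in M$,
$$ s(x) = \langle s, e_x \rangle, \qquad \text{so} \qquad |s(x)| \leqslant \|e_x\| \, \|s\|. $$
But this only gives the (known) global bound $|s(x)| = \bigo(k^{n/2})\|s\|$. To exploit that $s$ lies in $\Hilb_k(\ep)$, I would use that on $\Hilb_k(\ep)$ the operator $T_A$ is bounded below by $\ep$ and $T_{A^c} = \op{id} - T_A$ is bounded below by $\ep$ as well. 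Hence for $s \in \Hilb_k(\ep)$ we have $\|e_x\|$ controlled after projecting: writing $e_x = e_x^A + e_x^{A^c}$ where $e_x^A$ is the restriction-type object $1_A e_x$ viewed appropriately — more precisely, I would bound $|s(x)| = |\langle 1_A s, e_x\rangle + \langle 1_{A^c} s, e_x\rangle| \leqslant \|1_A s\| \|e_x\|_A + \|1_{A^c} s\|\|e_x\|_{A^c}$ is not quite right since $1_A s \notin \Hilb_k$. Instead the clean route is: $s(x) = \langle s, e_x\rangle$, and since $T_A = \Pi_k 1_A$ on $\Hilb_k$, we can write $\ep^{-1} T_A s$ has norm $\geqslant \|s\|$... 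Let me restructure: for $s \in \Hilb_k(\ep)$, decompose $s = T_A u$ where $u = T_A^{-1} s$ restricted to $\Hilb_k(\ep)$ satisfies $\|u\| \leqslant \ep^{-1}\|s\|$. Then $s(x) = (T_A u)(x) = \langle 1_A u, e_x \rangle = \int_A u(y) \overline{e_x(y)}\,d\mu(y)$, so $|s(x)| \leqslant \|u\| \, \|e_x\|_A \leqslant \ep^{-1} \|e_x\|_A \|s\|$. Symmetrically, using $T_{A^c}$, $|s(x)| \leqslant \ep^{-1}\|e_x\|_{A^c}\|s\|$.

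With this, part~(1) follows immediately: if $K$ is a compact set not meeting $\partial A$, then either $K \cap \con{A} = \emptyset$ or $K \cap \con{A^c} = \emptyset$ (possibly after covering $K$ by finitely many such pieces — a compact set disjoint from $\partial A$ is a disjoint union of a compact subset of $\op{int} A$ and a compact subset of $\op{int} A^c$). On the piece inside $\op{int} A$, apply $|s(x)| \leqslant \ep^{-1}\|e_x\|_{A^c}\|s\|$ together with part~(1) of Proposition~\ref{prop:loc_et_coh} applied to $A^c$ (whose closure avoids that piece) to get $\|e_x\|_{A^c} \leqslant C_N k^{-N}$; on the piece inside $\op{int} A^c$, apply $|s(x)| \leqslant \ep^{-1}\|e_x\|_A\|s\|$ and part~(1) of Proposition~\ref{prop:loc_et_coh} for $A$. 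Taking the max of the finitely many constants yields the claim.

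For part~(2), assume $\partial A$ smooth with defining function $\rho$. For $x \notin \con{A}$, i.e. $\rho(x) > 0$, apply the bound $|s(x)| \leqslant \ep^{-1}\|e_x\|_A \|s\|$ and invoke part~(2) of Proposition~\ref{prop:loc_et_coh}:
$$ \|e_x\|_A \leqslant C k^{n/2} e^{-k\rho^2(x)/C} + C_N k^{-N}. $$
Hence $|s(x)| \leqslant \ep^{-1}\bigl( C k^{n/2} e^{-k\rho^2(x)/C} + C_N k^{-N}\bigr)$. To match the stated form $C k^{n-1/4} e^{-k\rho^2(x)/C} + C k^{-N}$, note $k^{n/2} \leqslant k^{n-1/4}$ for $k \geqslant 1$ once $n \geqslant 1$ (indeed $n/2 \leqslant n - 1/4 \iff n \geqslant 1/2$), and absorb $\ep^{-1}$ and the various constants into a single $C$; the residual term $\ep^{-1}C_N k^{-N}$ is of the stated form (rename $N$). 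So the stated bound, while not sharp in the power of $k$ in front, follows directly. I would remark that the slack exponent $n - 1/4$ is deliberately generous: it is the power that will be convenient in later arguments (it accommodates the bound $|s(x)| \leqslant C k^{n-1/4}\|s\|$ valid for a general normalized $s$, obtained from the trivial estimate combined with $\dim \Hilb_k(\ep) = \op{o}(k^n)$ — though we do not even need that here).

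The main obstacle is purely bookkeeping: making sure the decomposition $s = T_A u$ with $\|u\| \leqslant \ep^{-1}\|s\|$ is legitimate on $\Hilb_k(\ep)$ — this is immediate because $T_A$ restricted to the invariant subspace $\Hilb_k(\ep)$ has spectrum in $[\ep, 1-\ep]$, hence is invertible there with inverse of norm $\leqslant \ep^{-1}$ — and that the passage from $\|e_x\|$-estimates on $A$ versus $A^c$ is applied on the correct pieces of $K$. There is no analytic difficulty beyond what is already contained in Proposition~\ref{prop:loc_et_coh}; the proposition is essentially a corollary of it.
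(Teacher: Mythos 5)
Your argument is correct, but it takes a genuinely different route from the paper's. The paper expands $s$ in a normalized eigenbasis $(s_\la)$ of $T_A$ restricted to $\Hilb_k(\ep)$, applies Cauchy--Schwarz to get $|s(x)|^2 \leqslant \|s\|^2 \dim \Hilb_k(\ep) \sup_\la |s_\la(x)|^2$, and then bounds each eigensection via $|s_\la(x)| = \la^{-1}|\langle T_A s_\la, e_x\rangle| \leqslant \ep^{-1}\|e_x\|_A$; this yields $|s(x)| \leqslant \ep^{-1}\|s\|\,(\dim\Hilb_k(\ep))^{1/2}\|e_x\|_A$ and requires the input $\dim \Hilb_k(\ep) = \bigo(k^{n-1/2})$ (a forward reference to Remark~\ref{rem:nombre_vap_facile}), which is exactly where the exponent $n-1/4 = n/2 + (n/2-1/4)$ in the statement comes from. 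You instead invert $T_A$ on the invariant subspace $\Hilb_k(\ep)$, writing $s = T_A u$ with $\|u\| \leqslant \ep^{-1}\|s\|$ and $s(x) = \langle 1_A u, e_x\rangle$, which gives $|s(x)| \leqslant \ep^{-1}\|e_x\|_A\|s\|$ with no dimension factor at all. This is cleaner: it avoids the forward reference entirely and produces the slightly stronger prefactor $k^{n/2}$ in place of $k^{n-1/4}$ (and $k^{n/2}\leqslant k^{n-1/4}$ since $n\geqslant 1$), so the stated bound follows. Your handling of part~(1), splitting $K$ into its (compact, relatively open) intersections with $\op{int}A$ and $\op{int}A^c$ and applying the two symmetric bounds, matches the paper's. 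The only small omission, which the paper's proof shares, is that part~(2) as stated covers all $x$, so you should also record the symmetric estimate via $\|e_x\|_{A^c}$ for $x$ in the interior of $A$ (where $-\rho$ serves as defining function of $A^c$ and $\rho^2$ is unchanged); points of $\partial A$ are covered trivially by $|s(x)|\leqslant \|e_x\|\,\|s\| = \bigo(k^{n/2})\|s\|$.
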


\begin{proof} We first prove that for any $s \in \Hilb_k ( \ep)$, for any $x \in M$,  
\begin{gather} \label{eq:7}
 |s (x) | \leqslant \ep^{-1} \| s \|  \bigl(\dim \mathcal{H}_k(\ep) \bigr)^{1/2} \| e_x \|_A 
\end{gather}
where $e_x$ is the coherent state at $x$. Let us write $s = \sum_{\la} c_\la s_\la$ where each $s_\la$ is a normalized eigensection of $T_A$ with eigenvalue $\la$. Then $ \| s \|^2 = \sum |c_{\la}|^2$, so  
$$  |s (x) |^2 \leqslant \| s \|^2 \sum_\la |s_{\la} (x) |^2 \leqslant \| s \|^2  \bigl(\dim \mathcal{H}_k(\ep) \bigr) \sup_{\la} |s_\la (x) |^2 $$
and  $ |s_{\la} (x) |  = \bigl| \langle s_\la, e_x \rangle \bigr| = \la ^{-1} \bigl| \langle T_A s_\la, e_x \rangle \bigr|  \leqslant \ep^{-1} \bigl| \langle s_\la , 1_A e_x \rangle \bigr| \leqslant \ep^{-1} \| e_x \|_A $
which shows (\ref{eq:7}). It is now easy to conclude by applying Proposition \ref{prop:loc_et_coh} to $T_A$ for $x$ in the exterior of $A$, and to $T_{A^c}$ for $x$ in the interior of $A$. Furthermore the dimension of $\mathcal{H}_k ( \ep)$ is a $\bigo ( k^{n-1/2})$ as will be proved later, cf Remark \ref{rem:nombre_vap_facile}. 
\end{proof}

\begin{theo} \label{theo:trace_estim_plus}
Assume that $A$ has a smooth boundary. Then for any $p \in (0,1]$, there exists $C$ such that $d_k^{-1} \op{tr} (T_A^p ( 1-T_A)^p ) \leqslant C k^{-1/2}$.
\end{theo}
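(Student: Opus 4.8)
The plan is to bound $\op{tr}(T_A^p(1-T_A)^p)$ by a more tractable quantity that can be estimated using the coherent state norm bounds of Proposition \ref{prop:loc_et_coh}, via a Berezin--Lieb-type inequality as announced in the introduction. Since $0 \leqslant p \leqslant 1$, the function $t \mapsto t^p(1-t)^p$ is concave on $[0,1]$, and moreover for all $t \in [0,1]$ one has $t^p(1-t)^p \leqslant \min(t,1-t)^p \leqslant (t(1-t))^{p} \cdot$ (trivially); but more usefully $t^p (1-t)^p \leqslant t^p$ and $\leqslant (1-t)^p$, so $t^p(1-t)^p \leqslant \min(t,1-t)$ whenever $p\le 1$ is false in general — instead I will use $t^p(1-t)^p\le t^p \le t^{p}$ together with the companion bound on $1-t$. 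The cleanest route: write $t^p(1-t)^p \le \tfrac12(t^p + (1-t)^p)\cdot\min(t,1-t)^{?}$ — rather than fuss, I would simply use that on $[0,1]$, $t^p(1-t)^p \leqslant C_p\, t(1-t)^{?}$ is not scale-invariant, so instead bound $t^p(1-t)^p$ above by an affine-minus-quadratic comparison near the endpoints combined with the Berezin transform.

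Concretely, here is the step I would actually run. First, the covariant symbol (Berezin transform) of $T_A$ is the function $\sigma_k(x) = \langle T_A e_x, e_x\rangle / \|e_x\|^2 = \|e_x\|_A^2 / \|e_x\|^2$, which takes values in $[0,1]$; note $\|e_x\|^2 = \Pi_k(x,x)$. The Berezin--Lieb inequality states that for any convex function $\Phi:[0,1]\to\R$,
\begin{gather*}
\op{tr}\bigl(\Phi(T_A)\bigr) \leqslant \int_M \Phi(\sigma_k(x))\, \Pi_k(x,x)\, d\mu(x).
\end{gather*}
Apply this with $\Phi(t) = t^p(1-t)^p$, which is concave, not convex — so I must instead apply the \emph{other} Berezin--Lieb inequality, using the contravariant (Toeplitz) symbol $1_A$: for concave $\Phi$, $\op{tr}(\Phi(T_A)) \leqslant$ ... no — for concave $\Phi$ the inequality with the contravariant symbol goes $\op{tr}(\Phi(T_{f})) \geqslant \op{tr}(T_{\Phi(f)})$, the wrong direction. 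So the correct move is: bound the concave $\Phi(t)=t^p(1-t)^p$ from above by a function to which the \emph{convex} Berezin--Lieb inequality applies, or directly estimate $\int_M \sigma_k(x)^p(1-\sigma_k(x))^p \Pi_k(x,x)\,d\mu(x)$ after first reducing to it. The reduction I would use: since $t\mapsto t^p$ is concave and increasing for $p\le 1$, and $\op{tr}(T_A^p) \leqslant$ hard to control directly, I instead observe $t^p(1-t)^p \le \bigl(t(1-t)\bigr)^{\min(p,1)} = (t(1-t))^p$ and that $t\mapsto (t(1-t))^p$ — still concave. The genuinely safe path: use $t^p(1-t)^p\le t(1-t)$ on $[0,1]$ \emph{when $p\ge 1$}, which fails here; for $p\le 1$ use instead $t^p(1-t)^p \le 1$ trivially on a neighborhood of $1/2$ and $\le 2^{1-p}(t(1-t))^p$... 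Let me commit: I will prove $d_k^{-1}\op{tr}(T_A^p(1-T_A)^p) \le C k^{-1/2}$ by the chain
\begin{gather*}
\op{tr}\bigl(T_A^p(1-T_A)^p\bigr) \leqslant \op{tr}\bigl((T_A(1-T_A))^p\bigr) \leqslant \bigl(\op{tr}\, T_A(1-T_A)\bigr)^{p}\,\bigl(\dim\Hilb_k\bigr)^{1-p},
\end{gather*}
where the first inequality is operator monotonicity of $t\mapsto t^p$ combined with $T_A^p(1-T_A)^p \preceq (T_A(1-T_A))^p$ (which needs justification since $T_A$ and $1-T_A$ commute, so $T_A^p(1-T_A)^p = (T_A(1-T_A))^p$ exactly by the spectral theorem — no inequality needed!), and the second is Hölder's inequality for the trace (power mean / Jensen applied to the spectral measure). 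So in fact $\op{tr}(T_A^p(1-T_A)^p) = \op{tr}((T_A-T_A^2)^p) \leqslant (\op{tr}(T_A-T_A^2))^p (d_k)^{1-p}$ by concavity of $t\mapsto t^p$ and Jensen. Thus it suffices to show $\op{tr}(T_A - T_A^2) \leqslant C k^{n-1/2}$, since then $d_k^{-1}\op{tr}(T_A^p(1-T_A)^p) \le d_k^{-1}(Ck^{n-1/2})^p d_k^{1-p} = C^p d_k^{-p} k^{(n-1/2)p} \asymp C^p k^{-np}k^{(n-1/2)p} = C^p k^{-p/2}\le C' k^{-1/2}$ for $p\in(0,1]$ (using $k^{-p/2}\le k^{-1/2}$ fails for $p<1$!).

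So the final reduction must give $k^{-1/2}$, not $k^{-p/2}$; hence I need the sharper statement $d_k^{-1}\op{tr}(T_A^p(1-T_A)^p)\le C k^{-1/2}$ \emph{with exponent independent of $p$}, which forces me to estimate $\op{tr}((T_A-T_A^2)^p)$ directly rather than through Jensen. I would do this via the Berezin--Lieb inequality in its convex form: apply it to $\Phi(t) = (t(1-t))^p$ after noting that although $\Phi$ is concave, we have the elementary bound $(t(1-t))^p \leqslant t^p \wedge (1-t)^p$, and crucially $t^p(1-t)^p \le C_p\,\mathbf{1}_{[\delta,1-\delta]}(t) + C_p(\,t\mathbf{1}_{[0,\delta]} + (1-t)\mathbf{1}_{[1-\delta,1]}\,)$ — i.e. bound $\Phi$ above by an affine function of $T_A$ plus a characteristic function, exactly as in the Corollary after Theorem \ref{theo:prod_singular}. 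The cleanest: $t^p(1-t)^p \le t(1-t)\cdot C_p$ on $[\delta,1-\delta]$ and $t^p(1-t)^p\le t^p\le$ (something) near $0$. Then $\op{tr}(T_A^p(1-T_A)^p)\le C_p\op{tr}(T_A-T_A^2) + C_p\op{tr}(T_A^p\,\mathbf 1_{T_A<\delta}) + \ldots$, and the leftover pieces are controlled by $\dim\Hilb_k(\ep)$-type bounds. I therefore expect the main obstacle to be establishing $\op{tr}(T_A-T_A^2) = \op{tr}\bigl(\int_{A\times A^c}|\Pi_k(x,y)|^2\mu(x)\mu(y)\bigr) \leqslant C k^{n-1/2}$: using the Gaussian-decay bound \eqref{eq:decroissance}, this integral is $\le C k^{-n}\int_{A\times A^c}e^{-k|x-y|^2/C}\mu(x)\mu(y) + C_N k^{-N}$, and since the diagonal strip $\{|x-y|\lesssim k^{-1/2}\}$ near $\partial A$ has $\mu\otimes\mu$-measure $\asymp \op{vol}(\partial A)\cdot k^{-1/2}\cdot k^{-n}$ — no: the strip near $\partial A$ within distance $k^{-1/2}$ has volume $\asymp \op{vol}(\partial A) k^{-(2n-1)/2}$ in $M$, so the double integral is $\asymp k^{-n}\cdot \op{vol}(\partial A) k^{-(2n-1)/2}\cdot$... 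I'll carry out this Laplace-type volume estimate carefully (it is precisely the $p=1$ case of \eqref{eq:12} and follows from the generalized Laplace method the paper develops), obtaining $\op{tr}(T_A-T_A^2)\le C k^{n-1/2}$, and then assemble the pieces. The real work is this one singular-integral estimate near the smooth boundary; everything else is elementary operator-theoretic bookkeeping using commutativity of $T_A$ with $1-T_A$, the trivial bound $\dim\Hilb_k(\ep)=O(k^{n-1/2})$ (Remark \ref{rem:nombre_vap_facile}), and the endpoint inequalities $t^p(1-t)^p \le C_p\min(t,1-t)$ valid on $[0,1]$ for $p\le 1$ — wait, that last one \emph{is} true: for $t\le 1/2$, $t^p(1-t)^p \le t^p \le t\cdot(1/2)^{p-1}$ fails since $t^{p}\le t$ requires... $t^p\ge t$ for $t\le1$, $p\le1$! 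So $t^p(1-t)^p \ge$ both $t(1-t)$; the concave function dominates. Hence I genuinely must route through Berezin--Lieb on the \emph{covariant} symbol: $\op{tr}(T_A^p(1-T_A)^p)\le \int_M \sigma_k(x)^p(1-\sigma_k(x))^p\Pi_k(x,x)\,d\mu$ — no, wrong direction again for concave $\Phi$. The working inequality for concave $\Phi$ is $\op{tr}(\Phi(T_{1_A}))\le \int_M \Phi(1_A(x))\Pi_k(x,x)\,d\mu = 0$ since $\Phi(0)=\Phi(1)=0$ — also useless. I conclude the correct and intended argument is: $\op{tr}(T_A^p(1-T_A)^p) = \op{tr}((T_A-T_A^2)^p)$ exactly, then Hölder in the form $\op{tr}(B^p)\le (\op{tr} B)^p (\op{rank} B)^{1-p}$ for $0\le B\le 1$ gives $\le (Ck^{n-1/2})^p\,(Ck^{n-1/2})^{1-p} = Ck^{n-1/2}$, provided one also knows $\op{rank}$ or rather $\op{tr}\,\mathbf 1_{\{0<T_A-T_A^2\}}\le$ essentially $\dim\Hilb_k(\ep) + $ (count near $0,1$) $= O(k^{n-1/2})$; combining the bound $\op{tr}(T_A-T_A^2)\le Ck^{n-1/2}$ with $\dim\Hilb_k(\ep)\le Ck^{n-1/2}$ via Hölder yields the claim with the $p$-independent exponent $n-1/2$. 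Thus the two inputs are (i) the singular integral estimate $\op{tr}(T_A-T_A^2)\le Ck^{n-1/2}$ — the main obstacle, done by the Laplace method near $\partial A$ — and (ii) $\dim\Hilb_k(\ep)=O(k^{n-1/2})$ from Remark \ref{rem:nombre_vap_facile}, together with the elementary splitting of $\op{sp}(T_A)\cap(0,1)$ into $[\ep,1-\ep]$ and the endpoint regions where $t(1-t)\ge c\cdot t^p(1-t)^p$ fails but $\op{tr}$ of the small pieces is handled by (i) directly.
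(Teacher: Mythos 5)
There is a genuine gap: the argument you finally commit to does not close. You propose $\op{tr}\bigl((T_A-T_A^2)^p\bigr)\leqslant \bigl(\op{tr}(T_A-T_A^2)\bigr)^p\,R^{1-p}$ with $R$ an effective rank of order $k^{n-1/2}$. But $T_A-T_A^2$ has full rank $d_k\asymp k^n$: as noted in Section 3 of the paper, when $M$ is connected and $A$, $A^c$ have nonempty interiors, $0$ and $1$ are never eigenvalues of $T_A$; the eigenvalues only accumulate at the endpoints. More fundamentally, the two spectral inputs you want to use --- $\op{tr}(T_A-T_A^2)\leqslant Ck^{n-1/2}$ and $\bigl|\op{sp}(T_A)\cap[\ep,1-\ep]\bigr|=\bigo(k^{n-1/2})$ --- simply do not determine $\op{tr}\bigl((T_A-T_A^2)^p\bigr)$ up to $\bigo(k^{n-1/2})$ when $p<1$. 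For instance, $\asymp k^n$ eigenvalues each equal to $k^{-1/2}$ are consistent with both constraints, yet give $\sum\la^p(1-\la)^p\asymp k^{n-p/2}\gg k^{n-1/2}$. So no bookkeeping from these two facts alone can produce the $p$-independent exponent; your own earlier computation ($k^{-p/2}$ instead of $k^{-1/2}$) was the honest outcome of this route, and the ``count near $0,1$'' you invoke to repair it is of order $k^n$, not $k^{n-1/2}$.

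The route you considered and discarded as ``wrong direction again for concave $\Phi$'' is in fact the correct one, and it is what the paper does. For the lower (covariant) symbol $\si_k(x)=\langle T_Af_x,f_x\rangle=\|f_x\|_A^2$ with $f_x=e_x/\|e_x\|$, Jensen's inequality in an eigenbasis of $T_A$ gives, for concave $\Phi$, $\langle\Phi(T_A)f_x,f_x\rangle\leqslant\Phi(\si_k(x))$, hence $\op{tr}\Phi(T_A)=\int_M\langle\Phi(T_A)f_x,f_x\rangle\,\Pi_k(x,x)\,d\mu\leqslant\int_M\Phi(\si_k(x))\,\Pi_k(x,x)\,d\mu$ --- the inequality goes the way you need. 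Concretely the paper uses $h(t)=t^p(1-t)^p\leqslant t^p$ and concavity of $t\mapsto t^p$ to get $\langle h(T_A)f_x,f_x\rangle\leqslant\|f_x\|_A^{2p}$, and symmetrically $\leqslant\|f_x\|_{A^c}^{2p}$. The decisive point you are missing is that the power $p$ then acts on the \emph{pointwise} quantity $\|f_x\|_A^2\leqslant Ce^{-k\rho(x)^2/C}+\bigo(k^{-\infty})$ \emph{before} integration: raising a Gaussian to the power $p$ only changes the constant in the exponent, so the integral over $A^c$ (resp. $A$) of $\|f_x\|_A^{2p}$ (resp. $\|f_x\|_{A^c}^{2p}$) is still $\bigo(k^{-1/2})$ after the $\sqrt k$ rescaling transverse to $\partial A$, with no loss in the exponent. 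Your step (i), the bound $\op{tr}(T_A-T_A^2)\leqslant Ck^{n-1/2}$ via the Gaussian decay of $\Pi_k$ near $\partial A$, is correct, but it is only the $p=1$ case; the actual difficulty of the theorem is passing to $p<1$ without degrading $k^{-1/2}$ to $k^{-p/2}$, and that is exactly where your assembly breaks.
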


\begin{proof} 
Introduce the normalized coherent state $f_x = e_x / \|e_x \|$. Since $\|e_x\|^2 = \Pi(x,x)$, we have for any endomorphism $S$ of $\mathcal{H}_k$
$$ d_k^{-1} \op{tr} S = d_k^{-1} \int_M \langle S e_x , e_x \rangle \; \mu (x) \leqslant C  \int_M \langle S f_x, f_x \rangle \; \mu (x)$$
by using (\ref{eq:diag}). Let us apply this to $S = h(T_A)$ with $h(x) = x^p (1-x)^p$. Since $h(x) \leqslant x^p$ on $[0,1]$, we have $\langle h(T_A) f_x, f_x \rangle \leqslant \langle T_A^p f_x , f_x \rangle $. Writing $f_x$ in an orthonormal eigenbasis of $T_A$ and using that the function $x^p$ is concave for $0<p<1$, we get that $\langle T_A^p f_x, f_x \rangle \leqslant (\langle T_A f_x , f_x \rangle)^p  = \| f_x \|_A^{2p}$. Now if $x\notin A$, we have by Proposition \ref{prop:loc_et_coh} and (\ref{eq:diag}) that$$ \| f_x \|_A^{2p} \leqslant C e^{ - k \rho (x)^2/C}  + \bigo ( k^{-\infty}) $$   
Integrating this inequality over $A^c$ in local coordinates and doing a rescaling by a factor $\sqrt{k}$, we get that
$$ \int_{A^c} \langle h(T_A) f_x, f_x \rangle \leqslant C k^{-1/2}$$
Arguing similarly, we have $h(x) \leqslant (1-x)^p$ on $[0,1]$ and $1-T_A = T_{A^c}$ so that $ \langle h(T_A) f_x, f_x \rangle \leqslant \| f_x \|_{A^c}^{2p}$ and by applying Proposition \ref{prop:loc_et_coh} to $A^c$
$$ \int_{A} \langle h(T_A) f_x, f_x \rangle \leqslant C k^{-1/2}$$
which concludes the proof.
\end{proof}

\begin{cor} Assume that $A$ has a smooth boundary. Then for any $0 < \ep < 1$ and $N \in \N$, we have 
\begin{gather*}
  d_{k}^{-1} \bigl| \op{sp} (T_A) \cap [0, k^{-N}] \bigr| = \nu (A^c) + \bigo(k^{-1/2+\ep}) , \\ 
d_{k}^{-1} \bigl| \op{sp} (T_A) \cap [k^{-N},1-k^{-N}]\bigr| =  \bigo(k^{-1/2+\ep}), \\ 
 d_{k}^{-1} \bigl| \op{sp} (T_A) \cap [1 - k^{-N}, 1] \bigr|  = \nu (A) + \bigo(k^{-1/2+\ep}) .
\end{gather*}
\end{cor}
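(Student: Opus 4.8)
The plan is to deduce the three estimates from Theorem \ref{theo:trace_estim_plus} by a Chebyshev-type argument, exactly as in the proof of the previous corollary but with the improved $k^{-1/2}$ bound. Write $I_{\text{mid}} = [k^{-N}, 1 - k^{-N}]$ and let $h_p(x) = x^p(1-x)^p$. On the interval $[0,1]$ one has $h_p(x) \geqslant (k^{-N}(1-k^{-N}))^p \geqslant c_p k^{-Np}$ for $x \in I_{\text{mid}}$, hence $1_{I_{\text{mid}}}(x) \leqslant c_p^{-1} k^{Np} h_p(x)$. Applying the trace and Theorem \ref{theo:trace_estim_plus} gives
$$ d_k^{-1} \bigl| \op{sp}(T_A) \cap I_{\text{mid}} \bigr| \leqslant c_p^{-1} k^{Np} \, d_k^{-1} \op{tr}\bigl( h_p(T_A) \bigr) \leqslant C' k^{Np - 1/2}. $$
This is useless as it stands because of the factor $k^{Np}$; the point is to take $p$ small. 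Fix $\ep \in (0,1)$; choosing $p = \ep/(2N)$ (and noting $p \leqslant 1$ for $N \geqslant 1$, while the case $N=0$ is trivial since $[1,0]$ is empty... more precisely one only needs $N \geqslant 1$, and for the stated range one takes $N$ large, $p$ correspondingly small) yields $Np - 1/2 = -1/2 + \ep/2 < -1/2 + \ep$, so that
$$ d_k^{-1} \bigl| \op{sp}(T_A) \cap [k^{-N}, 1 - k^{-N}] \bigr| = \bigo(k^{-1/2 + \ep}), $$
which is the middle estimate.

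For the two outer estimates, combine the middle one with the dimension count and the trace of $T_A$. We know $d_k^{-1}\op{tr}(T_A) = \nu(A) + \bigo(k^{-1})$ from (\ref{eq:prop_Toeplitz_trac}), and $\op{sp}(T_A) \subset [0,1]$ partitions as $[0,k^{-N}] \cup I_{\text{mid}} \cup [1-k^{-N},1]$. Writing $n_0, n_{\text{mid}}, n_1$ for the (normalized, i.e.\ divided by $d_k$) eigenvalue counts in these three intervals, one has $n_0 + n_{\text{mid}} + n_1 = 1$ and, estimating $d_k^{-1}\op{tr}(T_A)$ by summing eigenvalues, $d_k^{-1}\op{tr}(T_A) \in [\, (1 - k^{-N}) n_1 , \; k^{-N} n_0 + n_{\text{mid}} + n_1 \,]$, hence $n_1 = \nu(A) + \bigo(k^{-N}) + \bigo(n_{\text{mid}}) = \nu(A) + \bigo(k^{-1/2+\ep})$. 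Then $n_0 = 1 - n_1 - n_{\text{mid}} = 1 - \nu(A) + \bigo(k^{-1/2+\ep}) = \nu(A^c) + \bigo(k^{-1/2+\ep})$, using $\nu(A) + \nu(A^c) = 1$ (the boundary of $A$ being Lebesgue-null since it is smooth).

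There is essentially no obstacle here: the entire content is packaged in Theorem \ref{theo:trace_estim_plus}, whose proof (via the Berezin--Lieb inequality and the off-diagonal Bergman kernel decay of Proposition \ref{prop:loc_et_coh}) is the genuinely substantive input. The only mild subtlety worth stating carefully is the interplay between $N$ and $p$: since $h_p$ only controls the middle interval down to scale $k^{-Np}$, one must let $p \to 0$ to push the cutoff down to $k^{-N}$, and this is exactly where the arbitrary $\ep > 0$ loss enters. One should also note that the argument in fact works for any cutoff sequence going to $0$ polynomially, and that the constant $C$ in the $\bigo$ depends on $\ep$ and $N$ but not on $k$.
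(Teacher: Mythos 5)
Your proof is correct and takes essentially the same route as the paper: both arguments reduce to Theorem \ref{theo:trace_estim_plus} applied to $h_p(x)=x^p(1-x)^p$ with $p$ of order $\ep/N$, so that the Chebyshev-type loss $k^{Np}$ stays below $k^{\ep}$. The only (cosmetic) difference is that the paper minorizes $1_{[0,k^{-N}]}$ by $1-x-\al h_p(x)$ with $\al=\la^{-p}(1-\la)^{1-p}$ to bound the two outer counts from below, while you majorize $1_{[k^{-N},1-k^{-N}]}$ to bound the middle count from above and then recover the outer counts from $\op{tr}(T_A)$ and the total dimension.
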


\begin{proof} Let $\la \in [0,1]$ and $p \in (0,1)$. Define $\al := \la^{-p} ( 1- \la ) ^{1-p}$ and $h_p (x) := x^p (1-x)^p$. Then one proves that $$f(x) := 1-x  - \al h_p(x)$$ satisfies $f(x) \leqslant 1_{[0,\lambda]}$ on $[0,1]$. So $\op{tr} f(T_A) \leqslant  \bigl| \op{sp} (T_A) \cap [0,\lambda] \bigr|$. By Theorem \ref{theo:trace_estim_plus}, 
$$ d_k^{-1}  \op{tr} f(T_A) = \nu (A^c) + \bigo ( k^{-1}) + \al \bigo ( k^{-1/2}).$$
Now choose $\la = k^{-N}$ so that $\al k^{-1/2} \sim k^{-1/2 +pN} = \bigo ( k^{-1/2 + \ep})$ if $pN < \ep$. So working with the convenient $p$, we get that 
$$ d_{k}^{-1} \bigl| \op{sp} (T_A) \cap [0, k^{-N}] \bigr| \geqslant \nu (A^c) + \bigo(k^{-1/2+\ep})$$
By the same argument applied to $A^c$ we have
$$  d_{k}^{-1} \bigl| \op{sp} (T_A) \cap [1 - k^{-N}, 1] \bigr|  \geqslant \nu (A) + \bigo(k^{-1/2+\ep}) $$
which concludes the proof.  
\end{proof}

\begin{rem} \label{rem:nombre_vap_facile}
We also have that for any $\ep \in (0,1/2)$, 
$$ \bigl| \op{sp} ( T_A) \cap [\ep, 1- \ep ] \bigr|  = \bigo ( k^{n-1/2})$$
by essentially the same proof with $p=1$. Of course, we will do much better later by giving an equivalent. 
\end{rem}

\section{Degenerate stationary phase} 

Consider the following integral
$$ I_k = \int_D e^{-k \varphi (t,s)} a(t,s) \; dt \; ds, $$
where $(t,s) \in \R^{n} \times \R^p$, $a: \R^{n+p} \rightarrow \C$ is a smooth compactly supported function and 
\begin{itemize} 
\item $D$ is measurable subset of $\R^{n+p}$ which is conic ($\forall \la >0$, $\la D  = D$) and satisfies $| s | \leqslant C | t|$ for any $(s,t) \in D$, with $C$ independent of $(s,t)$. 
\item $\varphi : \R^{n+p} \rightarrow \C$ is a smooth function such that $\varphi$ and $d \varphi$ vanish along $L = \{ (t,s)\in \R^{n+p} / t=0\}$. The real part $\varphi^r$ of $\varphi$ satisfies $\varphi^r(t,s) >0 $ if $t \neq 0$. Furthermore $( \partial_{t_i} \partial_{t_j} \varphi^r (0,0))_{i,j =1, \ldots n}$ is a positive definite matrix. 
\end{itemize}
Finally $k$ is any positive number. We will describe the asymptotic behaviour of $I_k$ in the large $k$ limit.

\begin{theo} \label{theo:degen-stat-phas}
$I_k$ has the following asymptotic expansion in the limit where $k \rightarrow \infty$: for any $N \in \N$, 
\begin{gather} \label{eq:dev_as}
 I_k = k^{ - (n+p)/2} \Biggl( \sum_{\ell =0 }^{N} b_{\ell} \; k^{-\ell /2} + \bigo \Bigl( k^{ -(N+1)/2}\Bigr) \Biggr)
\end{gather}
where the $b_{\ell}$ are complex numbers, the leading coefficient being given by
\begin{gather} \label{eq:lead_def}
 b_0 = a(0,0) \int_{D} e^{ -q(t) } \; dt \; ds
\end{gather}
where $q$ is the quadratic form 
$q(t) = \tfrac{1}{2} \sum_{i,j} \partial_{t_i} \partial_{t_j} \varphi (0,0) t_i t_j$.
\end{theo}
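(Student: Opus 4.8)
The plan is to reduce the integral to a product of a genuine stationary-phase integral in the degenerate variables $t$ and a homogeneous integral in the whole set of variables, by a carefully chosen change of variables. First I would localize near $L$: since $\varphi^r(t,s)>0$ for $t\neq 0$ and $a$ is compactly supported, the contribution of the region $\{|t|\geqslant \eta\}$ (intersected with $D$) is $\bigo(e^{-ck})$ for some $c>0$, because on the support of $a$ away from $L$ we have $\varphi^r\geqslant \delta>0$; here the cone condition $|s|\leqslant C|t|$ is used to ensure that $t=0$ forces $(t,s)\in L$ and conversely that $L\cap\op{supp}(a)$ is compact, so a tubular neighbourhood of $L$ of the form $\{|t|<\eta\}$ suffices. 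Hence up to exponentially small error $I_k$ equals the same integral over $D\cap\{|t|<\eta\}$.

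Next I would perform the Morse-type normalization in the $t$-variables with $s$ as a parameter. Since $\varphi$ vanishes together with $d\varphi$ along $L=\{t=0\}$ and the $t$-Hessian $(\partial_{t_i}\partial_{t_j}\varphi^r(0,0))$ is positive definite (so, by continuity, $(\partial_{t_i}\partial_{t_j}\varphi(t,0))$ stays nondegenerate with positive-definite real part for $|s|$ small and $t$ small), the holomorphic Morse lemma with parameters gives a smooth family of diffeomorphisms $t\mapsto u=u(t,s)$, defined for $(t,s)$ near $L$, with $u(0,s)=0$, $Du(0,s)=\op{Id}+\bigo(s)$, such that $\varphi(t,s)=q_s(u)$ where $q_s$ is a quadratic form in $u$ depending smoothly on $s$ with $q_0=q$. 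The key subtlety is that this change of variables must be compatible with the domain: $D$ being a cone contained in $\{|s|\leqslant C|t|\}$, and $u$ being a small perturbation of $t$ near $t=0$, the image $\widetilde D=\{(u,s):(t(u,s),s)\in D\}$ is, up to the $\{|t|<\eta\}$ truncation, again essentially conic and contained in $\{|s|\leqslant C'|u|\}$; one controls the discrepancy between $D$ and $\widetilde D$ by the fact that the Jacobian correction is $1+\bigo(|s|)=1+\bigo(|u|)$ on $\widetilde D$, so the boundary of $\widetilde D$ differs from that of a cone by a relatively $\bigo(|u|)$ amount, contributing only lower-order terms in the expansion.

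After the substitution $I_k$ becomes, modulo $\bigo(k^{-\infty})$, an integral $\int_{\widetilde D\cap\{|u|<\eta'\}} e^{-k q_s(u)}\,\tilde a(u,s)\,du\,ds$ with $\tilde a$ smooth, $\tilde a(0,0)=a(0,0)$, and $q_s$ quadratic in $u$ with uniformly positive-definite real part. I would then rescale $u=k^{-1/2}v$, $s=k^{-1/2}w$: since $\widetilde D$ is (essentially) a cone, $(u,s)\in\widetilde D \iff (v,w)\in\widetilde D$ (invariance under the rescaling, up to the boundary corrections noted above), the domain is unchanged, and the integral becomes $k^{-(n+p)/2}\int_{\widetilde D} e^{-k q_{k^{-1/2}w}(k^{-1/2}v)}\tilde a(k^{-1/2}v,k^{-1/2}w)\,dv\,dw$. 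Now $k\,q_{k^{-1/2}w}(k^{-1/2}v)=q_0(v)+k^{-1/2}(\cdots)+\ldots$ expands in powers of $k^{-1/2}$ with coefficients that are polynomials in $(v,w)$; Taylor-expanding both the phase correction and $\tilde a(k^{-1/2}v,k^{-1/2}w)$ and using the Gaussian decay $e^{-q_0(v)}$ together with the cone bound $|w|\leqslant C'|v|$ to justify integrating the expansion term by term (each remainder being controlled by a Gaussian moment integral over $\widetilde D$, which converges because the integrand decays like $e^{-\op{Re}q_0(v)}$ and $|w|$ is controlled by $|v|$), one gets precisely the expansion $\eqref{eq:dev_as}$ in powers of $k^{-1/2}$. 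The leading term is $k^{-(n+p)/2}\,\tilde a(0,0)\int_{\widetilde D} e^{-q_0(v)}\,dv\,dw = k^{-(n+p)/2}\,a(0,0)\int_{D} e^{-q(t)}\,dt\,ds$, the last equality because $\widetilde D$ and $D$ agree up to a measure-zero / lower-order boundary discrepancy and $q_0=q$, which is $\eqref{eq:lead_def}$.

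I expect the main obstacle to be the bookkeeping around the domain: justifying that the Morse change of variables in $t$ (with $s$ a parameter) maps the truncated cone $D\cap\{|t|<\eta\}$ onto a set that, after the $k^{-1/2}$ rescaling, is close enough to a fixed cone that (i) the rescaled integrals converge uniformly in $k$ and (ii) the $\bigo(|s|)$ Jacobian and boundary perturbations genuinely contribute only to the subleading coefficients $b_\ell$ with $\ell\geqslant 1$ and not to $b_0$. The analytic estimates (exponential smallness away from $L$, convergence of Gaussian moments over a cone, Taylor remainder bounds) are routine once the geometry of the deformed domain is pinned down; the cone hypothesis $|s|\leqslant C|t|$ is exactly what makes the $s$-direction behave as "non-degenerate of weight $k^{-1/2}$" rather than contributing a full $k^{-1}$ each, and keeping track of this weighting consistently through the change of variables is the delicate point.
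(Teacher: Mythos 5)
Your localization step is fine and your heuristic for why $s$ carries weight $k^{-1/2}$ is the right one, but the core reduction has two genuine gaps, the first of which is fatal as written. Your plan hinges on a ``holomorphic Morse lemma with parameters'' producing a \emph{real} change of variables $t\mapsto u(t,s)$ with $\varphi(t,s)=q_s(u)$. But $\varphi$ is complex-valued (and genuinely so in the application: the off-diagonal Hessian blocks are $\tfrac{i}{2}\om-\tfrac{1}{2}g$), and a real diffeomorphism cannot reduce a complex-valued phase to a quadratic form. Already in one variable, $\varphi(t)=t^2/2+it^3$ cannot equal $c\,u(t)^2$ with $u$ real: writing $u(t)=at+bt^2+\ldots$ forces $ca^2=1/2$, hence $c>0$, and then the cubic coefficient $2cab$ is real and cannot equal $i$. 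The complex Morse lemma (Melin--Sj\"ostrand) would require complex changes of variables and contour deformation via almost-analytic extensions, which is incompatible with integrating over the sharp cone $D$ with boundary. Second, even granting a change of variables, your treatment of the domain is not yet a proof: to obtain the \emph{complete} expansion in powers of $k^{-1/2}$ you must expand the contribution of the discrepancy region $\widetilde D\,\triangle\,D$ (a layer of relative thickness $\bigo(|t|^2)$ around $\partial D$) to all orders, not merely observe that it is $\bigo(k^{-1/2})$ relative to the leading term. This is exactly the bookkeeping you defer, and it does not become routine.

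The paper's proof avoids both problems by not changing variables at all. Writing $\varphi=q+r$ with $r$ vanishing to third order at the origin, one puts $e^{-k\varphi}a=e^{-kq}\bigl(e^{-kr}a\bigr)$ and expands $e^{-kr}=\sum_{\ell\leqslant N}(-kr)^\ell/\ell!$ plus a remainder \emph{inside the amplitude}. The cone condition gives $(|t|+|s|)^2\leqslant C\op{Re}q(t)$ and $|r|=\bigo(|t|^3)$ on $D$, hence $|r|\leqslant \op{Re}q/2$ near $0$, which controls the remainder; and since $D$ is exactly conic, the model integrals $\int_D e^{-kq(t)}t^\al s^\be\,dt\,ds=k^{-(|\al|+|\be|+n+p)/2}I(\al,\be)$ are evaluated by exact scaling, with no boundary corrections and no Morse lemma. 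You should replace your change-of-variables step by this device; the rest of your outline (localization, Gaussian moment bounds over the cone, term-by-term integration) then goes through essentially as you describe.
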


For $p =0$ and $D = \R^n$, the result is well-known and is sometimes called the Laplace method. In our applications, we will always have $p=1$ and  $D$ will be defined by inequalities $l_1(t,s) \geqslant 0, \ldots , l_m (t,s) \geqslant 0$ where $l_1, \ldots, l_m$ are linear form of $\R^{n+p}$. 

The integral defining the leading order term in (\ref{eq:lead_def}) may be rewritten as follows 
$$  \int_{D} e^{-q(t) } \;dt \;ds = \int_{\R^{n}} e^{ -q(t) } v(t) \; dt $$ 
where $v(t) = \int_{D_t} ds $ is the volume of $D_t = \{ s / (s,t) \in D \}$. 
Since $|s| \leqslant C|t|$ on $D$, $D_t$ is bounded so $v(t)$ is finite. Since $D$ is conic, $v( \la t ) = \la^p v (t)$ for all positive $\la$. The real part of $q$ is positive definite by assumption, so the integral of $ve^{-q}$ converges.

Before we prove the theorem, let us explain how we can compute the other coefficients $b_{\ell}$. First, for any multi-indices $\al \in \N^n$, $\be \in \N^p$, we have
\begin{gather} \label{eq:integral}
 \int_D e^{-k q(t) } t^\al s^{\be} \; dt \; ds = k^{-( |\al| +|\be| + n +p )/2} I(\al, \be) 
\end{gather}
where 
$$I(\al, \be) = \int_D e^{ - q(t)} t^\al s^\be \; dt \; ds = \int_{\R^n} e^{-q(t)} t^\be v_{\al} (t) \; dt$$
with $v_{\al} (t) = \int_{D_t} s^\al \; ds$. Observe that $v_{\al} (\la t) = \la^{p+ |\al|} v_{\al} (t)$, so that these integrals converge.

Second, let $r(s,t) = \varphi (s,t) - q(t)$. We claim that $r$ vanishes to third order at the origin. Indeed, since $\varphi$ and $d \varphi$ vanish along $\{ t = 0 \}$, we have $\partial_{s_i} \partial_{s_j} \varphi (0,0) =0$ and $\partial_{s_i} \partial_{t_j} \varphi (0,0) =0$. So $q$ is the quadratic part of $\varphi$ at the origin. 

Now write $e^{-k \varphi} a= e^{-k q} e^{ -k r } a$. Then replacing in this expression $r$ and $a$ by their Taylor expansions at the origin and $e^{-kr}$ by the series $\sum_{\ell} (-k r)^\ell /\ell!$, we get a series of the form 
$$ e^{-k \varphi(t,s)} a(t,s) \equiv e^{-k q(t) } \sum_{m, \al, \be} a_{m, \al, \be } k^{-m} t^\al s^\be$$
where the coefficients $a_{m, \al, \be}$  are complex numbers. Now we can compute formally the integral $I_k$ by using (\ref{eq:integral})
\begin{xalignat}{2} \notag 
I_k & \equiv   k^{-(n+p)/2} \sum _{m, \al, \be} a_{m, \al, \be } I(\al, \be) k^{-m - (|\al|+ |\be|)/2} \\ \label{eq:exp}
& \equiv   k^{-(n+p)/2} \sum_{\ell \in \Z}  k^{-\ell/2} \sum _{2 m + |\al|+|\be| =\ell } a_{m, \al, \be } I(\al, \be) .
\end{xalignat}
Because $r$ vanishes to third order at the origin, all the $a_{m, \al, \be} $ are zero as soon as $2 m + |\al|+|\be| <0$. So we can restricts the first sum in (\ref{eq:exp}) to $\ell \in \N$. For the same reason, the second sum in (\ref{eq:exp}) is actually finite for any $\ell$. This formal computation gives the correct asymptotic expansion by the following theorem. 

\begin{theo} \label{theo:coefficients}
The coefficient $b_\ell$ in the expansion (\ref{eq:dev_as}) is given by
\begin{gather} \label{eq:coef_b_ell} 
 b_{\ell} = \sum_{2m + |\al | + | \be | = \ell } a_{m, \al, \be} I(\al, \be) .
\end{gather}
\end{theo}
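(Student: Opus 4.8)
\textbf{Proof proposal for Theorem \ref{theo:coefficients}.}

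The plan is to turn the formal manipulations preceding the statement into rigorous ones by controlling all the error terms. The core identity is $e^{-k\varphi} a = e^{-kq}\cdot e^{-kr}\cdot a$ with $r = \varphi - q$ vanishing to third order along $L$ (more precisely, $r$ and $dr$ vanish on $L$ and, since $q$ absorbs the $t$-Hessian, $r$ vanishes to order $3$ at the origin; along $L$ it vanishes to order $2$ in $t$). First I would fix $N$ and perform a finite Taylor expansion: write $a(t,s) = \sum_{|\al|+|\be| \leqslant K} a_{\al,\be} t^\al s^\be + a_{\text{rem}}(t,s)$ with $a_{\text{rem}} = O\big((|t|+|s|)^{K+1}\big)$, and similarly expand $r$ to a high order $K'$, and truncate the exponential series $e^{-kr} = \sum_{\ell=0}^{L}(-kr)^\ell/\ell! + E_L$ where the remainder $E_L$ is bounded via the integral form of the Taylor remainder. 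Multiplying these finite expansions out and collecting monomials $t^\al s^\be$ with a power $k^{-m}$ produces exactly the coefficients $a_{m,\al,\be}$ appearing in the statement, together with a remainder term.

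The main points to verify are then: (i) the ``principal part'' — the finite sum $k^{-(n+p)/2}\sum_{2m+|\al|+|\be|\leqslant \ell_0} a_{m,\al,\be} I(\al,\be) k^{-(2m+|\al|+|\be|)/2}$ obtained by replacing $e^{-kq}t^\al s^\be$ on $D$ by its exact value $k^{-(|\al|+|\be|+n+p)/2}I(\al,\be)$ via \eqref{eq:integral} — is genuinely the asymptotics up to the claimed order; and (ii) every discarded term is $O(k^{-(n+p)/2 - (\ell_0+1)/2})$. For (ii) I would use three estimates, all of the same flavour: (a) for the Taylor remainder $a_{\text{rem}}$ of $a$, since $r$ vanishes to order $3$ we have $e^{-kr}$ comparable to $1$ on a neighbourhood of the origin after a rescaling, and on $D$ with $|s|\leqslant C|t|$ the integral $\int_D e^{-kq(t)}(|t|+|s|)^{K+1}\,dt\,ds$ is $O(k^{-(n+p+K+1)/2})$ by the substitution $t = u/\sqrt k$, $s = v/\sqrt k$ (which preserves $D$ since $D$ is conic) — here one also needs that, away from a neighbourhood of the origin, $e^{-k\varphi^r}$ is exponentially small because $\varphi^r > 0$ for $t\neq 0$ and $a$ is compactly supported, so the tail of $D$ contributes $O(k^{-\infty})$; (b) the truncation remainder $E_L$ of the exponential series is handled identically once $L$ is large enough that $(kr)^{L+1}/(L+1)! \cdot e^{-kq}$, combined with $|r|\lesssim |t|^3 + \text{(cross terms)}$ near $0$, gives after rescaling a gain of $k^{-(L+1)/2}$ or better; (c) the ``tail'' of the Taylor expansion of $r$ inside each power $(-kr)^\ell$ contributes monomials whose total scaling exponent $2m+|\al|+|\be|$ exceeds $\ell_0$, hence each is $O(k^{-(n+p)/2-(\ell_0+1)/2})$ by \eqref{eq:integral} and the convergence of $I(\al,\be)$.

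The one genuinely delicate point — and I expect this to be the main obstacle — is the bookkeeping that shows the scaling exponent $2m+|\al|+|\be|$ is bounded below appropriately so that only finitely many triples contribute to each $b_\ell$ and the errors really do have the claimed order. This is exactly where the third-order vanishing of $r$ is used: in each factor of $r$ one picks up at least a cubic monomial in $(t,s)$ (or a quadratic-in-$t$ one along $L$, but the relevant constraint is $|s|\leqslant C|t|$ on $D$, which makes $|s|$-powers count like $|t|$-powers for the rescaling), so a factor $k^1$ from $e^{-kr}$ is always accompanied by at least $|t|^3$, i.e. by a rescaling gain $k^{-3/2}$, for a net $k^{-1/2}$; iterating, $(-kr)^\ell$ produces net scaling $\geqslant \ell$, which is what makes \eqref{eq:exp} a genuine asymptotic series rather than a formal one. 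Once this combinatorial lemma is in place, matching the definition $a_{m,\al,\be}$ to the coefficient of $k^{-m}t^\al s^\be$ in the product expansion, and invoking Theorem \ref{theo:degen-stat-phas} to know an expansion of the form \eqref{eq:dev_as} exists in the first place (so that uniqueness of asymptotic expansions identifies the $b_\ell$), gives \eqref{eq:coef_b_ell} and completes the proof.
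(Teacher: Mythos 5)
Your proposal follows essentially the same route as the paper: truncate $e^{-kr}$ via the elementary bound on the exponential remainder, Taylor-expand the resulting amplitude, evaluate each monomial through the exact scaling identity (\ref{eq:integral}), and control every remainder using the conic structure of $D$, the bound $|s|\leqslant C|t|$, the estimate $\operatorname{vol}(D_t)=\bigo(|t|^p)$, and the third-order vanishing of $r$, exactly as in Lemmas \ref{lem:support}--\ref{lem:taylor_expansion}. The only wording to sharpen is your claim that ``$e^{-kr}$ is comparable to $1$ on a neighbourhood of the origin after a rescaling'': it is not uniformly $\bigo(1)$ there; what one actually proves (and what the paper proves) is $k|r|\leqslant kq(t)/2$ on $V\cap D$, so that $e^{-kq}e^{k|r|}\leqslant e^{-kq/2}$ still yields Gaussian decay.
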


\begin{rem} \label{rem:symetrie}
If we replace $D$ with $-D$, then $I(\al, \be)$ is multiplied by $(-1)^{|\al | + |\be|}$ so $b_{\ell}$ is multiplied by $(-1)^{\ell}$. So if $D$ is symmetric with respect to the origine ($D = -D$), then for any odd $\ell$, $b_{\ell} =0$. \qed
\end{rem}

We now procced to the proof of Theorems \ref{theo:degen-stat-phas} and \ref{theo:coefficients}.
\begin{lemma} \label{lem:support}
If $a$ is zero on a neighborhood of the origin, then $I_k = \bigo( e^{-k/C})$.
\end{lemma}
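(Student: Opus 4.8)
\textbf{Proof plan for Lemma \ref{lem:support}.}
The plan is to exploit the positivity of $\varphi^r$ away from $L = \{t=0\}$ together with the conic constraint $|s|\leqslant C|t|$ on $D$, which forces the distance from any point of $\op{supp}(a)\cap D$ to $L$ to be comparable to $|t|$, hence bounded below on the support of $a$. First I would fix $\delta>0$ so that $a$ vanishes on the ball $B(0,\delta)$. On $D$ we have $|(t,s)|^2 = |t|^2+|s|^2 \leqslant (1+C^2)|t|^2$, so any point of $D$ with $|(t,s)|\geqslant \delta$ satisfies $|t| \geqslant \delta/\sqrt{1+C^2} =: \delta'$. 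Thus on $D\cap \op{supp}(a)$ we have $|t|\geqslant \delta'$.

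Next I would produce a lower bound $\varphi^r(t,s) \geqslant 1/C'$ on $D\cap\op{supp}(a)$. Since $\op{supp}(a)$ is compact and contained in a fixed bounded region, the set $K := \op{supp}(a)\cap D \cap \{|t|\geqslant \delta'\}$ is a bounded measurable set; its closure $\overline{K}$ is compact and still satisfies $|t|\geqslant\delta'$ on it, in particular $\overline{K}\cap L = \emptyset$. The hypothesis says $\varphi^r$ is smooth with $\varphi^r(t,s)>0$ whenever $t\neq 0$, so $\varphi^r$ is continuous and strictly positive on the compact set $\overline{K}$, hence bounded below there by some constant $1/C'>0$. (One does not even need uniformity beyond $\op{supp}(a)$ here, since $a$ is compactly supported.)

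Finally I would conclude by the trivial estimate
\begin{gather*}
 |I_k| \leqslant \int_{D\cap\op{supp}(a)} e^{-k\varphi^r(t,s)} |a(t,s)| \; dt\, ds \leqslant e^{-k/C'} \int_{\R^{n+p}} |a(t,s)|\; dt\, ds = \bigo\bigl(e^{-k/C'}\bigr),
\end{gather*}
which gives the claim with $C = C'$ (after absorbing the finite constant $\|a\|_{L^1}$). There is no real obstacle here: the lemma is a soft consequence of compact support plus the conic geometry of $D$, and the only point worth stating carefully is the reduction $|(t,s)|\geqslant\delta \Rightarrow |t|\geqslant\delta'$ using $|s|\leqslant C|t|$, which is what makes the positivity of $\varphi^r$ on $\{t\neq 0\}$ (rather than on all of $\R^{n+p}\setminus\{0\}$) sufficient. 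This lemma will serve to localize the subsequent analysis of $I_k$ near the origin, where the Taylor-expansion computation of the coefficients $b_\ell$ takes place.
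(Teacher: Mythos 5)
Your proof is correct and follows essentially the same route as the paper: the conic condition $|s|\leqslant C|t|$ on $D$ forces $D\cap\op{supp}(a)$ to stay away from $L=\{t=0\}$, where $\varphi^r$ is then bounded below by compactness, giving the exponential decay. You are in fact slightly more careful than the paper's one-line argument in noting that it is $\op{supp}(a)\cap D$ (not all of $\op{supp}(a)$) that avoids $L$, which is exactly the point that matters since the integral is taken over $D$.
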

\begin{proof} 
Since $|s| \leqslant C|t|$ on $D$, we have $D\cap L =\{0 \}$. So if $a =0$ on a neighborhood of $0$, then the support of $a$ is disjoint from $L$. Since the real part of $\varphi$ is $>0$ outside $L$, we obtain  $|e^{-k\varphi} | = \bigo ( e^{-k/C})$ on the support of $a$. 
\end{proof} 

\begin{lemma}  \label{lem:exponential}
For any $N\in \N$, we have
$$ I_k = \int_D e^{-k q(t)} a_N (t,s) \; dt \; ds +  k^{- (n+p)/2} \bigo ( k^{- (N+1)/2}) $$
with $a_N (t,s) = a(t,s) \sum_{\ell=0} ^N ( -k r(t,s))^\ell/ \ell! $.
\end{lemma}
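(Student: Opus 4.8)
\textbf{Proof plan for Lemma \ref{lem:exponential}.}

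The plan is to substitute $\varphi = q + r$ in the exponential, write $e^{-k\varphi} = e^{-kq}e^{-kr}$, and expand $e^{-kr}$ via its Taylor series with remainder. Concretely, I would use the Taylor formula $e^{-kr} = \sum_{\ell=0}^{N} (-kr)^\ell/\ell! + R_{N+1}(k,t,s)$, where the remainder $R_{N+1}$ can be written in integral form and is bounded by $|kr|^{N+1}e^{k|r|^{r}}/(N+1)!$ on the relevant region (here $r^r$ is the real part of $r$; I will instead carry along $e^{-kr}$ minus its partial sum and estimate directly). Multiplying by $e^{-kq}a$ gives
$$ I_k = \int_D e^{-kq(t)} a_N(t,s)\; dt\, ds + \int_D e^{-kq(t)}\Bigl(e^{-kr(t,s)} - \textstyle\sum_{\ell=0}^N (-kr(t,s))^\ell/\ell!\Bigr) a(t,s)\; dt\, ds, $$
so everything reduces to showing the second integral, call it $J_k$, is $\bigo(k^{-(n+p+N+1)/2})$.

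For the estimate of $J_k$, first note that by Lemma \ref{lem:support} we may assume $a$ is supported in an arbitrarily small neighborhood $U$ of the origin, up to an error $\bigo(e^{-k/C})$ which is absorbed. On $U\cap D$, since $r$ vanishes to third order at the origin, we have $|r(t,s)| \leqslant C_0(|t|+|s|)^3 \leqslant C_1|t|^3$, using $|s|\leqslant C|t|$ on $D$; moreover, shrinking $U$ so that $|r(t,s)| \leqslant \tfrac12 q^r(t)$ (possible since $q^r$ is positive definite in $t$ and dominates $|t|^2$ while $r = \bigo(|t|^3)$ — here one must be slightly careful that $r$ may depend on $s$, but $|s|\leqslant C|t|$ keeps $r = \bigo(|t|^3)$), we get $|e^{-kr}| \leqslant e^{k|r|} \leqslant e^{kq^r(t)/2}$, hence each term in the partial sum and the full exponential is controlled against $e^{-kq(t)/2}$. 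The Taylor remainder bound then gives
$$ \Bigl| e^{-kr} - \sum_{\ell=0}^N \frac{(-kr)^\ell}{\ell!}\Bigr| \leqslant \frac{|kr|^{N+1}}{(N+1)!}\, e^{k|r|} \leqslant C_2\, k^{N+1}|t|^{3(N+1)}\, e^{kq^r(t)/2}. $$
Plugging this into $J_k$ and using $|e^{-kq(t)}e^{kq^r(t)/2}| = e^{-kq^r(t)/2}$,
$$ |J_k| \leqslant C_3\, k^{N+1} \int_{U\cap D} |t|^{3(N+1)}\, e^{-kq^r(t)/2}\; dt\, ds. $$

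Finally I would carry out the rescaling $t = k^{-1/2}t'$, $s = k^{-1/2}s'$, which sends $D$ to itself (conic) and contributes a Jacobian factor $k^{-(n+p)/2}$; the integral becomes $k^{N+1}\cdot k^{-(n+p)/2}\cdot k^{-3(N+1)/2}\int |t'|^{3(N+1)} e^{-q^r(t')/2}\,dt'\,ds'$, and the remaining integral converges because $|s'|\leqslant C|t'|$ on $D$ so that $\int_{D_{t'}} ds' = \bigo(|t'|^p)$ and $q^r$ is positive definite, exactly as in the discussion following Theorem \ref{theo:degen-stat-phas}. Collecting powers of $k$ gives $|J_k| = \bigo\bigl(k^{-(n+p)/2} k^{(N+1)/2 - 3(N+1)/2}\bigr)$... wait, $N+1 - (n+p)/2 - 3(N+1)/2 = -(n+p)/2 - (N+1)/2$, which is precisely $k^{-(n+p)/2}\bigo(k^{-(N+1)/2})$ as claimed. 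The main obstacle is the bookkeeping in the second step: one must choose the neighborhood $U$ uniformly so that $|r| \leqslant \tfrac12 q^r$ holds on $U\cap D$ despite $r$ depending on the extra variables $s$, and verify that all the rescaled integrals genuinely converge using the conicity and the constraint $|s|\leqslant C|t|$; once that domination is in place, the power counting is routine.
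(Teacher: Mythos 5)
Your proposal is correct and follows essentially the same route as the paper: write $e^{-k\varphi}=e^{-kq}e^{-kr}$, bound the Taylor remainder of $e^{-kr}$ by $e^{k|r|}|kr|^{N+1}/(N+1)!$, use Lemma \ref{lem:support} to localize near the origin, absorb $e^{k|r|}$ into $e^{-kq^r(t)/2}$ via $|r|\leqslant \tfrac12 q^r(t)$ on $D$ near $0$ (which uses $|s|\leqslant C|t|$), and conclude by scaling. The only cosmetic difference is that you rescale $t$ and $s$ jointly whereas the paper first integrates out $s$ using $\op{vol}(D_t)=\bigo(|t|^p)$ and then rescales $t$; you are in fact slightly more careful than the paper in writing the real part $q^r$ explicitly in the exponential bounds.
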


\begin{proof} For any $z \in \C$, we have that 
$$ \Bigl| e^z -1 - z - \ldots - z^N/N! \Bigr| \leqslant \frac{e^{|z|} | z|^{N+1} }{ (N+1)!}.$$  
Applying this to $z = - k r(t,s)$, we obtain that
$$ e^{ - k \varphi (t,s) }  a(t,s) = e^{-k q(t)} a_N (t,s) + R_N(t,s)$$ 
where the remainder satisfies
\begin{gather} \label{eq:estim_reste_1}
 | R_N(t,s) | \leqslant C e^{-k q(t) + k | r(t,s) | } | k r(t,s) |^{N+1} .
\end{gather}
Since $|s| \leqslant C|t|$ on $D$, we have $( |s| + |t|)^2 \leqslant C q(t) $ on $D$. Since $r = \bigo ( (|s|+|t|)^3)$ at the origin, it comes that $|r(t,s)| \leqslant q(t) /2$ on $V \cap D$ where $V$ is a neighborhood of the origin. So we get
\begin{gather} \label{eq:14}
 e^{-k q(t) + k | r(t,s)| } \leqslant e^{-k q(t) /2} \text{ on } V \cap D. 
\end{gather}
By lemma \ref{lem:support}, we may assume that $a$ is supported in $V$, so the same holds for $R_N$. Using that $|r(t,s)| = \bigo ( (|t| +|s|)^3) = \bigo ( |t|^3)$ on $D$, we deduce from (\ref{eq:estim_reste_1}) and (\ref{eq:14}) that 
$$ |R_N(t,s) | \leqslant C  k^{N+1} e^{ -k q(t)/2} |t|^{3(N+1)} \text{ on } V \cap D .$$
Integrating with respect to $s$ on $D_t$ and using that the volume of $D_t$ is $\bigo ( |t|^p)$, we get
\begin{xalignat*}{2}  
\Biggl| \int_D R_N(t,s) \; dt \; ds\Biggr|  & \leqslant C k^{N+1} \int_{\R^n}e^{-kq(t)/2} | t|^{p+ 3 (N+1)} \; dt \\
& = C k^{N+1} k^{-(n+p + 3(N+1))/2} \int_{\R^n} e^{- q(t) } |t |^{p+ 3 (N+1)} \; dt \\
& = \bigo (  k^{ -(n+p +N+1)/2} )
\end{xalignat*}
which concludes the proof
\end{proof}

\begin{lemma} \label{lem:taylor_expansion}
Let $f :\R^{n+p} \rightarrow \C$ be a smooth function.
\begin{enumerate} 
\item \label{item:1}
if $f$ is compactly supported and $|f(t,s)| = \bigo (( |t| + |s| )^N)$ at the origin, then
$$ \int _D e^{-k q(t) } f(t,s) \; dt \; ds = k^{-(n+p)/2} \bigo (k^{-N/2})$$
\item \label{item:2}
if $f =0$ on a neighborhood of the origin and $|f(t,s)|=  \bigo (( |t| + |s| )^N)$ on $\R^{n+p}$, then 
$$    \int _D e^{-k q(t) } f(t,s) \; dt \; ds = \bigo ( e^{-k/C})$$
\item \label{item:3}
if $f$ is compactly supported and $f(t,s) = g (t,s) + \bigo ( (| t| + |s| )^{N})$  at the origin with $g$ polynomial, then 
$$   \int _D e^{-k q(t) } f(t,s) \; dt \; ds = \int _D e^{-k q(t) } g(t,s) \; dt \; ds + k^{-(n+p)/2} \bigo ( k^{-N/2}) .$$
\end{enumerate}
\end{lemma}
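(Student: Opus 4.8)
The plan is to prove Lemma \ref{lem:taylor_expansion} by reducing everything to the single estimate of part \eqref{item:1}, since parts \eqref{item:2} and \eqref{item:3} follow easily from it. For part \eqref{item:2}, the function $f$ vanishes near the origin, hence its support is a compact set disjoint from $L = \{t=0\}$; since $q(t) = \tfrac12\sum_{i,j}\partial_{t_i}\partial_{t_j}\varphi(0,0)\,t_it_j$ has positive-definite real part, $q^r(t) \geqslant t^2/C > 0$ uniformly on that support, so $|e^{-kq(t)}| = \bigo(e^{-k/C})$ there and the integral is $\bigo(e^{-k/C})$ after bounding $f$ by a constant on its compact support. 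For part \eqref{item:3}, write $f = g + (f-g)$ where $h := f - g$ satisfies $|h(t,s)| = \bigo((|t|+|s|)^N)$ at the origin; but $h$ need not be compactly supported, so first multiply by a cutoff $\chi$ equal to $1$ near the origin: $\chi h$ is compactly supported and still $\bigo((|t|+|s|)^N)$ at the origin, so part \eqref{item:1} applies to it, while $(1-\chi)h$ vanishes near the origin and part \eqref{item:2} applies (note $h$ is smooth and at worst polynomially growing on the support issues only matter through the cutoff; we should also multiply $g$ itself by a cutoff when we invoke \eqref{item:1} on $\chi h$, but $g$ polynomial times $e^{-kq(t)}$ integrated over the non-compact tail of $D$ is again exponentially small by the argument of part \eqref{item:2}). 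Care is needed because $g$ is a polynomial and hence not compactly supported, but the same positive-definiteness of $q^r$ controls the tail integral $\int_{D\setminus\operatorname{supp}\chi} e^{-kq(t)}|g(t,s)|\,dt\,ds = \bigo(e^{-k/C})$.

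So the heart of the matter is part \eqref{item:1}. Here I would argue exactly as in the proof of Lemma \ref{lem:exponential}: since $|s|\leqslant C|t|$ on $D$, the hypothesis gives $|f(t,s)| \leqslant C(|t|+|s|)^N \leqslant C'|t|^N$ on $D \cap V$ for a neighborhood $V$ of the origin, and by compact support we may assume $f$ is supported in $V$. Then
\begin{xalignat*}{2}
\Biggl| \int_D e^{-kq(t)} f(t,s)\; dt\; ds \Biggr|
&\leqslant C' \int_{\R^n} e^{-kq^r(t)} |t|^N \Bigl( \int_{D_t} ds \Bigr) dt \\
&\leqslant C'' \int_{\R^n} e^{-kq^r(t)} |t|^{N+p}\; dt,
\end{xalignat*}
using that the volume of $D_t = \{s : (t,s)\in D\}$ is $\bigo(|t|^p)$ because $D$ is conic and contained in $\{|s|\leqslant C|t|\}$. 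The change of variables $t = u/\sqrt{k}$ turns the last integral into $k^{-(n+N+p)/2}\int_{\R^n} e^{-q^r(u)}|u|^{N+p}\, du$, and this final integral converges because $q^r$ is positive definite. Collecting the powers of $k$ gives exactly $k^{-(n+p)/2}\bigo(k^{-N/2})$.

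The only genuinely delicate point is bookkeeping in part \eqref{item:3}: one must be careful that splitting $f - g$ against a cutoff produces pieces to which \eqref{item:1} and \eqref{item:2} genuinely apply, and that the extra tail term $\int e^{-kq}(1-\chi)g$ is handled (it is, being exponentially small as in \eqref{item:2}, provided one bounds the polynomial $g$ on the conic region $D$ — where $|s|\leqslant C|t|$ makes $|g(t,s)| \leqslant \text{poly}(|t|)$ — by $e^{kq^r(t)/2}$ times a constant for $t$ outside a fixed ball). This is routine but worth writing out explicitly. Everything else is a rescaling of a Gaussian integral, so no real obstacle arises.
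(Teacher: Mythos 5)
Your parts (1) and (3) follow the paper's own proof essentially verbatim: part (1) is the bound $|f(t,s)|\leqslant C|t|^N$ on $D$ near the origin combined with $\op{vol}(D_t)=\bigo(|t|^p)$ and a Gaussian rescaling, and part (3) is the cutoff decomposition of $f-g$ into a piece supported near the origin (handled by part (1)) and a piece vanishing near the origin (handled by part (2)). Those two parts are correct.

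Your proof of part (2), however, does not establish the statement as written. You assert that because $f$ vanishes near the origin ``its support is a compact set disjoint from $L$,'' and you then bound $f$ by a constant on this alleged compact support. Neither claim follows from the hypotheses: in part (2) $f$ is only assumed polynomially bounded, not compactly supported, and its support may meet $L=\{t=0\}$ at points $(0,s)$ with $|s|$ large, where $\op{Re} q(t)=0$ and your uniform lower bound on the phase fails. The correct argument, which is the one the paper gives, goes through the conic condition: on $D$ one has $|t|+|s|\leqslant(1+C)|t|$, so $f(t,s)\neq 0$ and $(t,s)\in D$ together force $|t|\geqslant\epsilon'>0$; then the global hypothesis $|f(t,s)|=\bigo((|t|+|s|)^N)=\bigo(|t|^N)$ on $D$ and $\op{vol}(D_t)=\bigo(|t|^p)$ give
$\bigl|\int_D e^{-kq(t)}f\,dt\,ds\bigr|\leqslant C\int_{\{|t|\geqslant\epsilon'\}}e^{-kq(t)}|t|^{N+p}\,dt=\bigo(e^{-k/C})$.
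You evidently know this mechanism, since you use exactly this polynomial-tail estimate for $(1-\chi)g$ in part (3); but as written your part (2) is proved only for compactly supported $f$, which is not the stated lemma and is also not enough for your own application of it to $(1-\chi)(f-g)$ with $g$ a polynomial --- which is why you had to bolt on the separate tail discussion. Proving part (2) in its stated generality, as the paper does, makes part (3) a one-line consequence with no extra bookkeeping.
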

\begin{proof} 
For the first assertion, we use that $f(s,t) = \bigo ( |t|^N)$ on $D$ and that the volume of $D_t$ is $\bigo ( |t|^p)$, so 
$$ \Biggl| \int _D e^{-k q(t) } f(t,s) \; dt \; ds \Biggr| \leqslant C \int_{\R^n} e^{-k q(t)} |t|^{N+p} \; dt = \bigo( k^{-(n+p+N)/2}) .$$
For the second assertion, by assumption, $f(t,s) \neq 0$ implies that $|t| + |s| \geqslant \epsilon >0$. If furthermore $(t,s) \in D$, then $| s| \leqslant C|t|$ so $|t| +|s| \leqslant ( 1+C) |t|$ so $|t| \geqslant \epsilon/(1+C) = \epsilon '$.
Using that $|f(t,s)|  = \bigo (( |t| + |s| )^N)$ on $\R^{n+p}$, we get
$$  \Biggl|  \int _D e^{-k q(t) } f(t,s) \; dt \; ds \Biggr| \leqslant C \int_{\{ |t| \geqslant \epsilon'\}} e^{-k q( t) }  | t|^{N+p} dt = \bigo ( e^{-k/C}).$$ 
which proves the second assertion.  Let $\rho : \R^{n+p} \rightarrow \C$ be smooth compactly supported and equal to $1$ on a neighborhood of the origin. Applying the first assertion to  $\rho ( f-g)$ and the second assertion to $(1-\rho ) (f-g)$ we deduce the third assertion.  
\end{proof}
To deduce Theorems \ref{theo:degen-stat-phas} and \ref{theo:coefficients}, it suffices now to start with Lemma \ref{lem:exponential} and to replace $a_N$ by its Taylor expansion, the integral of the remainder being controlled by the third assertion of Lemma \ref{lem:taylor_expansion}. 

\section{The trace of $T_A^p- T_A^{p+1}$}
Let us denote by $\Pi(x,y)$ the reproducing kernel of $L$. For any $p \in \N^*$, introduce for any $x_1, \ldots , x_p \in M$ the quantity
$$\Pi_p (x_1, \ldots , x_p ) := \Pi (x_1,x_2) \Pi(x_2,x_3) \ldots \Pi(x_{p-1}, x_p) \Pi(x_p,x_1).$$
It belongs to $L_{x_1} \otimes \con{L}_{x_2}\otimes L_{x_2} \otimes \ldots \otimes \con{L}_{x_p} \otimes L_{x_p} \otimes \con{L}_{x_1}$. Using the metric of $L$, we can identify naturally each factor $L_{x_i} \otimes \con{L}_{x_i}$ with $\C$.  So we will view $\Pi_p ( x_1, \ldots, x_p)$ as a complex number and $\Pi_p$ itself as a complex valued function of $M^p$. 

$\Pi_1 (x) = \Pi(x,x)$ is the restriction to the diagonal of the reproducing kernel. $\Pi_2 (x,y) = \Pi ( x,y) \Pi(y,x) = |\Pi (x,y) |^2$ is the square norm of the reproducing kernel. The subsequent terms do not have such an easy interpretation, for instance  $\Pi_3 (x,y,z) = \Pi(x,y) \Pi(y,z)\Pi(z,x)$. Observe the symmetry $\Pi_p ( x_1, \ldots, x_p) = \Pi_p ( x_2, x_3, \ldots , x_p, x_1)$. Furthermore, by the reproducing property
\begin{gather} \label{eq:15}
 \int_M \Pi_{p+1} (x,x_1, \ldots , x_{p}) \; \mu (x) = \Pi_p ( x_1, \ldots , x_p).
\end{gather}
We will use these kernels to compute the trace of $T_A^p$. 
\begin{lemma} \label{lem:les_traces}
For any $p\in \N^*$, we have 
\begin{gather} \label{eq:trace_T_Ap}
 \op{tr} (T_A^p) = \int_{A^p} \Pi_p  ( x_1, \ldots, x_p ) \; \mu(x_1) \ldots \mu (x_p) \\
\label{eq:la_trace}
 \op{tr} ( T_A^p - T_A^{p+1} ) = \int_{A^p \times A^c} \Pi_{p+1} (  x_1, \ldots, x_{p+1} ) \; \mu(x_1) \ldots \mu (x_{p+1}).
\end{gather}
\end{lemma}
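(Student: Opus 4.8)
The plan is to establish both identities in Lemma \ref{lem:les_traces} directly from the definition of the Toeplitz operator $T_A$ and the reproducing property of the Bergman kernel $\Pi$. The key observation is that $T_A$ is, by definition, the composition of multiplication by $1_A$ with the orthogonal projection onto $\mathcal{H}_k$, and that this projection has integral kernel $\Pi$. Concretely, for any $s \in \mathcal{H}_k$ one has $(T_A s)(x) = \int_A \Pi(x,y) s(y)\,\mu(y)$, since pairing the right-hand side against any $t \in \mathcal{H}_k$ reproduces $\langle 1_A s, t\rangle$ by the defining property of $\Pi$.

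First I would prove \eqref{eq:trace_T_Ap}. Taking an orthonormal basis $(s_i)$ of $\mathcal{H}_k$, write $\op{tr}(T_A^p) = \sum_i \langle T_A^p s_i, s_i\rangle$ and expand $T_A^p$ as a $p$-fold iterated integral against $\Pi$ restricted to $A$ in each variable. After using $\sum_i s_i(x)\otimes \con{s_i}(y) = \Pi(x,y)$ to collapse the sum over the basis, one is left with $\int_{A^p} \Pi(x_1,x_2)\Pi(x_2,x_3)\cdots\Pi(x_{p-1},x_p)\Pi(x_p,x_1)\,\mu(x_1)\cdots\mu(x_p)$, which is exactly $\int_{A^p}\Pi_p\,\mu^{\otimes p}$. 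One should check the bookkeeping of the identifications $L_{x_i}\otimes\con{L}_{x_i}\simeq\C$ so that the product of kernels is genuinely the scalar-valued $\Pi_p$; this is routine given the conventions already set up in the paragraph preceding the lemma.

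For \eqref{eq:la_trace}, I would start from \eqref{eq:trace_T_Ap} applied with exponents $p$ and $p+1$, so that $\op{tr}(T_A^p - T_A^{p+1}) = \int_{A^p}\Pi_p\,\mu^{\otimes p} - \int_{A^{p+1}}\Pi_{p+1}\,\mu^{\otimes(p+1)}$. The reproducing identity \eqref{eq:15}, namely $\int_M \Pi_{p+1}(x,x_1,\ldots,x_p)\,\mu(x) = \Pi_p(x_1,\ldots,x_p)$, lets me rewrite the first term as $\int_{A^p}\bigl(\int_M \Pi_{p+1}(x,x_1,\ldots,x_p)\,\mu(x)\bigr)\mu(x_1)\cdots\mu(x_p)$. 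Splitting the inner integral over $M = A \sqcup A^c$ (up to the null boundary, which does not affect the integral) and subtracting the $A^{p+1}$ term, the contribution over $A$ cancels exactly, leaving $\int_{A^p\times A^c}\Pi_{p+1}(x,x_1,\ldots,x_p)\,\mu(x)\mu(x_1)\cdots\mu(x_p)$. Relabelling and using the cyclic symmetry $\Pi_{p+1}(x,x_1,\ldots,x_p) = \Pi_{p+1}(x_1,\ldots,x_p,x)$ gives the stated formula with the last variable ranging over $A^c$.

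I do not anticipate a serious obstacle here: the only points requiring care are (i) justifying that the integral kernel of $T_A$ on $\mathcal{H}_k$ is indeed $1_A(y)\Pi(x,y)$ — which follows immediately from the reproducing property since $\Pi(x,\cdot)$ represents evaluation at $x$ — and (ii) keeping the tensor-factor identifications consistent so that iterated composition of kernels produces the cyclic product $\Pi_p$ as a scalar. Both are essentially formal. The interchange of sum and integral, and of the various finite iterated integrals, is unproblematic because $\mathcal{H}_k$ is finite-dimensional and everything in sight is continuous (indeed smooth) on the compact manifold $M$.
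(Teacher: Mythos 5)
Your proposal is correct and follows essentially the same route as the paper: identify the kernel of $T_A$ as $1_A(y)\Pi(x,y)$ via the reproducing property, compose to get the kernel of $T_A^p$, take the trace on the diagonal, contract with \eqref{eq:15}, and obtain \eqref{eq:la_trace} from the splitting $A^p\times M = A^{p+1}\sqcup(A^p\times A^c)$ together with the cyclic symmetry of $\Pi_{p+1}$. The only cosmetic difference is that you phrase the trace via an orthonormal basis rather than directly as the integral of the Schwartz kernel over the diagonal, which amounts to the same computation.
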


\begin{proof} 
Recall that the Schwartz kernel of an endomorphism $P$ of $\mathcal{H}$ is the the holomorphic section $K$ of $L \boxtimes \con{L}$ such that $(Ps )(x) = \int_M K(x,y) . s(y)  d\mu (y)$. Here the dot stands for the contraction $\con{L}_y \otimes L_y =\C$ induced by the metric.  Since $(T_A s )(x) = \int_A K(x,y).s(y) \mu (y)$, the Schwartz kernel of $T_A^p$ is 
$$ (x,y) \rightarrow \int_{A^p} \Pi(x,x_1).\Pi(x_1,x_2) \ldots \Pi(x_{p-1}, x_p) . \Pi (x_p, y) \;  \mu(x_1) \ldots \mu (x_p) $$
The trace of an endomorphism of $\mathcal{H}$ being equal to the integral of its Schwartz kernel on the diagonal, we obtain
\begin{xalignat}{2} \label{eq:1}
 \op{tr} (T_A^p) & = \int_{M\times A^p } \Pi_{p+1}  (x, x_1, \ldots, x_p ) \; \mu (x) \mu(x_1) \ldots \mu (x_p) \\ \label{eq:2}
& = \int_{A^p } \Pi_p  ( x_1, \ldots, x_p ) \;  \mu(x_1) \ldots \mu (x_p) 
\end{xalignat}
by (\ref{eq:15}). 
This proves (\ref{eq:trace_T_Ap}). Computing the trace of $T_A^p$ with (\ref{eq:1}), the trace of $T_A^{p+1}$ with (\ref{eq:2}) and using the symmetry of $\Pi_{p+1}$ we obtain
\begin{xalignat*} {2} 
  \op{tr} ( T_A^p - T_A^{p+1} ) & = \int_{A^p \times M} \Pi_{p+1} (  x_1, \ldots, x_{p+1} ) \; \mu(x_1) \ldots \mu (x_{p+1}) \\ & - \int_{A^{p+1}} \Pi_{p+1}  ( x_1, \ldots, x_{p+1} ) \; \mu(x_1) \ldots \mu (x_{p+1}) 
\end{xalignat*}
which implies (\ref{eq:la_trace}) because $(A^p \times M )\setminus A^{p+1} = A^p \times A^c$.
\end{proof}

The previous considerations may be applied to $\Hilb_k$. So for each $p$, we have a family of kernels $K_{p,k}$. We are going to describe their asymptotic behaviour as $k$ tends to $\infty$. Let us introduce some notations. Let $\Delta_p$ be the diagonal map $M \rightarrow M^p$ sending $x$ into $(x,\ldots ,x)$. To describe the Hessian $H$ of a function $f : M^p \rightarrow \C$ at a critical point $\Delta_p (x) $, we will use the following block decomposition. For $i=0, \ldots , p$, let $u_i$ be the linear map from $T_xM$ to $T_{\Delta_p(x)} M^p \simeq (T_xM)^p $ given by 
$$ u_0 (X)= (X,\ldots ,X), \qquad u_i(X) = (0,\ldots,0, X, 0, \ldots ,0) \text{ if } i \neq 0$$
with the $X$ in the $i$-th position. Then we define the bilinear forms $H_{ij}$ of $T_xM$ by
$$ H_{ij} (X,Y) = H(u_i(X), u_j(Y)), \qquad i, j =0, \ldots, p$$
From the description of the Bergman kernel given in Theorem \ref{theo:bergman-kernel}, we can deduce the following result. 
\begin{theo} \label{theo:Kp_asymptotic}
$\Pi_{k,p}$ has the following form 
$$ \Pi_{p,k} = \Bigl( \frac{k}{2 \pi} \Bigr)^{pn} e^{-k\varphi} f(\cdot, k) + r_k$$
where $\varphi$, $f (\cdot, k)$ and $r_k$ are functions in $\Ci ( M^p, \C)$ such that 
\begin{enumerate} 
\item $\varphi$ and $d\varphi$ vanish along $\Delta_p (M)$. The real part of $\varphi$ is $>0$ outside $\Delta_p (M)$, its Hessian being non degenerate in the direction transversal to $\De_p (M)$. For any $x \in M$, the Hessian $H$ of $\varphi$ at $\Delta_p(x)$ is given by 
\begin{gather} \label{eq:Hessian_phi} 
\begin{split} 
 H_{ii} =   g_x \qquad    \forall i = 1, \ldots ,p \\
H_{i, i+1}  =  \tfrac{i}{2} \om_x - \tfrac{1}{2} g_x  \qquad  \forall i = 1, \ldots ,p-1
\end{split}
\end{gather}
and $H_{ij} = 0$ if $i=0$ or $j=0$ or $|i-j|\geqslant 2$. Here $g_x$ and $\om_x$ are the Riemannian metric and the symplectic form on $T_xM$.   
\item the sequence $f(\cdot,k)$ has an asymptotic expansion 
$$f(\cdot, k) = f_0+ k^{-1} f_1 + \ldots + k^{-N} f_N + \bigo ( k^{-N+1}), \qquad \forall N$$
where the $\bigo$ is uniform on $M^p$ and the coefficients $f_\ell$ are in $\Ci ( M , \C)$. Furthermore, the restriction of $f_0$ to $\Delta_p(M)$ is constant equal to $1$. 
\item the remainder $r_k$ is $\bigo(k^{-N})$ uniformly on $M^p$ for any $N$. 
\end{enumerate}
\end{theo}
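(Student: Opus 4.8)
The plan is to substitute the Bergman kernel asymptotics of Theorem \ref{theo:bergman-kernel} into the product defining $\Pi_{p,k}$ and collect terms. Writing $\Pi_k = (k/2\pi)^n E^k a(\cdot,k) + R_k$, I would first expand the product $\Pi_k(x_1,x_2)\cdots\Pi_k(x_p,x_1)$ by multilinearity. The ``main term'' is the one where each factor contributes its $E^k a(\cdot,k)$ part; every other term involves at least one factor $R_k$, which is $\bigo(k^{-\infty})$, while the remaining factors are polynomially bounded in $k$ (by \eqref{eq:diag} the diagonal values are $\bigo(k^n)$, and off-diagonal values are even smaller by \eqref{eq:out_diag}); so all these terms are absorbed into $r_k = \bigo(k^{-\infty})$. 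This leaves
$$
\Bigl(\tfrac{k}{2\pi}\Bigr)^{pn} \bigl( E(x_1,x_2) E(x_2,x_3)\cdots E(x_p,x_1)\bigr)^{k}\; \prod_{i} a(x_i,x_{i+1},k)
$$
with indices cyclic. I then set $e^{-\varphi} := E(x_1,x_2)\cdots E(x_p,x_1)$, i.e. $\varphi = \sum_i \varphi_E(x_i,x_{i+1})$ with $\varphi_E = -2\ln|E|$ wherever $|E|$ makes sense, and observe that after the natural trivializations $L_{x_i}\otimes\con L_{x_i}\simeq\C$ the product of the $E$'s is genuinely a function times a unit-modulus ambiguity that the metric identifications remove; and $f(\cdot,k) := \prod_i a(x_i,x_{i+1},k)$, which inherits an asymptotic expansion from the product of the expansions of the $a(\cdot,k)$, with $f_0 = \prod_i a_0(x_i,x_{i+1})$ restricting to $1$ on $\Delta_p(M)$ since each $a_0$ restricts to $1$ on the diagonal of $M^2$.

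The substantive part is verifying property (1), the vanishing of $\varphi$ and $d\varphi$ on $\Delta_p(M)$, the positivity of $\varphi^r$ off $\Delta_p(M)$, and above all the Hessian formula \eqref{eq:Hessian_phi}. Vanishing of $\varphi$ on $\Delta_p(M)$ is immediate since $E=1$ on the diagonal, hence each $\varphi_E(x,x)=0$. For the first derivative and the Hessian I would use that $\nabla E = \tfrac{1}{i}\al_E\otimes E$ near the diagonal with $\al_E$ vanishing along the diagonal: this gives $d\varphi_E = $ (real part of) something vanishing on the diagonal, hence $d\varphi$ vanishes on $\Delta_p(M)$, and it identifies the Hessian of $\varphi_E$ at $(x,x)$ via the derivative of $\al_E$, which is exactly what \eqref{eq:der_seconde_E} computes. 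Concretely, the Hessian of $\varphi = \sum_i \varphi_E(x_i,x_{i+1})$ at $\Delta_p(x)$ is the sum of the Hessians of the summands $\varphi_E(x_i,x_{i+1})$, each of which lives only in the $(i,i+1)$ block; so $H_{ij}=0$ for $|i-j|\geqslant 2$ automatically, and $H_{ii}$, $H_{i,i+1}$ are obtained by pairing the two-point Hessian of $\varphi_E$ against the appropriate copies of $T_xM$. The diagonal block receives contributions from both the $(i-1,i)$ and the $(i,i+1)$ summand; using the stated second-derivative identities for $\varphi_E$ — namely $\mathcal{L}_{(X,0)}\mathcal{L}_{(Y,0)}\varphi_E(x,x) = g(X,Y)$ together with the mixed and $(0,0)$ second derivatives extracted from \eqref{eq:der_seconde_E} — one gets $H_{ii} = g_x$ and $H_{i,i+1} = \tfrac{i}{2}\om_x - \tfrac12 g_x$ after the bookkeeping. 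The appearance of $u_0(X)=(X,\dots,X)$ giving $H_{0j}=0$ reflects that $\varphi$ is invariant under the overall diagonal translation to first and second order, which again follows from $E(x,x)=1$ and $\al_E|_{\Delta}=0$; I would check $H_{0,0}=0$ and $H_{0,i}=0$ directly from \eqref{eq:der_seconde_E} by setting the relevant vector fields equal along the diagonal. Positivity of $\varphi^r$ away from $\Delta_p(M)$ follows from $|E|<1$ off the diagonal of $M^2$, so each $\varphi_E(x_i,x_{i+1})\geqslant 0$ and the sum vanishes only when all consecutive pairs are on the diagonal, i.e. on $\Delta_p(M)$; nondegeneracy transverse to $\Delta_p(M)$ follows because the block-tridiagonal Hessian with diagonal blocks $g_x$ is easily seen to be positive definite on the complement of the overall-diagonal direction (e.g. by a telescoping/Cauchy–Schwarz estimate on $\sum_i g_x(X_i - X_{i+1}, \cdot)$-type quantities, using that the off-diagonal blocks are controlled by $g_x$ and a small multiple of $\om_x$).

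The main obstacle I anticipate is the Hessian computation \eqref{eq:Hessian_phi}: one must carefully track the trivializations of $L\boxtimes\con L$ used to regard each $E(x_i,x_{i+1})$ as a scalar, make sure the phases of these trivializations do not contribute spurious real second-order terms (they contribute to the imaginary part of $\varphi$, hence to the symplectic piece $\tfrac{i}{2}\om_x$, not to $g_x$), and correctly combine the two summands meeting at each vertex $x_i$ — the asymmetry between $\om$ and $-\om$ contributions from the left and right neighbor is what produces the factor $\tfrac{i}{2}\om_x$ with the right sign. Everything else (the product expansion, absorbing $R_k$-terms into $r_k$, the expansion of $f(\cdot,k)$) is routine bookkeeping justified by the uniform estimates in Theorem \ref{theo:bergman-kernel}.
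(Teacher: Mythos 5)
Your plan is correct and follows essentially the same route as the paper's proof: define $e^{-\varphi}$ as the cyclic product of the $E$'s and $f(\cdot,k)$ as the product of the $a$'s, absorb all terms containing an $R_k$ factor into $r_k$, get positivity and the radical of the Hessian of $\operatorname{Re}\varphi$ from the corresponding properties of $\varphi_E=-2\ln|E|$ summand by summand, and compute the blocks $H_{ii}$, $H_{i,i+1}$ from \eqref{eq:der_seconde_E} exactly as you describe. The only slip is writing $\varphi=\sum_i\varphi_E(x_i,x_{i+1})$: the function $\varphi$ is complex-valued and only $\operatorname{Re}\varphi=\tfrac12\sum_i\varphi_E(x_i,x_{i+1})$, a point you implicitly correct later when you attribute the phases to the $\tfrac{i}{2}\om_x$ piece.
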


\begin{proof}  Using the notation introduced in Theorem  \ref{theo:bergman-kernel}, the section $E$ is equal to 1 on the diagonal so there exists a function $\varphi$ vanishing along $\Delta_p (M)$ such that 
$$ e^{-\varphi (x_1, \ldots , x_p) } = E(x_1,x_2) E(x_2,x_3) \ldots E(x_p,x_1)$$
on a neighborhood of the diagonal. Similarly, we set 
$$ f(x_1,\ldots, x_p,k) = a(x_1,x_2,k) a(x_2,x_3,k) \ldots a(x_p, x_1,k).$$
Then the properties of $\varphi$ and $f(\cdot, k)$ follows from the ones of $E$ and $a(\cdot,k)$. In particular, the real part of $\varphi$ is 
$$ \op{Re} \varphi (x_1, \ldots , x_p) = \tfrac{1}{2} \bigl( \varphi_E (x_1,x_2) + \varphi_E(x_2, x_3) + \ldots + \varphi_E( x_p, x_1) \bigr)$$
where $\varphi_E = - 2 \ln |E|$. Recall that the Hessian of $\varphi_E$ is non negative, its radical being the tangent space of the diagonal. Using that a sum  of non negative bilinear forms is non negative, and that the radical of the sum is the intersection of the radicals of the summands, we deduce that the Hessian of $\op{Re} \varphi$ is non negative, its radical being the tangent space to $\De_p (M)$. 

Let us compute the Hessian $H$ of $\varphi$ at $\De_p (x)$.  First for any vector fields $Y_1,\ldots ,Y_p$ of $M$, we have
\begin{xalignat}{2} \notag
 - \bigl( \mathcal{L}_{(Y_1,\cdots, Y_p)} \varphi\bigr) (x_1,\ldots, x_p) & = \tfrac{1}{i} \Bigl( \al_E (Y_1,Y_2) (x_1,x_2) + \al_E(Y_2,Y_3) (x_2,x_3) \\ \label{eq:der_phi}
& + \ldots +\al_E( Y_p, Y_1) (x_p, x_1) \Bigr) 
\end{xalignat}
where $\al_E$ has been defined in Theorem \ref{theo:bergman-kernel}. 
Since $\al_E$ vanishes along the diagonal, $d \varphi$ vanishes along $\Delta_p(M)$. Since $\varphi$ and $d \varphi$ are zero along $\Delta_p(M)$, we have that $H_{ij} = 0 $ if $i =0$ or $j=0$. It is also clear from (\ref{eq:der_phi}) that $H_{ij} =0$ if $|i-j| \geqslant 2$. Let us compute $H_{11}$
\begin{xalignat*}{3}
 H_{11} ( X,Y)  & = \bigl(\mathcal{L}_{(X, 0, \ldots , 0)} \mathcal{L}_{(Y, 0, \ldots, 0)} \varphi\bigr)( \Delta_p(x) ) && \text{ by definition}\\ 
& = i ( \mathcal{L}_{(X,0) } \al_E(Y,0) + \mathcal{L}_{(0,X) } \al_E(0,Y)  \bigr) (x,x)  && \text{ by } (\ref{eq:der_phi})\\
& = i \bigl( \om( X^{0,1}, Y) - \om ( X^{1,0}, Y) \bigr) (x) \qquad && \text{ by } (\ref{eq:der_seconde_E}) \\
& = - \om (jX,Y)(x) \qquad \text{ because }   X^{0,1}-  X^{1,0}= ijX && \\
& = g(X,Y)(x) 
\end{xalignat*}
 Let us compute $H_{12}$
\begin{xalignat*}{2}
 H_{12} ( X,Y)  & = \bigl(\mathcal{L}_{(X, 0, \ldots , 0)} \mathcal{L}_{(0,Y, 0, \ldots, 0)} \varphi\bigr)( \Delta_p(x) ) && \text{by definition} \\ 
& = i ( \mathcal{L}_{(X,0) } \al_E(0,Y) \bigr) (x,x)  && \text{by } (\ref{eq:der_phi})\\
& = i \om ( X^{1,0}, Y)  (x) \qquad && \text{by } (\ref{eq:der_seconde_E}) \\
& = \tfrac{i}{2} \om (X,Y)(x) - \tfrac{1}{2} \om ( jX,Y)(x)  && \text{because } X^{1,0}= \tfrac{1}{2} ( X -ijX) \\
& = \tfrac{i}{2} \om (X,Y)(x) - \tfrac{1}{2} g(X,Y)(x) 
\end{xalignat*}
The computation of $H_{ij}$ with $j =i$ or $i+1$ is similar. 
\end{proof}

\begin{theo} \label{theo:asymp_trace}
For any $p \in \N^*$ and any domain $A$ with a smooth boundary, we have the asymptotic expansion
$$ \op{tr} (T_A^{p} - T_A^{p+1}) = k^{n-1/2} \Biggl( \sum_{\ell =0 }^{N} b_\ell (A) k^{-\ell/2} + \bigo ( k^{-(N+1)/2}) \Biggr) , \qquad \forall N>0 $$
where the coefficients $b_\ell(A)$ are real numbers. Furthermore, for any $\ell$ 
\begin{gather} \label{eq:sym_coef}
 b_\ell (B) = (-1)^{\ell} b_{\ell }(A). 
\end{gather}
if $B =M \setminus \op{int}(A)$.
\end{theo}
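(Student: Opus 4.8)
The plan is to reduce the trace $\op{tr}(T_A^p - T_A^{p+1})$ to the integral formula (\ref{eq:la_trace}) of Lemma \ref{lem:les_traces}, substitute the asymptotic description of $\Pi_{p+1,k}$ from Theorem \ref{theo:Kp_asymptotic}, and then recognize the resulting integral as an instance of the degenerate stationary phase integral handled by Theorems \ref{theo:degen-stat-phas} and \ref{theo:coefficients}. Concretely, write
$$\op{tr}(T_{A,k}^p - T_{A,k}^{p+1}) = \int_{A^p \times A^c} \Pi_{p+1,k}(x_1,\ldots,x_{p+1})\; \mu(x_1)\cdots\mu(x_{p+1}),$$
and insert $\Pi_{p+1,k} = (k/2\pi)^{(p+1)n} e^{-k\varphi} f(\cdot,k) + r_k$. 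The remainder $r_k$ contributes $\bigo(k^{-\infty})$ since $A^p \times A^c$ has finite measure. The main contribution is localized near $\Delta_{p+1}(M) \cap \overline{(A^p \times A^c)}$, which by the exponential decay of $\op{Re}\,\varphi$ away from the diagonal reduces to a neighborhood of $\Delta_{p+1}(\partial A)$: indeed a point $(x,\ldots,x)$ lies in the closure of $A^p \times A^c$ only if $x \in \partial A$.

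The next step is to set up local coordinates adapted to $\partial A$. Fix $x_0 \in \partial A$ and a defining function $\rho$; choose coordinates on $M$ near $x_0$ so that $\partial A = \{\rho = 0\}$ locally, with $A = \{\rho < 0\}$. Using the diagonal-transverse decomposition, parametrize a neighborhood of $\Delta_{p+1}(x_0)$ in $M^{p+1}$ by the ``center of mass'' coordinate along $\Delta_{p+1}(M)$ (of real dimension $2n$, one of whose components is $\rho$) and the $2np$ transverse coordinates $t$. The domain $A^p \times A^c$ becomes, to leading order, a conic domain in the transverse and $\rho$-directions: the constraint that the first $p$ copies lie in $A$ and the last in $A^c$ translates, after rescaling, into $p$ linear inequalities $\rho + (\text{transverse linear form})_i \leqslant 0$ and one inequality $\rho + (\text{transverse linear form})_{p+1} \geqslant 0$. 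Grouping $\rho$ together with the transverse variable $t$ and integrating out the remaining $2n-1$ center-of-mass directions (which are genuinely non-degenerate and contribute a factor $k^{-(2n-1)/2} = k^{-n+1/2}$ times $\op{vol}(\partial A)$-type weight at leading order) puts the integral exactly in the form of Theorem \ref{theo:degen-stat-phas} with the conic domain $D$ cut out by affine-linear inequalities, dimension count $n+p \rightsquigarrow$ here $2np+1$ degenerate-plus-one directions. The phase $\varphi$ restricted to the transverse directions has positive-definite Hessian by the non-degeneracy statement in Theorem \ref{theo:Kp_asymptotic}(1), and $\varphi$, $d\varphi$ vanish on the diagonal, so all hypotheses of Theorem \ref{theo:degen-stat-phas} are met. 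This yields the half-integer expansion $k^{n-1/2}\sum_\ell b_\ell(A) k^{-\ell/2}$, with $b_\ell(A)$ obtained by the formula (\ref{eq:coef_b_ell}); reality of the $b_\ell(A)$ follows since $\Pi_{p+1,k}$ restricted to the relevant real locus is real-valued up to the oscillatory factor, or more directly since $\op{tr}(T_A^p - T_A^{p+1})$ is real and the expansion is asymptotic.

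The symmetry (\ref{eq:sym_coef}) is then extracted as follows. If $B = M \setminus \op{int}(A)$, then $T_{B,k} = \op{id} - T_{A,k}$, hence $T_B^p - T_B^{p+1} = (1-T_A)^p - (1-T_A)^{p+1} = (1-T_A)^p T_A$, and a short algebraic manipulation shows $\op{tr}((1-T_A)^p T_A) $ relates to the same family of traces; more efficiently, $\partial B = \partial A$ and the roles of ``$A$'' and ``$A^c$'' are swapped in the integral (\ref{eq:la_trace}), which amounts to replacing the conic domain $D$ by $-D$ in the local model (the inequalities defining $B^p \times B^c$ near $\partial A$ are obtained from those for $A^p \times A^c$ by $\rho \mapsto -\rho$ together with reversing the cyclic order, which acts as $t \mapsto -t$ on the transverse model). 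By Remark \ref{rem:symetrie}, replacing $D$ by $-D$ multiplies $b_\ell$ by $(-1)^\ell$, giving $b_\ell(B) = (-1)^\ell b_\ell(A)$.

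The main obstacle I anticipate is the bookkeeping in the second step: correctly identifying the transverse Hessian of $\varphi$ from the block data (\ref{eq:Hessian_phi}) — in particular checking it is positive-definite on the diagonal-transverse subspace despite the antisymmetric $\om$-contributions in the off-diagonal blocks $H_{i,i+1}$ — and verifying that the domain $A^p \times A^c$, in the adapted coordinates, is genuinely of the conic, affine-linearly-bounded type required by Theorem \ref{theo:degen-stat-phas}, with the curvature of $\partial A$ and higher-order terms of $\rho$ and $\varphi$ absorbed harmlessly into the smooth amplitude and the third-order remainder $r$ of the phase. Everything downstream is then a direct application of the degenerate Laplace method already established.
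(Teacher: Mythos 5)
Your overall architecture is exactly the paper's: reduce to the integral of $\Pi_{p+1,k}$ over $A^p\times A^c$ via Lemma \ref{lem:les_traces}, insert the expansion of Theorem \ref{theo:Kp_asymptotic}, localize near $\Delta_{p+1}(\partial A)$, pass to coordinates splitting the diagonal from the transverse directions, and invoke Theorem \ref{theo:degen-stat-phas} with the conic domain, plus Remark \ref{rem:symetrie} for the parity. However, there is one genuine error in the key power-counting step. You assert that the $2n-1$ ``center of mass'' directions tangent to $\Delta_{p+1}(\partial A)$ are ``genuinely non-degenerate and contribute a factor $k^{-(2n-1)/2}$ times a $\op{vol}(\partial A)$-type weight.'' This is wrong on both counts: the phase $\varphi$ vanishes identically on all of $\Delta_{p+1}(M)$, hence in particular along $\Delta_{p+1}(\partial A)$, so these $2n-1$ directions are completely degenerate; the integral over them is an ordinary $\bigo(1)$ integral of a smooth compactly supported density, and it is precisely this $\bigo(1)$ integral that produces the factor $\op{vol}(\partial A)$ (a non-degenerate direction would instead contribute the value at an isolated critical point, not a boundary volume). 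Moreover your stated factor is inconsistent with your conclusion: the prefactor $(k/2\pi)^{n(p+1)}$ times $k^{-(2np+1)/2}$ from the $(t,s)$-integral times your alleged $k^{-(2n-1)/2}$ gives $k^{0}$, not $k^{n-1/2}$. The correct accounting is: $2np$ transverse Gaussian directions each give $k^{-1/2}$; the single direction $s=-\tau^1$ (degenerate for the phase but confined to the cone $0\leqslant s\leqslant \min_i t_i^1$, hence to $|s|\leqslant C|t|\sim k^{-1/2}$) gives one more $k^{-1/2}$; the remaining $2n-1$ diagonal directions give $\bigo(1)$; total $k^{n(p+1)-np-1/2}=k^{n-1/2}$.

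The rest is sound and matches the paper: the domain in the adapted coordinates is indeed the conic set $D=\{0\leqslant s\leqslant \min(t_1^1,\dots,t_p^1)\}$ cut out by linear forms, the transverse non-degeneracy is supplied by Theorem \ref{theo:Kp_asymptotic}(1), and reality of the $b_\ell(A)$ follows as you say from reality of the trace. For the symmetry, your conclusion is correct but the justification is slightly off: no ``reversal of the cyclic order'' is involved. Replacing $A$ by $B$ simply replaces the domain of integration $A^p\times A^c$ by $B^p\times B^c=B^p\times \op{int}(A)$, which in the local model flips the sign of all the constrained coordinates $t_i^1$ and $s$; since $D$ places no constraint on the remaining components of $t$, the new domain coincides with $-D$ as a set, and Remark \ref{rem:symetrie} then gives $b_\ell(B)=(-1)^\ell b_\ell(A)$.
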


\begin{proof} 
By Lemma \ref{lem:les_traces}, $\op{tr} (T_A^{p} - T_A^{p+1})$ is the integral of $K_{p+1}$ on $A^p \times B$. By Theorem \ref{theo:Kp_asymptotic}, it suffices to consider the integral 
\begin{gather} \label{eq:Ika}
 I_k(a) = \int_{A^p \times B} e^{-k \varphi(x_1, \ldots , x_{p+1})} a(x_1, \ldots , x_{p+1}) \; \mu (x_1) \ldots \mu ( x_{p+1}) .
\end{gather}
Here we denote by $\varphi$ the function corresponding to $K_{p+1}$ and not $K_p$. Since the real part of $\varphi$ is $>0$ outside $\Delta_{p+1} (M)$, $I_k(a) = \bigo(k^{-\infty})$ when the support of $a$ does not meet $\Delta_{p+1} (M)$. Observe that $\Delta_{p+1} (M) \cap (A^p \times B) = \Delta_{p+1} ( \partial A)$. So it suffices to do the computation with $a$ supported in an arbitrary small neighborhood  of a point $\De_{p+1} (z)$ where $z \in \partial A$. Choosing coordinates $(x^i)$, we may identify a neighborhood of $z$ with an open ball $U$ of $\R^{2n} $ centered at the origin such that $ U \cap A = \{ x \in U,\;   x^1 \geqslant 0 \}$. So we use $x'=(x^2, \ldots , x^{2n})$ as coordinates on $\partial A \cap U$. We assume that $a$ is compactly supported in $U^{p+1}$, and in this way we can consider that $U = \R^{2n}$. Instead of $(x_1, \ldots , x_{p+1}) \in U^{p+1}$, we will work with $(t_1, \ldots , t_p , \tau) \in U^{p+1}$ defined by 
\begin{gather} \label{eq:newcoord}
 t_1 = x_1 -x_{p+1} , \; t_2 = x_2 - x_{p+1} , \; \ldots, \; t_p = x_p - x_{p+1}, \; \tau = x_{p+1} 
\end{gather}
so that 
$$ x_1 = \tau +t_1  , \; x_2 = \tau +t_2 , \; \ldots, \; x_p = \tau + t_p, \; x_{p+1} = \tau .$$   
With these new coordinates, we have 
\begin{xalignat}{2}  \label{eq:3}
\Delta_{p+1} (U) & = \{ t_1 = \ldots = t_p =0\} \\ \Delta_{p+1} ( \partial A) &= \{ t_1 = \ldots = t_p = 0, \; \tau^1=0\}
\end{xalignat}
so we can use $\tau'=(\tau^2, \ldots , \tau^{2n})$ as coordinates on  $\Delta_{p+1} ( \partial A)$. Furthermore 
$$ A^p \times B = \bigl\{ t_1^1, t_2^1, \ldots , t_p^1 \geqslant 0 \text{ and }  0 \leqslant -\tau^1 \leqslant \min ( t_i^1) \bigr\} $$
Now we first consider that $\tau'$ is fixed and integrate $e^{-k \varphi} a$ with respect to $t = (t_1, \ldots , t_p)$ and $s= - \tau^1$. This integral has an asymptotic expansion given by Theorem \ref{theo:degen-stat-phas}. The assumptions are easily checked. Indeed the real part of $\varphi$ has a non degenerate Hessian in the direction corresponding to $t_1, \ldots, t_p$ because $\Delta_{p+1} (U)$ is given by (\ref{eq:3}). The coordinates $(t,s)$ introduced here correspond to the ones of Theorem  \ref{theo:degen-stat-phas}, the domain $D$ being given by 
\begin{gather} \label{eq:domain_D}
D=\{ (t,s) \in \R^{2np} \times \R, \; 0 \leqslant s \leqslant \min ( t_1^1, \ldots , t_p^1) \} .
\end{gather}
Then we integrate with respect to $\tau'$. This proves that $\op{tr} (T_A^{p} - T_A^{p+1})$ has an asymptotic expansion of the form $k^d \sum b_\ell (A) k^{-\ell/2}$. The power of $k$ is 
$$d = n (p+1) - \dim D =n-1/2. $$  It remains to explain the relation (\ref{eq:sym_coef}). Actually, the computation with $B$ is exactly the same except that we integrate on $B^p \times A$ instead of $A^p \times B$, which amounts to replace $D$ with $-D$. So the relation (\ref{eq:sym_coef}) follows from Remark \ref{rem:symetrie}. 
\end{proof}

\begin{theo} \label{theo:leading_order_coefficient}
The leading order coefficient in the asymptotic expansion of Theorem \ref{theo:asymp_trace} is given by $b_0 ( A) = C_{p,n} \op{vol}(\partial A)$ where $C_{p,n}$ is a universal positive constant depending only on $p$ and the complex dimension $n$ of $M$, and $\op{vol} (\partial A) $ is the Riemannian volume of $\partial A$. 
\end{theo}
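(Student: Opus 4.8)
The plan is to read off $b_0(A)$ from the stationary phase analysis already carried out in the proof of Theorem~\ref{theo:asymp_trace}, and then to recognise the resulting expression as an intrinsic density on $\partial A$, which forces it to be a universal multiple of the Riemannian density. Revisiting that reduction with the leading term (\ref{eq:lead_def}) of Theorem~\ref{theo:degen-stat-phas} in hand: near a point $\Delta_{p+1}(z)$ with $z\in\partial A$, in coordinates $(x^i)$ with $A\cap U=\{x^1\geqslant 0\}$ and the change of variables (\ref{eq:newcoord}) (which has Jacobian $1$), one has $\prod_{i=1}^{p+1}\mu(x_i)=(\prod_{i=1}^{p}\sqrt{\det g}(\tau+t_i))\,\sqrt{\det g}(\tau)\,dt\,ds\,d\tau'$ with $s=-\tau^1$; applying Theorem~\ref{theo:degen-stat-phas} to the $(t,s)$-integral with $\tau'$ as a parameter, using that the coefficient $f_0$ of Theorem~\ref{theo:Kp_asymptotic} is $1$ on the diagonal, and integrating over $\tau'$, one obtains
$$ b_0(A)=(2\pi)^{-(p+1)n}\int_{\partial A} J(z)^{p+1}\,\mathcal I(q_z)\,d\tau',\qquad \mathcal I(q):=\int_D e^{-q(t)}\,dt\,ds, $$
where $J=\sqrt{\det g}$ in the chart, $D=\{0\leqslant s\leqslant\min_i t_i^1\}$ is the domain (\ref{eq:domain_D}), and $q_z(t)=\tfrac12\sum_{i=1}^p g_z(t_i,t_i)+\sum_{i=1}^{p-1}(\tfrac i2\om_z(t_i,t_{i+1})-\tfrac12 g_z(t_i,t_{i+1}))$ is the quadratic form built, via (\ref{eq:Hessian_phi}), from the $t$-blocks of the Hessian of the phase of $\Pi_{p+1,k}$.

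The core of the argument is to show that $J(z)^{p+1}\mathcal I(q_z)\,d\tau'$ is an intrinsic density on $\partial A$, i.e. does not depend on the chart. I would verify this by a direct change of variables: under an admissible linear change $x\mapsto y=\Lambda x$ (one with $\{y^1=0\}=\{x^1=0\}$ and $\Lambda_{11}=c>0$, so that the first row of $\Lambda$ is $(c,0,\dots,0)$), the form $q_z$ is coordinate-free once the $t_i$ are regarded as vectors of $T_zM$, so $q_z$ in the $y$-chart evaluated on $\Lambda t$ equals $q_z(t)$; the domain $D$ is unchanged because $(\Lambda t_i)^1=c\,t_i^1$ and the new $s$-variable is $cs$, while the $(t,s)$-Jacobian is $c\,|\det\Lambda|^p$, so $\mathcal I(q_z)\mapsto c\,|\det\Lambda|^p\,\mathcal I(q_z)$; one has $J\mapsto|\det\Lambda|^{-1}J$; and since $\det\Lambda=c\det(\Lambda_{ab})_{a,b\geqslant 2}$, the induced change of coordinates on $\partial A$ sends $d\tau'\mapsto(|\det\Lambda|/c)\,d\tau'$. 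Multiplying these, every factor cancels. One could instead simply invoke that $b_0(A)$ is intrinsic and that the integrand is a locally defined density, but this explicit cancellation — with the asymmetric role of the $x^1$-direction — is the piece of bookkeeping most likely to hide a mistake, and is the step I expect to be the main obstacle.

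Granting this, I would evaluate the density in a chart adapted to each $z$. Since $(T_zM,g_z,\om_z)$ is a Hermitian vector space of complex dimension $n$ and $\mathrm{U}(n)$ acts transitively on the real hyperplanes of $\C^n$, there are linear coordinates at $z$ in which $g_z$ is the Euclidean metric, $\om_z$ the standard symplectic form, and $T_z\partial A=\{x^1=0\}$. In such coordinates $J(z)=1$, $d\tau'$ is the Riemannian density of $\partial A$ at $z$, and $q_z$ is a fixed model form $q_0$; hence $J(z)^{p+1}\mathcal I(q_z)\,d\tau'=\mathcal I_{p,n}\,d\mathrm{vol}_{\partial A}(z)$, where $\mathcal I_{p,n}:=\int_{D_0}e^{-q_0(t)}\,dt\,ds$ depends only on $p$ and $n$. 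Therefore $b_0(A)=C_{p,n}\,\op{vol}(\partial A)$ with $C_{p,n}=(2\pi)^{-(p+1)n}\mathcal I_{p,n}$.

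It remains to see that $C_{p,n}>0$. It is real and $\geqslant 0$ because $b_0(A)\geqslant 0$ (as $\op{tr}(T_A^p-T_A^{p+1})=\sum_i\la_i^p(1-\la_i)\geqslant 0$, the $\la_i\in[0,1]$ being the eigenvalues of $T_A$) and $\op{vol}(\partial A)>0$. For $p=1$ the form $q_0$ has no cross terms and $\mathcal I_{1,n}=\int_{\{t_1^1\geqslant 0\}}t_1^1\,e^{-|t_1|^2/2}\,dt_1=(2\pi)^{n-1/2}>0$, so $C_{1,n}>0$. For $p\geqslant 2$ one deduces $C_{p,n}>0$ from the case $p=1$: using $\la^p(1-\la)\geqslant\ep^{p-1}\la(1-\la)$ for $\la\geqslant\ep$ and $\la(1-\la)\leqslant\sqrt\ep\,(\la(1-\la))^{1/2}$ for $\la<\ep$ gives $\op{tr}(T_A^p-T_A^{p+1})\geqslant\ep^{p-1}(\op{tr}(T_A-T_A^2)-\sqrt\ep\,\op{tr}((T_A(1-T_A))^{1/2}))$; dividing by $k^{n-1/2}$, letting $k\to\infty$, and invoking $\op{tr}((T_A(1-T_A))^{1/2})\leqslant Ck^{n-1/2}$ from Theorem~\ref{theo:trace_estim_plus} together with the $p=1$ case, one gets $C_{p,n}\op{vol}(\partial A)\geqslant\ep^{p-1}(C_{1,n}\op{vol}(\partial A)-C\sqrt\ep)$, which is positive for $\ep$ small.
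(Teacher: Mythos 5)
Your argument is correct and, in its main part, coincides with the paper's: both read off $b_0(A)$ from Theorem \ref{theo:degen-stat-phas} as the integral over $\partial A$ of the locally defined density $(2\pi)^{-(p+1)n}J^{p+1}\mathcal{I}(q_z)\,d\tau'$ (in the paper's notation, $m(0,\tau')I(\tau')\,d\tau'$ up to the $(2\pi)$ factors), establish that this density is intrinsic, and then evaluate it pointwise in an orthosymplectic frame adapted to $T_z\partial A$, which exists by the transitivity of $\un$ on real hyperplanes. The one place where the mechanics differ is the invariance step: the paper exhibits the density as the coordinate-free linear-algebraic expression (\ref{eq:lambda}) on $E=T_zM$ (built from the Hessian $H$, the volume form $\mu_{p+1}$ and the outward normal $u$) and checks that it reproduces $m(0,\tau')I(\tau')\,d\tau'$ in any boundary-flattening chart, whereas you check invariance directly under the linearized transitions $\Lambda$ between such charts (first row $(c,0,\dots,0)$ with $c>0$); your bookkeeping --- $\mathcal{I}\mapsto c\,|\det\Lambda|^p\,\mathcal{I}$, $J^{p+1}\mapsto|\det\Lambda|^{-(p+1)}J^{p+1}$, $d\tau'\mapsto(|\det\Lambda|/c)\,d\tau'$ --- is right and the cancellation is exact, so the two arguments are equivalent in substance. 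Where you genuinely add something is the positivity of $C_{p,n}$: the paper's proof of this theorem does not address it (positivity only falls out later from the cylinder computation giving $C_{p,1}=(2\pi)^{-1}\int_{\R}g_p(\er(x))\,dx$ with $g_p(x)=x^p(1-x)>0$ on $(0,1)$, together with Lemma \ref{lem:coef_n}), while you obtain it self-containedly from the explicit Gaussian value $\mathcal{I}_{1,n}=(2\pi)^{n-1/2}$ for $p=1$ combined with the comparison $\la^p(1-\la)\geqslant\ep^{p-1}\bigl(\la(1-\la)-\sqrt{\ep}\,(\la(1-\la))^{1/2}\bigr)$ and Theorem \ref{theo:trace_estim_plus} (with exponent $1/2$) for $p\geqslant 2$; your value $C_{1,n}=(2\pi)^{-n-1/2}$ agrees with the one obtained from the paper's formulas.
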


In the proof we will obtain the following formula for  $C_{p,n}$
\begin{gather} \label{eq:integrales}
C_{p,n} = (2 \pi)^{-n(p+1)} \int_{D} e^{-q(t) } \; dt \; ds 
\end{gather}
where $D$ is the domain of $\R^{2np} \times \R \ni (t,s)$ defined by (\ref{eq:domain_D}) and $q$ is the quadratic form 
\begin{gather} \label{eq:quadratique_modele}
q(t) = \frac{1}{2} \sum_{i, j  = 1, \ldots , p} \langle M_{ij} t_i, t_j \rangle, \qquad t = (t_1, \ldots , t_p) \in (\R^{2n})^p .
\end{gather}
Here the $M_{ij}$ are square  matrices of size $2n$ defined by $M_{ii} = \op{id}_{2n}$, 
\begin{gather} \label{eq:special_matrice}
 M_{i, i+1} = \tfrac{1}{2} \left[ {\begin{array}{cc} - 1 & i \\ -i & -1 
  \end{array} } \right] \otimes \op{id}_n, \qquad M_{i+1,i} = (M_{i,i+1})^{\op{t}} 
\end{gather}
if $i =1, \ldots , p-1$ and $M_{ij}= 0$ when $|i-j|\geqslant 2$. 

 Even if the constants $C_{p,n}$ are perfectly defined with (\ref{eq:integrales}), we won't use this formula to compute them. Actually since by Theorem \ref{theo:leading_order_coefficient}, the constants $C_{p,n}$ are universal, we can compute them on any example. 

\begin{proof} 
Consider again the integral $I_k(a)$ introduced in (\ref{eq:Ika}). At the end of the proof of Theorem \ref{theo:asymp_trace}, we applied Theorem \ref{theo:degen-stat-phas} and conclude that $I_k(a)$ has an asymptotic expansion $k^{-(np+1/2)}( c_0(a)+ k^{-1/2} c_1(a) + \ldots)$. By the same theorem, we can compute the leading order term $c_0(a)$. Using the coordinates $(t_1, \ldots , t_p , \tau)$ introduced in (\ref{eq:newcoord}), we have
$$ c_0 (a) = \int_{\R^{2n-1} }  a ( 0, \tau') m (0,\tau ') I(\tau') \; d\tau'$$
where $m$ is the function defined by
\begin{gather} \label{eq:vol_form_p}
\mu(x_1) \ldots  \mu(x_{p+1}) = m(t_1, \ldots ,t_p ,\tau) dt_1 \ldots dt_p \; d\tau
\end{gather}
and $I(\tau')$ is the integral 
$$ I(\tau') = \int_D e^{-q(t,\tau') } dt \; ds $$
with $t \rightarrow q( t, \tau')$  the Hessian at the origin of $t \rightarrow \varphi ( t, 0 , \tau')$. Our goal is to prove that 
$$ c_0 (a) = C_{n,p} \int_{\partial A } \tilde {a} \; \nu $$  
where $C_{n,p} $ is the integral (\ref{eq:integrales}), $\tilde{a}$ is the restriction of $a$ to $\Delta_{p+1} (\partial A) \simeq \partial A$ and $\nu$ is the Riemannian volume element of $\partial A$. 

Let us temporarily assume that we can choose our coordinates $x^i$ of $M$ so that $(\partial_{x^i})$ is an orthosymplectic frame of $TM$ along $\partial A$, that is for any $z \in \partial A$, $(\partial_{x^i}|_z)$ is an orthornormal basis of $T_zM$ and $j ( \partial_{x^i} ) = \partial_{x^{i+1}}$ at $z$ for $i =1, \ldots, n$. Then the matrices of the bilinear forms $g_z$ and $\frac{i}{2} \om_z - \frac{1}{2} g_z$ are $\op{id}_{2n} $ and $M_{i,i+1}$ given in (\ref{eq:special_matrice}). So computing the Hessian of $t \rightarrow \varphi (t, 0, \tau')$ with the relations (\ref{eq:Hessian_phi}) of Theorem \ref{theo:Kp_asymptotic}, we get that $q(t, \tau' ) = q(t)$ for any $\tau'$ with $q$ the quadratic form introduced in (\ref{eq:quadratique_modele}). Furthermore, the Liouville volume form is $\mu (x) = dx $ and the Riemannian volume element is $\nu = d \tau'$ in this case. And the proof is complete. 

Of course, our assumption was clearly unrealistic. Actually, we can assume that $|dx^1| =1$ on $\partial A$. To achieve this, it suffices to multiply $x^1$ be the convenient function $f(x^2, \ldots, x^n)$. Also, we can choose a local orthosymplectic frame of $TM$. 
Using this we will prove that 
$$ m(0, \tau') I ( \tau') d \tau ' = C_{p,n} \nu. $$ 
Here on the left hand side we view $\tau'$ as coordinates on $\partial A$ and on right hand side $\nu $ is the Riemannian volume element of $\partial A$. 

Let us describe $\la = m(0, \tau') I ( \tau') d \tau '$ in terms of linear data. Let $z \in \partial A$, $E = T_zM$, $\mu_{p+1} \in \wedge^{\op{top}}(E^{p+1})^*$ be the value of the volume form (\ref{eq:vol_form_p}) at $\Delta_{p+1} (z)$ and $H$ be the Hessian of $\varphi$ at $\Delta_{p+1} (z)$.   Introduce the maps 
\begin{gather*} 
 \Delta : E \rightarrow E^{p+1}, \quad v \rightarrow (v,\ldots, v), \\ 
i : E^p \times \R \rightarrow E^{p+1}, \quad (x,r) \rightarrow (x,0) - r\Delta (u)
\end{gather*}
where $u \in T_z M$ is the outgoing normal vector. Then for any $X \in \wedge ^{\op{top}} E$, we have
\begin{gather} \label{eq:lambda} 
 \la_z ( X) =  \int_{D_E} e^{- q_E  } i^* ( \iota_{\Delta(X)}  \mu_{p+1}) 
\end{gather}
where $q_E: E^p \times \R \rightarrow \C$ is given by $q_E ( y) = \frac{1}{2}  H( i(y) , i(y))$ and $D_E = i^{-1} (F^p \times (-F))$ with $F = \{ x \in E, g(x,u) \geqslant 0 \}$.  So $D_E$ consists of the $(x_1, \ldots , x_p, r) \in E^p \times \R$ such that $ 0 \leqslant r \leqslant \min_i  g(x_i, u )$

To prove (\ref{eq:lambda}), it suffices to compute the integral in terms of linear coordinates of $E$ dual to the basis $(\partial_{x^i})$ and we recover exactly $ m(0, \tau') I ( \tau') d \tau '$. Now if we compute $\la$ with (\ref{eq:lambda}) in an orthosymplectic basis of $E$, we obtain that $\la = C_{p,n} \nu$. 
\end{proof}

We are now going to compute the constants $C_{p,n}$ without using the formula (\ref{eq:integrales}).

\begin{lemma} \label{lem:coef_n}
We have $C_{p,n+1} = C_{p,1} ( 2\pi) ^{-n}$.
\end{lemma}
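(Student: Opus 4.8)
The plan is to use the universality of the constant $C_{p,n}$ asserted in Theorem \ref{theo:leading_order_coefficient}: since $C_{p,n}$ depends only on $p$ and on the complex dimension, it can be evaluated on any convenient model, and a product manifold makes the passage from dimension $1$ to dimension $n+1$ transparent. Concretely, I would take a compact Riemann surface $S$ with a positive holomorphic Hermitian line bundle $L_S$ (say $S=\Proj^1$ with the dual of the tautological bundle) and a compact K\"ahler manifold $N$ of complex dimension $n$ with a positive holomorphic Hermitian line bundle $L_N$ (say $N=\Proj^n$), and set $M=S\times N$, $L=L_S\boxtimes L_N$. Then $L$ is positive on the $(n+1)$-dimensional K\"ahler manifold $M$, and since the curvature of $L$ is the direct sum of the pulled-back curvatures of $L_S$ and $L_N$, the induced Riemannian metric and volume density $\mu_M$ are the products of the corresponding data on $S$ and $N$.

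The key structural step is that all the relevant objects factorize. By the K\"unneth formula $\Hilb_k(M)=\Hilb_k(S)\otimes\Hilb_k(N)$, and because the pointwise metric on $L^k=L_S^k\boxtimes L_N^k$ and the volume density both split as products, the $L^2$ scalar product on $\Hilb_k(M)$ is the tensor product of those on $\Hilb_k(S)$ and $\Hilb_k(N)$; hence $\Pi_k^M=\Pi_k^S\boxtimes\Pi_k^N$. Now choose a domain $A_1\subset S$ with smooth nonempty boundary (e.g. a closed disk in a chart) and put $A=A_1\times N$, so that $\partial A=\partial A_1\times N$ is smooth with $\op{vol}(\partial A)=\op{vol}(\partial A_1)\,\op{vol}(N)$. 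Since $1_A=1_{A_1}\otimes 1$ on $M=S\times N$, we get $T_{A,k}^{M}=T_{A_1,k}^{S}\otimes\op{id}_{\Hilb_k(N)}$, and therefore for every $p\in\N^*$
$$\op{tr}\bigl((T_{A,k}^{M})^p-(T_{A,k}^{M})^{p+1}\bigr)=\op{tr}\bigl((T_{A_1,k}^{S})^p-(T_{A_1,k}^{S})^{p+1}\bigr)\cdot\dim\Hilb_k(N).$$
Applying Theorems \ref{theo:asymp_trace} and \ref{theo:leading_order_coefficient} on both sides, together with the dimension estimate (\ref{eq:dim_est}) for $\dim\Hilb_k(N)$, the left-hand side is $k^{(n+1)-1/2}\bigl(C_{p,n+1}\op{vol}(\partial A)+\op{o}(1)\bigr)$, while the right-hand side is $k^{1/2}\bigl(C_{p,1}\op{vol}(\partial A_1)+\op{o}(1)\bigr)\cdot(k/2\pi)^n\bigl(\op{vol}(N)+\op{o}(1)\bigr)=k^{n+1/2}\bigl((2\pi)^{-n}C_{p,1}\op{vol}(\partial A_1)\op{vol}(N)+\op{o}(1)\bigr)$. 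The powers of $k$ agree since $(n+1)-\tfrac12=n+(1-\tfrac12)$, so equating leading coefficients and dividing by $\op{vol}(\partial A_1)\op{vol}(N)=\op{vol}(\partial A)>0$ gives $C_{p,n+1}=(2\pi)^{-n}C_{p,1}$.

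The argument is short and its only delicate point — not really an obstacle — is the verification that on the product everything factorizes: the metric and volume density, the Hermitian structure on $L^k$, hence the scalar product on $\Hilb_k$, the Bergman kernel, and finally the Toeplitz operator $T_{A,k}$. Once this is in place, the comparison of the two asymptotic expansions at the common order $k^{n+1/2}$ is immediate, and universality of $C_{p,n}$ from Theorem \ref{theo:leading_order_coefficient} is what legitimizes reading off the identity from this single example.
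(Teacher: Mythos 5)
Your argument is correct and is essentially the paper's own proof: both take a product $M^1\times N$ with $L=L_{M^1}\boxtimes L_N$, use the factorization $T_{A_1\times N}=T_{A_1}\otimes\op{id}_{\Hilb_k(N)}$ so that the trace splits as $\op{tr}\,g_p(T_{A_1})\cdot\dim\Hilb_k(N)$, and then equate the leading coefficients of the two asymptotic expansions using Theorem \ref{theo:leading_order_coefficient} and the dimension estimate (\ref{eq:dim_est}). Your additional verification that the metric, volume density, scalar product and Bergman kernel all factorize is a welcome (if routine) elaboration of what the paper leaves implicit.
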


\begin{proof} 
Consider two positive line bundles $L_1 \rightarrow M^1$ and $L \rightarrow M$ where $M^1$ and $M$ have complex dimension $1$ and $n$ respectively. Then $L' =L_1 \boxtimes L$ is a positive line bundle on $M^1 \times M$ and with the obvious notation $\mathcal{H}_k'= \mathcal{H}^1_k \otimes \mathcal{H}_k$. Choose a domain $A \subset M^1$ with a smooth boundary. Then $T_{A \times M } = T_{A} \otimes \op{id}_{\mathcal{H}_k}$. So for any function $f: [0,1] \rightarrow \C$, 
$$ \op{tr}(f(T_{A \times M})) = \bigl( \dim \mathcal{H}_k \bigr) \op{tr} f(T_A) $$
By Theorem \ref{theo:leading_order_coefficient} and the dimension estimate (\ref{eq:dim_est}), we get
$$ k^{n+1/2} C_{p, n+1} \op{vol}(( \partial A) \times M) = \Bigl( \frac{k}{2\pi} \Bigr)^n \op{vol} (M) \; k^{1/2}\; C_{p,1}  \op{vol} ( \partial A) $$
which gives the result.
\end{proof}

Denote by $\er$ the function
$$ \er (x) = \frac{1}{\sqrt \pi}\int_{- \infty} ^x e^{-t^2} \; dt , \qquad x \in \R$$
So $\er$ increases from $0$ to $1$ with asymptotic behaviour
\begin{gather} \label{eq:error_Asympt} 
 \er(-x) = 1 - \er (x) =   \frac{e^{-x^2}}{2 \sqrt \pi x} \bigl( 1 + \bigo ( x^{-2}) \bigr) \qquad \text{ as } x \rightarrow \infty .
\end{gather}

\begin{lemma} $C_{p,1} = (2 \pi)^{-1} \int_{\R} g_p(\er (x) ) \; dx$ with $g_p(x) = x^{p} - x^{p+1} $.
\end{lemma}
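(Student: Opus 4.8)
The plan is to exploit the universality of $C_{p,1}$ provided by Theorem~\ref{theo:leading_order_coefficient}: since that constant does not depend on the geometry, it suffices to evaluate $\op{tr}(T_A^p-T_A^{p+1})$ for a single explicit one-dimensional model in which $T_A$ can be diagonalised. I would take the Bargmann space $\mathcal{B}_\hbar$ on $\C$ with $\hbar=k^{-1}$ — for which the analogues of Theorems~\ref{theo:asymp_trace} and~\ref{theo:leading_order_coefficient} hold with the same proof — together with $A=D_R:=\{\,z\in\C\;;\;|z|\leqslant R\,\}$. This model corresponds to the line bundle with K\"ahler potential $\varphi=|z|^2$; a direct computation then gives $\om=2\,dx\wedge dy$, $g=2\,(dx^2+dy^2)$, and hence $\op{vol}(\partial D_R)=2\sqrt2\,\pi R$. (Alternatively the normalisation of $g$ can be read off from (\ref{eq:est_large}).)

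First I would diagonalise $T_A$. Since $D_R$ is rotation invariant and the monomials $z^n$, $n\in\N$, form an orthogonal basis of $\mathcal{B}_{1/k}$, the operator $T_A$ is diagonal in this basis, its eigenvalue on $z^n$ being the ratio of the $L^2$-mass of $z^n$ carried by $D_R$ to the total $L^2$-mass, namely
\begin{gather*}
\la_n=\frac{1}{n!}\int_0^{kR^2}t^n e^{-t}\,dt,
\end{gather*}
which is the regularised lower incomplete Gamma function at $(n+1,kR^2)$ — equivalently, the probability that a Gamma-distributed variable with shape parameter $n+1$ is at most $kR^2$. Hence
\begin{gather*}
\op{tr}(T_A^p-T_A^{p+1})=\sum_{n\geqslant0}\bigl(\la_n^p-\la_n^{p+1}\bigr)=\sum_{n\geqslant0}g_p(\la_n).
\end{gather*}

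Next I would estimate this sum. Writing $N=kR^2$, the classical uniform asymptotics of the incomplete Gamma function (equivalently, the central limit theorem for Gamma variables together with a uniform error bound) give $\la_n=\er\bigl((N-n-1)/\sqrt{2N}\bigr)+\bigo(N^{-1/2})$ throughout the transition window, whereas $g_p(\la_n)$ is exponentially small once $|N-n-1|\geqslant N^{2/3}$, since $\la_n$ is then exponentially close to $0$ or to $1$. Because $x\mapsto g_p(\er(x))=\er(x)^p(1-\er(x))$ decays like a Gaussian as $x\to\pm\infty$, the truncation error and the error of replacing $\la_n$ by the above approximation each contribute only $\bigo(1)$ after summation, and the remaining lattice sum is a Riemann sum of mesh $(2N)^{-1/2}$:
\begin{gather*}
\op{tr}(T_A^p-T_A^{p+1})=\sum_{m\in\Z}g_p\!\left(\er\!\left(\frac{m}{\sqrt{2N}}\right)\right)+\bigo(1)=\sqrt{2N}\int_\R g_p(\er(x))\,dx+\op{o}(\sqrt k).
\end{gather*}
Comparing this with Theorem~\ref{theo:leading_order_coefficient}, which reads $\op{tr}(T_A^p-T_A^{p+1})=k^{1/2}C_{p,1}\op{vol}(\partial A)+\op{o}(k^{1/2})$, and substituting $\sqrt{2N}=\sqrt2\,R\sqrt k$ and $\op{vol}(\partial D_R)=2\sqrt2\,\pi R$, I would obtain $2\sqrt2\,\pi R\,C_{p,1}=\sqrt2\,R\int_\R g_p(\er(x))\,dx$, that is $C_{p,1}=(2\pi)^{-1}\int_\R g_p(\er(x))\,dx$.

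The main obstacle is the middle step: controlling $\la_n$ uniformly over the transition window, whose width is of order $\sqrt N$, and verifying that the accumulated errors are genuinely $\op{o}(\sqrt k)$. This requires a little care, since the pointwise error in the argument of $\er$ coming from the incomplete-Gamma expansion is a priori larger than the mesh $(2N)^{-1/2}$; it is nevertheless controlled because, once multiplied by the Gaussian-decaying factor $(g_p\circ\er)'$ and summed over the window, it contributes only a bounded amount. The diagonalisation, the identification of the eigenvalues with incomplete Gamma values, and the convergence of the Riemann sum are all routine.
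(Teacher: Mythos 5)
Your argument is correct and follows the same overall strategy as the paper --- exploit the universality of $C_{p,1}$ from Theorem \ref{theo:leading_order_coefficient} and evaluate $\op{tr}(T_A^p-T_A^{p+1})$ on one explicit, symmetric, non-compact model where $T_A$ diagonalises --- but the model is genuinely different and this changes the technical core. The paper works on the cylinder $\C/i\Z$ with $A$ a half-cylinder; there the translation-invariant eigenbasis gives eigenvalues that are \emph{exactly} $\er(\ell/\sqrt{k})$, $\ell\in\Z$, so the only analytic input is the Euler--Maclaurin estimate of Lemma \ref{lem:Euler_Maclaurin}, with an $\bigo(k^{-\infty})$ error. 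Your disk in Bargmann space yields eigenvalues $\la_n=P(n+1,kR^2)$ (regularised incomplete Gamma), which are only \emph{asymptotically} of the form $\er(m/\sqrt{2N})$; you therefore need a uniform central-limit/Berry--Esseen bound for Gamma variables across the transition window plus tail bounds outside it. That input is classical and you have correctly identified it as the one step requiring care; note only that a crude uniform $\bigo(N^{-1/2})$ pointwise bound summed over a window of width $\bigo(\sqrt{N\log N})$ gives $\bigo(\sqrt{\log N})$ rather than the $\bigo(1)$ you claim, but this is still $\op{o}(\sqrt{k})$, which is all the comparison with Theorem \ref{theo:leading_order_coefficient} requires. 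Your normalisation checks out: with potential $|z|^2$ one has $g=2(dx^2+dy^2)$, hence $\op{vol}(\partial D_R)=2\sqrt{2}\,\pi R$ and $\sqrt{2N}=\sqrt{2}\,R\sqrt{k}$, giving $C_{p,1}=(2\pi)^{-1}\int_\R g_p(\er(x))\,dx$ as claimed. Both routes share the same caveat, which the paper also only sketches: one must justify applying the compact-manifold expansion of Theorem \ref{theo:asymp_trace} to a non-compact model, which holds because the Bergman kernel of the Bargmann space is explicit and Gaussian-decaying. The trade-off is that the paper's cylinder buys exact eigenvalues at the cost of a slightly less familiar geometry, while your disk buys a very familiar setting at the cost of incomplete-Gamma asymptotics.
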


\begin{proof} Consider the trivial holomorphic line bundle $L_{\C}$ over $\C$ with canonical frame $\si$. Write $z = 2^{-1/2} ( x+ i y)$ and introduce the metric on $L_\C$ such that $|\si (z)|^2 = \exp (-x^2)$. The curvature of the Chern connection is $dz \wedge d \con{z}$, so $L_{\C}$ is positive. The action of $\Z$ on $L_{\C}$ given by $ n. \si (z) = \si (z+in)$ preserves the holomorphic and Hermitian structures. So taking the quotient, we obtain a positive line bundle $L$ on $M = \C / i \Z$. For any $k$, let $\mathcal{H}_k$ be the space of holomorphic sections $s$ of $L^k$ such that $\int_M |s|^2 \; dx dy $ is finite. $\mathcal{H}_k$ is a Hilbert space with scalar product defined by (\ref{eq:scalar_product}). An orthonormal basis of $\Hilb_k$ is given by 
$$ s_{\ell}  ( z) =  (2\pi)^{-1/2} \pi^{-1/4} k^{1/4} e^{ -\ell^2/ 2 k - \ell z} \si( z) ^k , \qquad \ell \in \Z.$$ 
Here we identify sections of $L^k$ with their lifts to $\C$. 
Consider the domain $A = \{ [z] / \; \op{re} z \leqslant 0 \}$ of $M$. We compute that
$$ \langle 1_A s_{\ell} , s_{m} \rangle = \delta_{\ell m}  \er (  \ell/ \sqrt k). $$
So $(s_{\ell})$ is an eigenbasis of the operator $T_A$, which has a discrete spectrum with simple eigenvalues $\er ( \ell/ \sqrt k)$, $\ell \in \Z$. Since the error function satisfies (\ref{eq:error_Asympt}), $g_p ( T_A)$ is a trace class operator and 
\begin{gather} \label{eq:5}
 \op{tr} (g_p ( T_A)) = \sum_{\ell \in \Z} g_p ( \er (\ell / k)) =  \sqrt{k} \int_{\R} g_p ( \er (x) ) \; dx + \bigo ( k^{-\infty})   
\end{gather}
by Lemma \ref{lem:Euler_Maclaurin}. Now, even if $M$ is not compact, we can still compute $\op{tr} g_p (T_A)$ by the method of Theorem \ref{theo:asymp_trace} because the Bergman kernel has the expected asymptotic behavior and is exponentially decreasing at infinity. So we have 
\begin{gather} \label{eq:6}
 \op{tr} (g_p ( T_A)) = \sqrt{k} \; C_{p,1} (2 \pi ) \bigl( 1 + \bigo ( k^{-1/2}) \bigr) 
\end{gather}
because the length of $\partial A$ is $2\pi$. Comparing (\ref{eq:5}) and (\ref{eq:6}), we get the result.
\end{proof}

\begin{lemma} \label{lem:Euler_Maclaurin}
Let $f\in \Ci ( \R, \C)$ be such that for any $\ell \in \N$, its $\ell$-th derivative satisfies $\lim _{x \rightarrow \pm \infty} f^{(\ell)}(x) = 0$ and $\int_{\R} f^{(\ell)}(x) \; dx < \infty$. Then we have in the limit $\tau \rightarrow \infty$
$$ \tau^{-1} \sum_{\ell \in \Z}  f(\ell /\tau ) = \int_{\R} f(x) \; dx + \bigo ( \tau^{-\infty}).$$
\end{lemma}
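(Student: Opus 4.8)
The plan is to deduce the estimate from the Poisson summation formula. Throughout I read the hypothesis ``$\int_{\R} f^{(\ell)}(x)\,dx<\infty$'' as $f^{(\ell)}\in L^1(\R)$ (otherwise, since $f^{(\ell-1)}\to 0$, that integral would automatically be $0$ for $\ell\ge 1$ and the condition would be vacuous). Write $\widehat f(\xi)=\int_{\R} f(x)e^{-2\pi i x\xi}\,dx$. The first point is that the hypotheses force $\widehat f$ to be rapidly decreasing: integrating by parts $N$ times, all boundary terms vanish because $f^{(j)}(x)\to 0$ as $x\to\pm\infty$ for every $j$, so $\widehat{f^{(N)}}(\xi)=(2\pi i\xi)^{N}\widehat f(\xi)$, whence $|\widehat f(\xi)|\le (2\pi|\xi|)^{-N}\|f^{(N)}\|_{L^1}$ for every $\xi\ne 0$ and every $N\in\N$.

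The second point is that $f$ itself decays in a mild but sufficient sense. For $x\in[n-1,n]$ one has $|f(n)|\le |f(x)|+\int_{n-1}^{n}|f'|$; minimising over $x$ gives $\sup_{[n,n+1]}|f|\le |f(n)|+\int_{n}^{n+1}|f'|\le \int_{n-1}^{n}|f|+\int_{n-1}^{n+1}|f'|$, hence $\sum_{n\in\Z}\sup_{[n,n+1]}|f|\le \|f\|_{L^1}+2\|f'\|_{L^1}<\infty$, and the same holds for $g=f(\cdot/\tau)$ when $\tau\ge 1$. Together with the rapid decay of $\widehat f$ (in particular $\sum_m|\widehat f(\tau m)|<\infty$), this membership in the Wiener amalgam space is enough for the Poisson summation identity $\sum_{\ell\in\Z}g(\ell)=\sum_{m\in\Z}\widehat g(m)$ to be valid, both series converging absolutely.

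To conclude I would apply this to $g=f(\cdot/\tau)$, for which $\widehat g(\xi)=\tau\,\widehat f(\tau\xi)$, obtaining
$$\tau^{-1}\sum_{\ell\in\Z}f(\ell/\tau)=\sum_{m\in\Z}\widehat f(\tau m)=\widehat f(0)+\sum_{m\ne 0}\widehat f(\tau m).$$
Since $\widehat f(0)=\int_{\R} f(x)\,dx$, and since for $\tau\ge 1$ and any $N\ge 2$ the first step gives $\sum_{m\ne 0}|\widehat f(\tau m)|\le (2\pi)^{-N}\|f^{(N)}\|_{L^1}\,\tau^{-N}\sum_{m\ne 0}|m|^{-N}=\bigo(\tau^{-N})$, and $N$ is arbitrary, the tail is $\bigo(\tau^{-\infty})$, which is exactly the claim.

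The one genuinely delicate step is the second one: justifying pointwise Poisson summation under precisely these hypotheses is where one really uses the amalgam-type bound on $f$, not merely $f\in L^1$. As a fallback that avoids Fourier analysis altogether, I would instead iterate the Euler--Maclaurin formula on $[-R,R]$ with step $1/\tau$: all boundary contributions at $\pm R$ tend to $0$ as $R\to\infty$ because every $f^{(j)}$ does, and the Euler--Maclaurin remainder after $2N$ terms is $\bigo(\tau^{-2N}\|f^{(2N)}\|_{L^1})$, which again yields an error $\bigo(\tau^{-\infty})$. I would present the Poisson-summation argument as the main proof and keep the Euler--Maclaurin computation in reserve.
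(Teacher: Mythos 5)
Your Poisson-summation argument is correct, and it is a genuinely different route from the paper, which proves the lemma by applying the Euler--Maclaurin formula to $g(x)=f(x/\tau)$ on $[-n,n]$, letting $n\to\infty$ (the boundary terms die because every $f^{(\ell)}$ tends to $0$), and bounding the remainder $\tau^{-N}\int a_N(\tau x)f^{(N)}(x)\,dx$ by $\bigo(\tau^{-N})$ using the boundedness of the periodic functions $a_N$ and the integrability of $f^{(N)}$ --- i.e.\ exactly your ``fallback''. The two steps you single out are indeed the ones that need care and you handle both: the integration by parts $\widehat{f^{(N)}}(\xi)=(2\pi i\xi)^N\widehat f(\xi)$ is legitimate because all boundary terms vanish, and the pointwise validity of Poisson summation for $g=f(\cdot/\tau)$ is secured by your amalgam bound $\sum_n\sup_{[n,n+1]}|g|\le\tau\|f\|_{L^1}+2\|f'\|_{L^1}$ (which makes the periodization converge uniformly to a continuous periodic function) together with $\sum_m|\widehat g(m)|<\infty$ from the rapid decay of $\widehat f$. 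What each approach buys: Euler--Maclaurin is more elementary, needs no Fourier analysis, and only uses one derivative at a time to get each order; Poisson summation requires the summability justification but then delivers the $\bigo(\tau^{-\infty})$ error in one line as $\sum_{m\neq 0}\widehat f(\tau m)$, and it makes transparent that the error is controlled by the decay of $\widehat f$ far from the origin --- in the paper's actual application ($f=g_p\circ\mathrm{er}$, which is Gaussian-like) this even shows the error is exponentially small, slightly more than the lemma claims. Either proof is acceptable; yours is complete as written.
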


\begin{proof} 
By Euler-Maclaurin formula \cite[Theorem II.10.2]{HaWa}, we have for any smooth function $g$ and $N \in \N^*$ 
$$ \sum_{i = -n+1}^{n} g(i) = \int_{-n}^{n} g(x) \; dx + \sum_{\ell = 0 }^{N-1} C_{\ell} \bigl( g^{(\ell)} (n) - g^{(\ell) } (-n) \bigr) + \int _{-n}^n a_N(x) g^{(N)}(x) \; dx $$ 
with some constant $C_\ell$ independent of $g$ and $1$-periodic functions $a_N$ independent of $g$. Applying this to $g(x) = f(x/\tau)$ it comes that 
\begin{xalignat*}{2}  
\tau^{-1} \sum_{i = -n+1}^{n} f(i/\tau) & = \int_{-n/\tau}^{n/\tau} f(x) \; dx + \sum_{\ell = 0 }^{N-1} C_{\ell} \tau^{-\ell-1} \bigl( f^{(\ell)} (n) - f^{(\ell) } (-n) \bigr) \\& + \tau^{-N}\int _{-n/\tau}^{n/\tau} a_N(\tau x) f^{(N)}(x) \; dx 
\end{xalignat*}
Taking the limit $n \rightarrow \infty$ and using that the $a_N$ are bounded, we get the result. 
\end{proof}

\section{Asymptotics of $\op{tr}(f(T_A))$} 

For any function $f: [0,1] \rightarrow \C$, let us introduce the quantities
$$ S_k (f) = k^{-n+1/2} \op{tr} ( f(T_A) ) , \qquad I (f) = \int_{-\infty}^{\infty} f(\er (x)) \; dx $$

\begin{theo} \label{theo:pol_convergence}
For any polynomial function $f: [0,1] \rightarrow \C$ such that $f(0) = f(1) = 0$, we have the asymptotic expansion
$$ S_k (f) = \sum_{\ell = 0 }^{N} c_{\ell} (f) k^{-\ell/2} + \bigo ( k^{-(N+1)/2}) , \qquad \forall N\in \N$$
where 
\begin{enumerate}
\item the leading order coefficient is $c_0( f) = (2 \pi)^{-n} \op{vol} ( \partial A) I(f) $. 
\item For any $\ell$,  $c_{\ell} (g) = (-1)^{\ell} c_{\ell} (f) $ where $g(x) = f(1-x)$.
\end{enumerate}
\end{theo}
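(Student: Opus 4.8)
The plan is to reduce the statement to Theorems~\ref{theo:asymp_trace} and~\ref{theo:leading_order_coefficient} by expressing $f$ in a convenient basis of the polynomials vanishing at $0$ and $1$. Since $f(0)=f(1)=0$, the polynomial $f(x)$ is divisible by $x(1-x)$, and more to the point the family $\{g_p(x)=x^p-x^{p+1}\}_{p\geqslant 1}$ spans the space of such polynomials: indeed $g_p(x)=x^p(1-x)$, so $\{g_p\}_{p\geqslant 1}$ is a basis of $x(1-x)\C[x]=\{f\in\C[x]:f(0)=f(1)=0\}$. Hence we may write $f=\sum_{p=1}^{q}\lambda_p g_p$ for scalars $\lambda_p$, and then
\begin{gather*}
 S_k(f)=k^{-n+1/2}\op{tr}(f(T_A))=k^{-n+1/2}\sum_{p=1}^q\lambda_p\op{tr}(T_A^p-T_A^{p+1}).
\end{gather*}
Applying Theorem~\ref{theo:asymp_trace} to each term gives the claimed expansion $S_k(f)=\sum_{\ell=0}^N c_\ell(f)k^{-\ell/2}+\bigo(k^{-(N+1)/2})$ with $c_\ell(f)=\sum_{p=1}^q\lambda_p\, b_\ell(A)$ (where the $b_\ell$ depend on $p$), and in particular the coefficients are finite sums, so the expansion is legitimate termwise.

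Next I would identify the leading coefficient. By Theorem~\ref{theo:leading_order_coefficient}, $b_0(A)$ for the $p$-th term equals $C_{p,n}\op{vol}(\partial A)$, and combining Lemma~\ref{lem:coef_n} with the two subsequent lemmas one gets the closed form $C_{p,n}=(2\pi)^{-n}\int_{\R}g_p(\er(x))\,dx=(2\pi)^{-n}I(g_p)$. Therefore
\begin{gather*}
 c_0(f)=\sum_{p=1}^q\lambda_p C_{p,n}\op{vol}(\partial A)=(2\pi)^{-n}\op{vol}(\partial A)\sum_{p=1}^q\lambda_p I(g_p)=(2\pi)^{-n}\op{vol}(\partial A)\,I(f),
\end{gather*}
using linearity of $I$. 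This is item (1). One small point to check is that $I(f)$ is well-defined, i.e.\ that $\int_{-\infty}^\infty f(\er(x))\,dx$ converges: this is immediate here since $f$ is polynomial with $f(0)=f(1)=0$, so $f(\er(x))=\bigo(\er(x))=\bigo(e^{-x^2})$ as $x\to+\infty$ and similarly $f(\er(x))=\bigo(1-\er(x))=\bigo(e^{-x^2})$ as $x\to-\infty$ by~(\ref{eq:error_Asympt}).

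For item (2), set $g(x)=f(1-x)$ and let $B=M\setminus\op{int}(A)$, so that $\partial B=\partial A$ and $T_B=\op{id}-T_A=T_{A^c}$, whence $g(T_A)=f(\op{id}-T_A)=f(T_B)$ and $\op{tr}(g(T_A))=\op{tr}(f(T_B))$, i.e.\ $S_k(g)$ computed for the region $A$ equals $S_k(f)$ computed for the region $B$. Now apply Theorem~\ref{theo:asymp_trace} termwise: writing $f=\sum_p\lambda_p g_p$, the relation~(\ref{eq:sym_coef}) gives $b_\ell(B)=(-1)^\ell b_\ell(A)$ for each $p$-summand, hence $c_\ell(g)=(-1)^\ell c_\ell(f)$. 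Alternatively, and perhaps more cleanly, one observes directly that $I(f(1-\cdot))=\int f(1-\er(x))\,dx=\int f(\er(-x))\,dx=\int f(\er(x))\,dx=I(f)$ by the symmetry $1-\er(x)=\er(-x)$, which pins down the $\ell=0$ case, while the higher-order cases follow from the symmetry statement in Theorem~\ref{theo:asymp_trace}.

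I do not expect a serious obstacle here: the theorem is essentially a repackaging of Theorem~\ref{theo:asymp_trace} together with the computation of the constants $C_{p,n}$ already carried out in the preceding lemmas. The only genuinely nontrivial inputs are those earlier results (the degenerate stationary phase machinery feeding Theorem~\ref{theo:asymp_trace}, and the explicit model computation on $M=\C/i\Z$ giving $C_{p,1}$), all of which we are entitled to assume. The mildly delicate bookkeeping point is simply to make sure the basis $\{g_p\}$ is used correctly and that linearity of both $\op{tr}(f(T_A))$ in $f$ and of $I$ in $f$ is invoked to pass from the $g_p$-expansions to the general polynomial $f$.
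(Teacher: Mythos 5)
Your proposal is correct and follows essentially the same route as the paper: decompose $f$ in the basis $\{x^p - x^{p+1}\}_{p\geqslant 1}$ of polynomials vanishing at $0$ and $1$, invoke Theorem~\ref{theo:asymp_trace} and Theorem~\ref{theo:leading_order_coefficient} (with the lemmas giving $C_{p,n}=(2\pi)^{-n}I(g_p)$) by linearity, and obtain the parity statement from $g(T_A)=f(T_{A^c})$ together with~(\ref{eq:sym_coef}). Your write-up merely spells out the bookkeeping that the paper leaves implicit.
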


 So $f(x) = f(1-x)$ implies that $c_\ell (f) = 0$ for odd $\ell$,  $f(x) = - f(1-x)$ implies that  $c_{\ell} (f) =0$ for even $\ell$. 

\begin{proof} 
For $f (x) = h_p (x) = x^{p} - x^{p+1}$, the result has been proved in the previous section. This implies the result for a general $f$ because the $h_p$ form a basis of  the polynomials vanishing at $0$ and $1$. For the last part, observe that $g(T_A) = f(T_{A^c})$. 
\end{proof}

Doing the change of variable $t = \er (x)$, the integral $I(f)$  can be written as
$$ I (f) = \int_0^1 f(t) \; \delta (t) \; dt, \qquad \text{ with } \delta ( t) =\sqrt{\pi} \exp ( (\er^{-1} (t))^2)$$
From the symmetry $\er(x) + \er (-x) =1$, one deduces that $\delta (t) = \delta ( 1-t)$. Furthermore, (\ref{eq:error_Asympt}) implies that 
$$ \delta (t) \sim  \frac{1}{2t} \Bigl( \ln ( 1/t) \Bigr)^{-1/2} \qquad \text{ as } t \rightarrow 0$$
Consequently, the integral $I(f)$ converges for any continuous function $f: [0,1] \rightarrow \C$ which is H\"older continuous at $0$ and $1$ with  $f(0) = f(1)=0$.

\begin{theo} \label{theo:Weyl_law_singular}
For any function $f :[0,1] \rightarrow \C$ continuous and such that $|f(t)| =\bigo ( t^p)$ and $|f(1-t)| = \bigo ( t^p)$ as $t \rightarrow 0$ with $p>0$, we have 
$$ \lim_{k \rightarrow \infty} S_k (f) = (2 \pi)^{-n} \op{vol} ( \partial A) I(f) .$$
\end{theo}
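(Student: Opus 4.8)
The plan is to reduce the statement to the polynomial case already established in Theorem~\ref{theo:pol_convergence}, via an approximation argument in a weighted supremum norm, using Theorem~\ref{theo:trace_estim_plus} to get a bound on $S_k$ uniform in $k$. Write $J(f):=(2\pi)^{-n}\op{vol}(\partial A)\,I(f)$ for the claimed limit. Since $t^p\leqslant t$ on $[0,1]$ we may assume $0<p\leqslant 1$, and we set $q:=p/2\in(0,1/2]$. From the hypotheses and the continuity of $f$ one first extracts a constant $M$ with $|f(t)|\leqslant M\,(t(1-t))^{p}$ on $[0,1]$ (hence in particular $f(0)=f(1)=0$); indeed near $0$ one has $|f(t)|\leqslant Ct^{p}\leqslant C(1-\de_1)^{-p}(t(1-t))^{p}$ on $[0,\de_1]$, symmetrically near $1$, and on $[\de_1,1-\de_1]$ one bounds $|f|$ by $\|f\|_\infty(\de_1(1-\de_1))^{-p}(t(1-t))^{p}$. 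Introduce the norm $\|g\|_q:=\sup_{t\in(0,1)}|g(t)|\,(t(1-t))^{-q}$; then $\|f\|_q\leqslant M<\infty$.

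\textbf{Two uniform bounds.} Since $T_A$ is Hermitian with $\op{sp}(T_A)\subset[0,1]$, denoting its eigenvalues by $\la_1,\dots,\la_{d_k}$ we have, for any $g$ with $\|g\|_q<\infty$ (which forces $g(0)=g(1)=0$),
$$|\op{tr}(g(T_A))|\leqslant\sum_i|g(\la_i)|\leqslant\|g\|_q\sum_i(\la_i(1-\la_i))^{q}=\|g\|_q\,\op{tr}\bigl(T_A^{q}(1-T_A)^{q}\bigr).$$
By Theorem~\ref{theo:trace_estim_plus} (applicable as $q\in(0,1]$) and the dimension estimate (\ref{eq:dim_est}), $k^{-n+1/2}\op{tr}(T_A^{q}(1-T_A)^{q})\leqslant C k^{-n}d_k$ is bounded in $k$, so $|S_k(g)|\leqslant C_q\|g\|_q$ with $C_q$ independent of $k$. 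Likewise $|I(g)|\leqslant\|g\|_q\int_0^1(t(1-t))^{q}\de(t)\,dt=:\widetilde C_q\|g\|_q$, the integral being finite because $q>0$ (the function $(t(1-t))^{q}$ is H\"older at $0$ and $1$ and vanishes there, so the convergence discussed before the theorem applies).

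\textbf{Approximating polynomials.} Take $P_N$ to be the $N$-th Bernstein polynomial of $f$, $P_N(t)=\sum_{j=0}^N f(j/N)\binom{N}{j}t^j(1-t)^{N-j}$. Then $P_N$ is a polynomial with $P_N(0)=f(0)=0$, $P_N(1)=f(1)=0$, and $\|f-P_N\|_\infty\to 0$. Crucially, $P_N$ inherits the endpoint decay of $f$: using $|f(s)|\leqslant Ms^{p}$ for all $s$ together with the concavity of $s\mapsto s^{p}$ (Jensen's inequality applied to the binomial law), $|P_N(t)|\leqslant M\sum_j(j/N)^{p}\binom{N}{j}t^j(1-t)^{N-j}\leqslant M\bigl(\sum_j(j/N)\binom{N}{j}t^j(1-t)^{N-j}\bigr)^{p}=Mt^{p}$, and symmetrically $|P_N(t)|\leqslant M(1-t)^{p}$. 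Hence $|f(t)-P_N(t)|\leqslant 2M\min(t,1-t)^{p}\leqslant 2^{1+p}M\,(t(1-t))^{p}$. Splitting $[0,1]$ into $[r,1-r]$ and $(0,r)\cup(1-r,1)$ and using $p-q=p/2>0$ gives
$$\|f-P_N\|_q\leqslant (r/2)^{-q}\,\|f-P_N\|_\infty+2^{1+p}M\,r^{\,p/2}\qquad(0<r<\tfrac12),$$
so choosing $r$ small and then $N$ large shows $\|f-P_N\|_q\to 0$.

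\textbf{Conclusion.} For each $N$,
$$\bigl|S_k(f)-J(f)\bigr|\leqslant |S_k(f-P_N)|+\bigl|S_k(P_N)-J(P_N)\bigr|+\bigl|J(P_N)-J(f)\bigr|\leqslant C_q\|f-P_N\|_q+\bigl|S_k(P_N)-J(P_N)\bigr|+(2\pi)^{-n}\op{vol}(\partial A)\,\widetilde C_q\|f-P_N\|_q .$$
As $k\to\infty$ the middle term vanishes by Theorem~\ref{theo:pol_convergence}, since $P_N$ is a polynomial vanishing at $0$ and $1$; the other two terms are uniform in $k$. Therefore $\limsup_{k}|S_k(f)-J(f)|\leqslant\bigl(C_q+(2\pi)^{-n}\op{vol}(\partial A)\widetilde C_q\bigr)\|f-P_N\|_q$ for every $N$, and letting $N\to\infty$ yields $\lim_k S_k(f)=J(f)$.

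\textbf{Main obstacle.} The only genuinely delicate point is the construction of the approximating polynomials: one needs them to approximate $f$ uniformly while not growing, relative to the weight $(t(1-t))^{q}$, near $0$ and $1$, where the density $\de$ fails to be integrable. Bernstein polynomials resolve this because, via Jensen's inequality, they automatically reproduce the endpoint decay of $f$; a generic polynomial approximant (which is only $O(\min(t,1-t))$ at the endpoints) would not control the weighted norm there. The uniform-in-$k$ estimate $|S_k(g)|\leqslant C_q\|g\|_q$ coming from Theorem~\ref{theo:trace_estim_plus} is what allows the approximation to be passed to the limit.
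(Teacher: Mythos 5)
Your proof is correct, and while it follows the same overall template as the paper (reduce to the polynomial case of Theorem \ref{theo:pol_convergence}, with Theorem \ref{theo:trace_estim_plus} supplying the $k$-uniform control), the approximation step is genuinely different. The paper factors $f = g\,h_{p/2}$ with $h_{p/2}(x)=x^{p/2}(1-x)^{p/2}$ and $g$ continuous vanishing at the endpoints, approximates $g$ uniformly by a polynomial $Q$ with $Q(0)=Q(1)=0$, and transfers the error through the positivity (monotonicity) of the functionals $S_k$ and $I$ together with $S_k(h_{p/2})\leqslant C$; an intermediate case ($f=gh$ with $h(x)=x(1-x)$) is needed because $Qh_{p/2}$ is not itself a polynomial. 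You instead approximate $f$ directly in the weighted norm $\|\cdot\|_q$ with $q=p/2$ -- essentially the norm the paper only introduces afterwards in Proposition \ref{prop:uniform_result} -- and your new ingredient is the observation that Bernstein polynomials inherit the endpoint decay $|P_N(t)|\leqslant Mt^{p}$ and $|P_N(t)|\leqslant M(1-t)^{p}$ via Jensen's inequality for the concave map $s\mapsto s^{p}$, which is exactly what makes $\|f-P_N\|_q\to 0$ despite the non-integrability of the density $\de$ at the endpoints. Your route buys a shorter conclusion by pure triangle inequality, works verbatim for complex-valued $f$ (the paper's order-theoretic argument implicitly requires splitting into real and imaginary parts), and yields the uniform bound of Proposition \ref{prop:uniform_result} as a byproduct; the paper's route asks nothing of the approximating polynomials beyond Weierstrass, at the cost of the extra factorization steps. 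All the individual estimates you use -- the reduction to $p\leqslant 1$, the bound $\op{tr}\bigl(T_A^{q}(1-T_A)^{q}\bigr)\leqslant C d_k k^{-1/2}$, the integrability of $(t(1-t))^{q}\de(t)$, and the splitting of $\|f-P_N\|_q$ over $[r,1-r]$ and its complement -- check out.
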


\begin{proof} 
We will mainly use that $S_k$ and $I$ are positive functional in the sense that $f \leqslant g$ implies $S_k (f) \leqslant S_k (g)$ and $I(f) \leqslant I(g)$. Introduce the notation $I_A(f) = (2 \pi)^{-n} \op{vol} ( \partial A) I(f)$. We will now prove the result for particular classes of functions $f$. 
 
1. If $f$ is a polynomial, the result follows from Theorem \ref{theo:pol_convergence}. 

2.  Let $f =gh $ with $g \in \mathcal{C} ([0,1])$ and $h(x) = x(1-x)$. Let $\epsilon >0$. By Weierstrass theorem, there exists a polynomial function $Q:[0,1] \rightarrow \C$ such that $| g - Q | \leqslant \ep$. So $g \leqslant Q + \ep$ which implies $S_k (gh) \leqslant S_k(Qh) + \ep S_k(h)$. By part 1, $S_k (h) \leqslant C$ and $S_k (Qh) \rightarrow I_A(Qh)$. So $\limsup S_k(f) \leqslant I_A(Qh) +\epsilon C$. Furthermore, $Q \leqslant g + \ep$ implies $I_A(Qh) \leqslant I_A( gh) + \ep I_A(h)$. So 
$$ \limsup S_k ( f) \leqslant I_A(f) + \ep C + \ep I_A(h)$$
Taking the limit $\epsilon \rightarrow 0$, we get $\limsup S_k (f) \leqslant I_A(f)$. By exactly the same argument, we also have $\liminf S_k(f) \geqslant I_A (f)$. So $S_k(f) \rightarrow I_A(f)$. 

3. Let $f = g h_p$ with $g \in \mathcal{C} ([0,1])$ such that $g(0) = g(1) =0$ and $h_p (x)=  x^p (1-x)^p$. We adapt the above argument to show that $S_k (f) \rightarrow I_A(f)$. For any $\ep >0$ there exists a polynomial function $Q$ such that $Q(0) = Q(1) =0$ and $|g - Q| \leqslant \ep$. We deduce that $S_k ( gh_p ) \leqslant S_k ( Q h_p) + \ep S_k ( h_p)$. By Theorem \ref{theo:trace_estim_plus}, $S_k (h_p)$ is bounded above by some constant $C$.    
We can write $Q= Rh$ with $h(x) = x (1-x)$ and $R$ polynomial. So  $Q h_p = Rh_p h$ and part 2 implies that $S_k ( Qh_p) \rightarrow I_A( Q h_p)$. So  $ \limsup S_k ( gh_p) \leqslant I_A (Qh_p) + \ep C.$ Then $I_A(Q h_p) \leqslant I_A ( gh_p) + \ep I_A(h_p)$ so 
$$  \limsup S_k ( f) \leqslant I_A (f) + \ep C + \ep I_A(h_p)$$
Letting $\epsilon \rightarrow 0$, $ \limsup S_k ( f) \leqslant I_A (f)$. By the same argument $\liminf S_k (f) \geqslant I_A(f)$. We conclude that $S_k (f) \rightarrow I_A (f)$. 

4. Let $f$ be a function $f$ satisfying the assumptions of the theorem. Then $f$  has the form $f =g h_{p/2}$ where $g$ is continuous with $g(0) = g(1) = 0$. By part 3, the proof is complete.
\end{proof}

Let us explain how we can deduce the theorems stated in Section \ref{sec:toepl-oper-with}. Theorem \ref{theo:intro_2} contains both Theorem \ref{theo:Weyl_law_singular} and Theorem \ref{theo:pol_convergence}. Theorem \ref{theo:intro} follows from Theorem \ref{theo:Weyl_law_singular} by approximating the characteristic function of $[a,b]$ by continuous functions vanishing on a neighborhood of $0$ and $1$. Corollary \ref{cor:Weyl_law_2_terms} follows from Theorem \ref{theo:Weyl_law_singular} by writing $g(x) = g(0)(1-x) + g(1) x + f(x)$ so that 
$$ \op{tr} (g (T_{A,k})) = g(0) \op{tr} (T_{A^c,k}) + g(1) \op{tr} (T_{A,k}) + \op{tr}( f(T_{A,k}))$$ 
and we estimate the two first traces in the right-hand side with Equation (\ref{dev:f_trace}). 

For any $f \in \mathcal{C}( [0,1], \R)$ and $p \in (0,1]$, let
$$ \| f \|_p := \sup_{t \in [0,1] } \Biggl( \frac{|f(t)|}{ t^p} + \frac{|f(t)|}{(1-t)^p} \Biggr) .$$ 
Observe that $\| f\|_p < \infty$  $\Leftrightarrow$ $f(0) = f(1) =0$ and $f$ is H\"older continuous with exponent $p$ at $0$ and $1$. Observe also that $\| \cdot \|$ is stronger than the uniform norm, that is $\sup|f| \leqslant \| f \|_p$.  
\begin{prop} \label{prop:uniform_result}
For any $p \in (0,1]$, there exists a constant $C>0$ such that for any $f \in \mathcal{C}( [0,1], \R)$, 
\begin{gather} \label{eq:ineq_normp}
 |I(f) | \leqslant C \| f \|_p \qquad \text{ and } \qquad \forall k, \quad |S_k (f) | \leqslant C \| f \|_p.
\end{gather}
For any compact set $K$ of the normed vector space $\bigl( \{ f \in \mathcal{C}([0,1], \R),\; \|f\|_p < \infty \}, \; \| \cdot \|_p\bigr) $,  $S_k (f) \rightarrow ( 2\pi)^n \op{vol} ( \partial A) I(f)$ uniformly with respect to $f \in K$. 
\end{prop}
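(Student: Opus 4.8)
The plan is to prove the three assertions in turn, the last one being a soft consequence of the first two together with Theorem~\ref{theo:Weyl_law_singular}. The common starting point is an elementary pointwise bound: if $\|f\|_p < \infty$, then both $|f(t)| \leqslant \|f\|_p\, t^p$ and $|f(t)| \leqslant \|f\|_p\,(1-t)^p$ hold for every $t \in [0,1]$, so, using $\min(a,b) \leqslant \sqrt{ab}$ for $a,b\geqslant 0$,
$$ |f(t)| \;\leqslant\; \|f\|_p\, h_{p/2}(t), \qquad h_{p/2}(t) := \bigl(t(1-t)\bigr)^{p/2}, \quad t\in[0,1].$$
Everything below exploits this single inequality.

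For the bound on $I$, I would write $I(f) = \int_0^1 f(t)\,\delta(t)\,dt$ with $\delta$ the density introduced just before Theorem~\ref{theo:Weyl_law_singular}: it is continuous on $(0,1)$, satisfies $\delta(t) = \delta(1-t)$ and $\delta(t) \sim \tfrac12 t^{-1}(\ln(1/t))^{-1/2}$ as $t\to 0$. Hence $|I(f)| \leqslant \|f\|_p \int_0^1 h_{p/2}(t)\,\delta(t)\,dt$, and the integral on the right is finite because near $0$ its integrand is equivalent to $\tfrac12 t^{p/2-1}(\ln(1/t))^{-1/2}$, which is integrable precisely because $p>0$, the behaviour at $1$ being identical by symmetry of $\delta$. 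This gives the first inequality of (\ref{eq:ineq_normp}), with $C$ equal to that integral.

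For the bound on $S_k$, recall that $T_A$ is Hermitian with $\op{sp}(T_A) \subseteq [0,1]$ and that $h_{p/2} \geqslant 0$ on $[0,1]$. Diagonalizing $T_A$, the scalar inequalities $-\|f\|_p\, h_{p/2}(t) \leqslant f(t) \leqslant \|f\|_p\, h_{p/2}(t)$ on $[0,1]$ upgrade to $-\|f\|_p\, h_{p/2}(T_A) \leqslant f(T_A) \leqslant \|f\|_p\, h_{p/2}(T_A)$ as Hermitian endomorphisms of $\Hilb_k$; since $h_{p/2}(T_A)\geqslant 0$, taking traces yields $|\op{tr}(f(T_A))| \leqslant \|f\|_p\,\op{tr}\bigl(T_A^{p/2}(1-T_A)^{p/2}\bigr)$. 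Theorem~\ref{theo:trace_estim_plus}, applied with the exponent $p/2\in(0,1]$, bounds the right-hand side by $\|f\|_p\, C' d_k\, k^{-1/2}$, and $d_k = \bigo(k^n)$ by (\ref{eq:dim_est}); therefore $|S_k(f)| = k^{-n+1/2}|\op{tr}(f(T_A))| \leqslant C''\|f\|_p$ with $C''$ independent of $k$ and $f$ (enlarging $C''$ to absorb the finitely many smallest $k$ if needed). Taking $C$ larger than both constants obtained so far establishes (\ref{eq:ineq_normp}).

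For the last assertion, set $I_A(f) := (2\pi)^{-n}\op{vol}(\partial A)\, I(f)$ and $\Phi_k := S_k - I_A$. Each $\Phi_k$ is a linear functional on the normed space in the statement (linearity of $f\mapsto f(T_A)$, of $\op{tr}$, and of $I$), and by (\ref{eq:ineq_normp}) there is $C_0$ with $|\Phi_k(f)| \leqslant C_0\|f\|_p$ for all $k$, so the family $(\Phi_k)$ is equi-Lipschitz. For each fixed $f$ with $\|f\|_p<\infty$ the hypotheses of Theorem~\ref{theo:Weyl_law_singular} are met (since $|f(t)|,|f(1-t)|\leqslant \|f\|_p\, t^p$), hence $\Phi_k(f)\to 0$. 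The conclusion is then the standard fact that an equi-Lipschitz sequence of functions on a compact metric space that converges pointwise converges uniformly: given $\ep>0$ and a compact $K$, pick a finite $\ep/(3C_0)$-net $f_1,\dots,f_N\in K$, then $k_0$ with $|\Phi_k(f_j)|<\ep/3$ for all $j$ and $k\geqslant k_0$; for $f\in K$ and $k\geqslant k_0$, choosing $f_j$ with $\|f-f_j\|_p<\ep/(3C_0)$ gives $|\Phi_k(f)| \leqslant |\Phi_k(f-f_j)| + |\Phi_k(f_j)| < \ep$. I expect the only points that need care — and they are minor — to be the convergence of $\int_0^1 h_{p/2}\delta$ (where $p>0$ is used) and the upgrade from the scalar to the operator inequality; the remainder is routine functional analysis.
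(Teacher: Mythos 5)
Your proof is correct and takes essentially the same route as the paper's: dominate $|f|$ pointwise by $\|f\|_p$ times a power of $t(1-t)$, use the positivity of the functionals $S_k$ and $I$ together with the uniform bound on $S_k$ applied to that dominating function, and get uniformity on compacts by an $\ep/3$-net argument. The only immaterial differences are that you use $h_{p/2}(t)=(t(1-t))^{p/2}$ via $\min(a,b)\leqslant\sqrt{ab}$ where the paper uses $2^p h_p$, and that you bound $S_k(h_{p/2})$ directly from Theorem \ref{theo:trace_estim_plus} rather than from the convergence of $S_k(h_p)$ in Theorem \ref{theo:Weyl_law_singular}.
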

\begin{proof} Let $h_p(t) = t^p (1-t)^p$. We easily check that
$$ | f(t) | \leqslant 2^p \| f\|_p h_p (t) , \qquad \forall t \in [0,1].
$$
Since $S_k$ and $I$ are positive functionals, this implies that
$$ |S_k (f) | \leqslant 2^p \| f \|_p S_k ( h_p) , \qquad |I(f)| \leqslant 2^p \| f \|_p  I ( h_p).$$
We deduce (\ref{eq:ineq_normp}) by using that $I(h_p) < \infty$ and that the sequence $(S_k (h_p))$ is bounded by Theorem \ref{theo:Weyl_law_singular}. By the same theorem, we already know that $S_k (f) \rightarrow ( 2\pi)^n \op{vol} ( \partial A) I(f)$ when $\| f \|_p < \infty$. Using (\ref{eq:ineq_normp}) with an $\ep/3$-argument, we show that this convergence is uniform on compact sets, as stated in the proposition.
\end{proof}

\section{Quantum probabilities for fermions} \label{sec:quant-prob-ferm}

In this section, we will compute the distribution of $N_A^\Psi$ defined in (\ref{def:N_A_Slater}) and prove the formulas (\ref{eq:entropy}) for the entanglement entropy. We view $\mathcal{H}_k$ as a subspace of the space $\mathcal{E}_k$ of square integrable sections of $L^k$.  The multiplication by the characteristic function of $A$ is an orthogonal projector $P_A$ of $\mathcal{E}_k$ and we have a corresponding decomposition $\mathcal{E}_k = \op{Im} P_A \oplus \op{Ker} P_A$.  Here the space $\mathcal{E}_k$ is infinite dimensional, but in all the considerations to come, we can replace $\op{Im} P_{A}$, $\op{Ker} P_A$ and $\mathcal{E}_k$  by the spaces $P_A ( \mathcal{H}_k)$, $(1-P_A) (\mathcal{H}_k)$ and $P_A( \Hilb_k) \oplus (1- P_A) (\Hilb_k)$.

So we assume that we are in the following situation: $\mathcal{E}$ is a finite dimensional complex Hilbert space, $P_A$ is an orthogonal  projector of $\mathcal{E}$ and  $\mathcal{H}$ is a subspace of $\mathcal{E}$. Let $\mathcal{E}_A = \op{Im} P_A$, $\mathcal{E}_B = \op{Ker} P_A$ so that $\mathcal{E} = \mathcal{E}_A \oplus \mathcal{E}_B$. Let $T_A$ be the Hermitian endomorphism of $\mathcal{H}$ defined by $\langle T_A s, t \rangle = \langle P_A s,t \rangle$ for all $s, t \in \mathcal{H}$. 
Let $s_1$, \ldots $s_d$ be an orthogonal basis of eigenvectors of $T_A$ and set 
$$ \Psi := s_1 \wedge \ldots \wedge s_d \in \textstyle{\bigwedge} \Hilb. $$ 
We denote by $\la_i$ the eigenvalue of $s_i$ and  write $s_i = s_i^A + s_i^B \in \mathcal{E}_A \oplus \mathcal{E}_B$.  
\begin{lemma} \label{lem:calcul}
$ $
\begin{enumerate}
\item  We have $ \langle s_i^A , s_j^A \rangle  = \la_i \delta_{ij}$ and $ \langle s_i^B , s_j^B \rangle  = (1-\la_i) \delta_{ij}$. 
\item The decomposition of $\Psi$ in $\bigwedge \mathcal{E} = \bigl( \bigwedge \mathcal{E}_A \bigr) \otimes \bigl( \bigwedge \mathcal{E}_B \bigr)$ is given by 
\begin{gather} \label{eq:10}
 \Psi = \sum (-1)^{\ep_I} s_I^A \otimes  s_{I^c}^B 
\end{gather}
where we sum over the subsets $I$ of $\{ 1, \ldots, d \}$ and   $\ep_I = \pm 1$, $s_I^{A} = s_{i_1}^A\wedge \ldots \wedge s_{i_m}^A$ and $s_{I}^B =s_{i_1}^B\wedge \ldots \wedge s_{i_m}^B$ if $I$ consists of $i_1 < \ldots < i_m $.
\end{enumerate}
\end{lemma}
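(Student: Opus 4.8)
The plan is to treat the two assertions separately; both are elementary once one records that, since $\mathcal{E}=\mathcal{E}_A\oplus\mathcal{E}_B$ is an orthogonal decomposition and $P_A$ is the orthogonal projection onto $\mathcal{E}_A$, one has $s_i^A=P_A s_i$ and $s_i^B=(1-P_A)s_i$, and that the family $(s_i)$ is orthonormal.

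For the first assertion I would use that $P_A$ is self-adjoint and idempotent, together with the defining relation $\langle T_A s,t\rangle=\langle P_A s,t\rangle$ applied to $s=s_i$, $t=s_j\in\mathcal{H}$:
$$\langle s_i^A,s_j^A\rangle=\langle P_A s_i,P_A s_j\rangle=\langle P_A s_i,s_j\rangle=\langle T_A s_i,s_j\rangle=\la_i\langle s_i,s_j\rangle=\la_i\de_{ij}.$$
Since $s_i=s_i^A+s_i^B$ with $s_i^A\perp s_i^B$, the identity $\langle s_i,s_j\rangle=\langle s_i^A,s_j^A\rangle+\langle s_i^B,s_j^B\rangle$ then gives $\langle s_i^B,s_j^B\rangle=(1-\la_i)\de_{ij}$; one could equally compute this directly as $\langle(1-P_A)s_i,(1-P_A)s_j\rangle=\langle s_i,s_j\rangle-\langle T_A s_i,s_j\rangle$.

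For the second assertion I would expand $\Psi=\bigwedge_{i=1}^{d}(s_i^A+s_i^B)$ by multilinearity of the exterior product. The $2^d$ resulting terms are indexed by the subsets $I\subseteq\{1,\dots,d\}$: the term attached to $I$ is the wedge, written in the order $i=1,\dots,d$, of the factors $s_i^A$ for $i\in I$ and $s_i^B$ for $i\notin I$. In that term one sorts the factors so that those in $\mathcal{E}_A$ come first in their original relative order, followed by those in $\mathcal{E}_B$ in their original relative order; this permutation is a shuffle and contributes a sign $(-1)^{\ep_I}\in\{+1,-1\}$. Under the canonical isomorphism $\bigwedge\mathcal{E}=\bigl(\bigwedge\mathcal{E}_A\bigr)\otimes\bigl(\bigwedge\mathcal{E}_B\bigr)$, which sends $a_1\wedge\dots\wedge a_m\wedge b_1\wedge\dots\wedge b_{d-m}$ with $a_\bullet\in\mathcal{E}_A$ and $b_\bullet\in\mathcal{E}_B$ to $(a_1\wedge\dots\wedge a_m)\otimes(b_1\wedge\dots\wedge b_{d-m})$, the term attached to $I$ becomes $(-1)^{\ep_I}\,s_I^A\otimes s_{I^c}^B$ in the notation of the statement, and summing over $I$ gives (\ref{eq:10}).

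There is no real obstacle here. The only points requiring a little care are that the expansion produces no repeated index within either tensor factor — automatic, since each $i\in\{1,\dots,d\}$ contributes either its $A$- or its $B$-component but not both — and the precise value of the shuffle sign $(-1)^{\ep_I}$, which is not needed in the sequel and which I would leave unspecified beyond $\ep_I\in\{0,1\}$.
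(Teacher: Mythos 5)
Your proof is correct and follows essentially the same route as the paper: part 1 via $\langle s_i^A,s_j^A\rangle=\langle P_As_i,s_j\rangle=\langle T_As_i,s_j\rangle=\la_i\de_{ij}$, and part 2 by expanding $\Psi=\bigwedge_i(s_i^A+s_i^B)$ multilinearly and collecting the terms indexed by subsets $I$, with an unspecified shuffle sign. The only cosmetic difference is that the paper writes $s_i=s_i^A\otimes 1+1\otimes s_i^B$ inside $\bigl(\bigwedge\mathcal{E}_A\bigr)\otimes\bigl(\bigwedge\mathcal{E}_B\bigr)$ before expanding, whereas you expand in $\bigwedge\mathcal{E}$ and then apply the canonical isomorphism; these are the same computation.
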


\begin{proof} 
We have $\langle s_i^A, s_j^A \rangle = \langle s_i^A, s_j \rangle = \langle T_A s_i , s_j \rangle = \la_{i} \delta_{ij}$. The proof of the second relation is similar.
For the decomposition of $\Psi$, we write 
$$s_i = s_i ^A + s_i^B = s_i^A \otimes 1_{\bigWedge \mathcal{E}_B} + 1_{\bigWedge \mathcal{E}_A} \otimes s_i^B \in  \bigl( \bigWedge \mathcal{E}_A \bigr) \otimes \bigl( \bigWedge \mathcal{E}_B \bigr)$$
and expand the product defining $\Psi$. 
\end{proof}
Equation (\ref{eq:10}) is the Schmidt decomposition of $\Psi$. Indeed, by the first part of Lemma \ref{lem:calcul}, the vectors $s_I^A \otimes s_{I^c}^B$ are mutually orthogonal. Be aware that these vectors are not normalised, actually 
\begin{gather} \label{eq:11}
\| s_I^A \otimes s_{I^c}^B \|^2 =  \la_I (1- \la)_{I^c} 
\end{gather}
where $\la_I = \la_{i_1} \ldots \la_{i_m}$ if $I = \{ i_1, \ldots, i_m \}$.

Consider the decomposition 
\begin{gather} \label{eq:9} 
{\bigWedge} \mathcal{E} = \bigoplus_{\ell} \Bigl( {\bigWedge}^\ell \mathcal{E}_A \Bigr) \otimes \Bigl( \bigWedge \mathcal{E}_B \Bigr)
\end{gather}
Let $N_A$ be the endomorphism of $\bigwedge \mathcal{E}$ equal to $\ell$ in the $\ell$-th summand. Let $N_A^\Psi$ be the probability distribution of $N_A$ in the state $\Psi$. By (\ref{eq:10}) and (\ref{eq:11}),
\begin{gather} \label{eq:8}
 p ( N_A^{\Psi} = \ell ) = \sum _{ |I | = \ell } \la_I (1- \la)_{I^c} 
\end{gather}
Here we recognize the distribution of a sum of independent Bernoulli random variables with parameters $\la_i$, $i =1, \ldots , d$. Indeed, consider the product probability measure of $\{ 0, 1\}^d$ given by $\Pr ( \ep ) = \prod \Pr_i ( \ep_i)$ where $\Pr_i ( 1) = \la_i$ and $\Pr_i( 0 ) = 1 - \la_i$. Then identifying the set of subsets of $\{1, \dots , d \}$ with $\{ 0,1 \}^d$ by sending $\ep$ into $I = \{ i / \ep_i =1 \}$, we see from (\ref{eq:8}) that $N_A^{\Psi}$ has the same distribution as $\sum \ep_i$.

Observe that the probability that $s_i$ belongs to $A$ is  $ \| s_i \|_A^2 = \langle P_A s_i, s_i \rangle = \langle T_A s_i, s_i \rangle = \la_i$. So for what concerns the number of particles in $A$, the Fermionic state $\Psi$ behaves as $d$ independent 1-particle states $s_1$, \ldots, $s_{d}$.



Let $\rho \in \mathcal{S} ( \bigwedge \mathcal{E})$ be the orthogonal projector of $\bigwedge \mathcal{E}$ onto the line generated by $\Psi$. Viewing $\bigwedge \mathcal{E}$ as a bipartite system $\bigwedge \mathcal{E}= \bigwedge \mathcal{E}_A \otimes \bigwedge \mathcal{E}_B$, we introduce the reduced state $\rho_A = \op{tr}_{\wedge \mathcal{E}_B} (\rho ) \in \mathcal{S}(  \bigwedge \mathcal{E}_A)$. The {\em entanglement spectrum} of $\Psi$ is by definition the spectrum of $\rho_A$. The  {\em entanglement entropy} of $\Psi$ is $S( \rho_A) = -\op{tr} ( \rho_A \ln \rho_A )$. 

\begin{prop} $ $
\begin{enumerate} 
\item The entanglement spectrum of $\psi$ consists of the $\la_I(1-\la)_{I^c}$ where $I$ runs over the subsets of $\{ 1, \ldots ,d \}$.  
\item $S ( \rho_A) = \op{tr} \bigl( f (T_A)\bigr)$  with $ f (x) = -x \ln (x) - (1-x) \ln (1-x)$.
\end{enumerate}
\end{prop}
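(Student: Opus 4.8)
The plan is to read off both assertions directly from the Schmidt decomposition (\ref{eq:10}) together with the norm computation (\ref{eq:11}). First I would normalize the Schmidt vectors: for each subset $I$ with $\la_I \neq 0$ and $(1-\la)_{I^c}\neq 0$, set $\hat s_I^A := s_I^A/\|s_I^A\|$ and $\hat s_{I^c}^B := s_{I^c}^B/\|s_{I^c}^B\|$ (the remaining terms of (\ref{eq:10}) vanish since $s_i^A = 0$ when $\la_i = 0$ and $s_i^B = 0$ when $\la_i = 1$). By part 1 of Lemma \ref{lem:calcul} one has $\langle s_i^A, s_j^A\rangle = \la_i \delta_{ij}$ and $\langle s_i^B, s_j^B\rangle = (1-\la_i)\delta_{ij}$, so the families $(\hat s_I^A)$ and $(\hat s_{I^c}^B)$ are orthonormal in $\bigWedge \mathcal{E}_A$ and $\bigWedge \mathcal{E}_B$ respectively; hence (\ref{eq:10}) rewrites as $\Psi = \sum_I (-1)^{\ep_I}\sqrt{\la_I(1-\la)_{I^c}}\,\hat s_I^A \otimes \hat s_{I^c}^B$, a genuine Schmidt decomposition. (Consistently, $\|\Psi\|^2 = \sum_I \la_I(1-\la)_{I^c} = \prod_{i}(\la_i+(1-\la_i)) = 1$.)

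Next I would compute $\rho_A$ from its defining property $\op{tr}(\rho_A A_1) = \langle \Psi, (A_1\otimes \op{id}_{\bigWedge \mathcal{E}_B})\Psi\rangle$ for all $A_1$. Expanding $\Psi$ and using $\langle \hat s_{I^c}^B, \hat s_{J^c}^B\rangle = \delta_{IJ}$ to annihilate the cross terms yields $\op{tr}(\rho_A A_1) = \sum_I \la_I(1-\la)_{I^c}\,\langle \hat s_I^A, A_1\hat s_I^A\rangle$. The operator $\sum_I \la_I(1-\la)_{I^c}\,|\hat s_I^A\rangle\langle \hat s_I^A|$ is positive with trace $1$, hence by the uniqueness in the characterization of the partial trace it equals $\rho_A$. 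Since the $\hat s_I^A$ are orthonormal, this is already a spectral decomposition, so the spectrum of $\rho_A$ — up to the multiplicity of the eigenvalue $0$ — consists exactly of the numbers $\la_I(1-\la)_{I^c}$, $I \subseteq \{1,\ldots,d\}$. This is the first assertion.

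For the second assertion I would substitute this spectral decomposition into $S(\rho_A) = -\op{tr}(\rho_A\ln\rho_A)$, with the convention $0\ln 0 = 0$, to obtain $S(\rho_A) = -\sum_I \la_I(1-\la)_{I^c}\ln\!\big(\la_I(1-\la)_{I^c}\big)$. Writing $\ln(\la_I(1-\la)_{I^c}) = \sum_{i\in I}\ln\la_i + \sum_{i\in I^c}\ln(1-\la_i)$ and interchanging the sum over $I$ with the sum over $i$, each inner sum factorizes exactly as in the Bernoulli interpretation following (\ref{eq:8}): $\sum_{I\ni i}\la_I(1-\la)_{I^c} = \la_i$ and $\sum_{I\not\ni i}\la_I(1-\la)_{I^c} = 1-\la_i$. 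Therefore $S(\rho_A) = -\sum_{i=1}^d\big(\la_i\ln\la_i + (1-\la_i)\ln(1-\la_i)\big) = \sum_{i=1}^d f(\la_i) = \op{tr}(f(T_A))$, the last equality because $\la_1,\ldots,\la_d$ are the eigenvalues of $T_A$.

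The only point that requires any care is the bookkeeping around vanishing Schmidt coefficients (terms of (\ref{eq:10}) with $\la_i \in \{0,1\}$ for some $i$), which I dispose of by simply dropping them: they contribute $0$ both to $\Psi$ and, via $0\ln 0 = 0$, to the entropy. The combinatorial identity $\sum_{I\ni i}\la_I(1-\la)_{I^c} = \la_i$ is the elementary expansion of $\prod_{j\neq i}(\la_j + (1-\la_j))$. Neither is a genuine obstacle, so the argument is short once Lemma \ref{lem:calcul} is in hand.
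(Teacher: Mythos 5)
Your proposal is correct and follows essentially the same route as the paper: both read the entanglement spectrum off the Schmidt decomposition (\ref{eq:10}) with the norms (\ref{eq:11}), and both compute the entropy by expanding $\ln\bigl(\la_I(1-\la)_{I^c}\bigr)$, interchanging the sums over $I$ and over $i$, and using the factorization $\sum_{I\ni i}\la_I(1-\la)_{I^c}=\la_i\prod_{j\neq i}(\la_j+(1-\la_j))=\la_i$. The only difference is that you spell out the normalization of the Schmidt vectors, the partial-trace computation, and the bookkeeping for $\la_i\in\{0,1\}$, which the paper leaves implicit.
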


\begin{proof} The computation of the spectrum follows from the Schmidt decomposition (\ref{eq:10}) having in mind (\ref{eq:11}). We deduce that  
$$ -S ( \rho_A) = \sum_I \la_I(1-\la)_{I^c} \ln ( \la_I(1-\la)_{I^c}) $$ 
Write $\ln ( \la_I(1-\la)_{I^c}) =\sum_{i \in I} \ln \la_i + \sum_{i \in I^c} \ln ( 1-\la_i)$. Reorganising the sums
 with $\sum_I ( \sum_{i \in I} + \sum_{i \in I^c}) = \sum_i ( \sum_{I \ni i} + \sum_{I^c \ni i } )$, we get 
$$ - S( \rho_A) = \sum_i \Bigl[ ( \ln \la_i) \sum_{I \ni i} \la_I ( 1- \la)_{I^c} + \ln ( 1- \la_i) \sum_{I^c \ni i} \la_I ( 1- \la)_{I^c} \Bigr] .$$
Finally, we claim that the sums over $I$ in the right-hand side are respectively equal to $\la_i$ and $1-\la_i$. For instance, the first sum for $i=1$ is equal to 
\begin{gather*}
 \sum_{I \ni 1}  \la_I ( 1- \la)_{I^c} = \la_1 \sum_{J \subset \{ 2, \ldots, d \}} \la_J ( 1- \la)_{J^c} \\ 
= \la_1 ( \la_2 + ( 1-\la_2))\ldots \bigl( \la_n + (1-\la_n) \bigr) = \la_1,
\end{gather*}
which concludes the proof.
\end{proof}

\section{Applications} 
In this section, we prove the results presented in Section \ref{sec:appl-iqh-stat} of the introduction.
So consider a measurable subset $A$ of $M$ and let $N_A^\Psi$ be the probability distribution of the number of particles in $A$ of the Slater determinant $\Psi$. Denote by $ \la_1 \leqslant \ldots \leqslant \la_{d_k}$ the eigenvalues of the Toeplitz operator of $\mathcal{H}_k$ corresponding to $A$. 
By (\ref{eq:8}), $N_A^\Psi$ has the same distribution as the sum of independent Bernoulli random variable $B( \la_i)$, $i =1, \ldots , d_k$. Denote by $f( \cdot ,\lambda)$ the cumulant generating function of the fluctuation $\widetilde {B} ( \la)  = B (\la) - \mathbb{E} ( B( \la))$
\begin{gather} \label{eq:16}
 f(t,\la) := \ln \mathbb{E} ( e^{ t \widetilde{B}(\la) }) = \ln ( e^{-t \la} ( 1 - \la + \la e^t)).
\end{gather}
We denote also by $\widetilde{N}_A^\Psi$ the fluctuation of $N_A^\Psi$.

\begin{prop} \label{prop:cumulant_generating_function}
Assume that $A$ has a smooth boundary. Then 
$$ \lim_{k \rightarrow \infty} k^{-n + 1/2} \ln \mathbb{E} ( e ^{t \widetilde{N}_A^\Psi }) =  \frac{\op{vol} ( \partial A)}{(2 \pi)^n} I (f( t, \cdot)) $$
in the $\Ci$-topology on any compact subset of $B = \{ t \in \C/ \; | \op{Im} t| < \pi \}$. 
\end{prop}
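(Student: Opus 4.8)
The starting point is the identity, valid for each fixed $t\in B$,
$$\ln \mathbb{E}\bigl(e^{t\widetilde{N}_A^\Psi}\bigr) = \op{tr}\bigl(g_t(T_{A,k})\bigr) = k^{\,n-1/2}\, S_k(g_t),\qquad g_t(x):=f(t,x),$$
where $S_k(u)=k^{-n+1/2}\op{tr}(u(T_{A,k}))$ and $I(u)=\int_{-\infty}^{\infty}u(\er(x))\,dx$ as before: by (\ref{eq:8}) the law of $N_A^\Psi$ is that of a sum of independent Bernoulli variables $B(\lambda_i)$, and the cumulant generating function of the fluctuation is the sum $\sum_i f(t,\lambda_i)$ of the individual ones, which is precisely $\op{tr}(g_t(T_{A,k}))$. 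Thus the statement reduces to proving that $S_k(g_t)\to(2\pi)^{-n}\op{vol}(\partial A)\,I(g_t)$ in the $\Ci$-topology in $t$, locally uniformly on $B$.

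First I would check that $t\mapsto g_t$ is a continuous map from $B$ into the normed space $V=\{u\in\mathcal{C}([0,1],\C) : \|u\|_1<\infty\}$ of Proposition \ref{prop:uniform_result}. For $|\op{Im}t|<\pi$, as $\lambda$ runs over $[0,1]$ the point $1+\lambda(e^t-1)$ travels along the segment from $1$ to $e^t$, which avoids $(-\infty,0]$ precisely because $e^t\notin(-\infty,0]$; hence $f(t,\lambda)=-t\lambda+\ln(1+\lambda(e^t-1))$ (principal branch of $\ln$) is well defined and smooth on $[0,1]$, with $f(t,0)=f(t,1)=0$, and $(t,\lambda)\mapsto\partial_\lambda f(t,\lambda)=-t+(e^t-1)/(1+\lambda(e^t-1))$ is jointly continuous on $B\times[0,1]$. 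Integrating $\partial_\lambda f$ from either endpoint gives, for $t,t'$ in a compact $K\subset B$, the estimates $|g_t(x)|\leqslant C_K\min(x,1-x)$ and $\|g_t-g_{t'}\|_1\leqslant 2\sup_{x\in[0,1]}|\partial_\lambda f(t,x)-\partial_\lambda f(t',x)|$; the first shows $g_t\in V$ with $\|g_t\|_1\leqslant 2C_K$, the second the continuity of $t\mapsto g_t$. In particular $\{g_t:t\in K\}$ is a compact subset of $V$.

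With this in hand the argument finishes quickly. Proposition \ref{prop:uniform_result} goes through unchanged for complex-valued functions (its proof uses only $|u(x)|\leqslant 2\|u\|_1\,x(1-x)$ together with $|\op{tr}(u(T_A))|\leqslant\sum_i|u(\lambda_i)|$, $|I(u)|\leqslant\int_0^1|u|\,\delta$, and Theorem \ref{theo:Weyl_law_singular}, which already allows complex $u$); it therefore asserts that $S_k\to(2\pi)^{-n}\op{vol}(\partial A)\,I$ uniformly on compact subsets of $V$, and applying this to $\{g_t:t\in K\}$ yields $S_k(g_t)\to(2\pi)^{-n}\op{vol}(\partial A)\,I(g_t)$ uniformly in $t\in K$. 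Finally, for each $k$ the function $t\mapsto S_k(g_t)=k^{-n+1/2}\sum_i f(t,\lambda_i)$ is holomorphic on $B$, and so is the limit $t\mapsto(2\pi)^{-n}\op{vol}(\partial A)\int_0^1 f(t,x)\,\delta(x)\,dx$ (differentiation under the integral sign is legitimate since $|g_t(x)|\leqslant C_K\,x(1-x)$ locally uniformly and $x(1-x)\delta(x)$ is bounded). Locally uniform convergence of holomorphic functions forces, by the Cauchy integral formula, locally uniform convergence of all derivatives, which is exactly convergence in the $\Ci$-topology on compact subsets of $B$.

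The one step demanding genuine care is the verification that $g_t$ meets the hypotheses of Theorem \ref{theo:Weyl_law_singular} and Proposition \ref{prop:uniform_result} — the branch-of-logarithm bookkeeping making $g_t$ a well-defined smooth function on $[0,1]$ vanishing at the endpoints, with all estimates uniform in $t$ over compact subsets of $B$. Everything past that point is a routine normal-families upgrade from pointwise to $\Ci$ convergence.
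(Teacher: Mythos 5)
Your proof is correct and follows the paper's own route: the identity $\ln\mathbb{E}(e^{t\widetilde N_A^\Psi})=\sum_i f(t,\la_i)=\op{tr}(f(t,T_{A,k}))$, followed by Theorem \ref{theo:Weyl_law_singular} and Proposition \ref{prop:uniform_result} applied to the compact family $\{f(t,\cdot):t\in K\}$ in the $\|\cdot\|_p$-normed space. The only (harmless) divergence is the final upgrade to $\Ci$-convergence: the paper differentiates $t\mapsto f(t,\cdot)$ as a smooth $\mathcal{C}^1([0,1])$-valued map and reapplies the uniform estimate to the $t$-derivatives, whereas you invoke holomorphy of $t\mapsto S_k(g_t)$ on $B$ together with the Cauchy-integral (Weierstrass) argument --- both are valid, and your verification that $1+\la(e^t-1)$ stays off $(-\infty,0]$ for $|\op{Im}t|<\pi$, so that the principal branch makes $f(t,\cdot)$ smooth on $[0,1]$ with bounds uniform on compact $t$-sets, is precisely the point that legitimizes either version.
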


\begin{proof} 
We use that $\ln \mathbb{E} ( e ^{t \widetilde{N}_A^\Psi }) = \sum_{i=1}^{d_k} \ln \mathbb{E} ( e^{ t \widetilde{B}(\la_i) }) = \op{tr} ( f (t, T_A)) .$
Since $f$ is smooth on $B \times [0,1]$ and $f(t,0) = f(t,1) =0$, the pointwise convergence follows from Theorem \ref{theo:Weyl_law_singular}. To prove the convergence in $\Ci$-topology, we apply Proposition \ref{prop:uniform_result} by using that  $B \rightarrow \Cl^1 ( [0,1])$, $t\rightarrow f(t,\cdot)$ is smooth, and that the $\Cl^1$ norm is stronger than $\| \cdot \|_p$.
\end{proof}

Recall that the $\ell$-th cumulant $\ka_\ell (X)$ of a probability distribution $X$ is defined by the generating function $$ \ln ( \mathbb{E}( e^{tX} )) = \sum \ka_\ell (X) t^\ell/\ell!.$$ 
In particular, $\ka_1 (X)$ is the mean of $X$ and $\ka_2(X)$ its variance. Since $\kappa_\ell (X) = \kappa_\ell ( \widetilde X)$ for $\ell \geqslant 1$,  we can deduce from Proposition \ref{prop:cumulant_generating_function} that $ k^{-n+1/2} \ka_\ell (N_A^\Psi)$ has a finite limit when $k\rightarrow \infty$. But we can actually get a complete asymptotic expansion in the following way. Expanding the cumulant generating function of the Bernoulli distribution, we have 
$$ \ln ( \mathbb{E} ( e^{ t B( \la)} ) = \ln ( 1 + (e^t -1 ) \la ) = \sum P_\ell ( \la) \la ^\ell/\ell! ,$$
with $P_{\ell} ( \la ) = \kappa_\ell ( B( \la))$. From this expansion, we recover the classical fact that the $P_\ell$ are polynomial functions and that they satisfy the recurrence relation $ P_{\ell+1} (X) = X (1-X) P'_{\ell} (X)$. We deduce in particular that for $\ell \geqslant 2$, 
\begin{gather} \label{eq:par_rel}
P_{\ell} (X ) = (-1)^{\ell} P_{\ell} (1-X).
\end{gather}
The cumulants being additive for independent random variables, we have
$$ \ka_{\ell} ( N_A^{\Psi} ) = \sum_{i=1}^{d_k} \ka_{\ell} (B(\la_i)) =  \op{tr}  P_\ell ( T_{A}), \qquad \ell \in \N^*$$
We can now deduce Theorem \ref{theo:cumulant_estimate} from Theorem \ref{theo:pol_convergence}. The fact that the leading order coefficient $I(P_\ell)$ vanishes when $\ell$ is odd and that the expansions are in powers of $k^{-1}$ is a consequence of the parity relation (\ref{eq:par_rel}) with the second assertion of Theorem  \ref{theo:pol_convergence}.

Corollary \ref{cor:concentration+CLT} is a consequence of the variance estimate (\ref{eq:var_estim}). The first assertion, the concentration inequality, follows from the Chernoff bound under the form \cite[Theorem 2.1.3]{Tao}. The second assertion, the convergence to the normal distribution, follows from Lindeberg-Feller central limit theorem, cf. \cite[Theorem 7.3.5]{Si}. 

As in Section \ref{sec:appl-iqh-stat}, let $\widetilde{B}(p) = B(p) - p$ and for any $\al >0$, let $X_n (\al)$ be the sum of $2n+1$ independent random variables $\widetilde{B}( \op{er} ( \al m ) )$, $m =-n, -n +1, \ldots, n$. Introduce the same function $f$ as in  (\ref{eq:16}) and the same domain $B$ as in Proposition \ref{prop:cumulant_generating_function}. 
\begin{prop} $X_n (\al)$ converges as $n \rightarrow \infty$ to a random variable $X(\al)$ with vanishing odd cumulants and even ones given by  
\begin{gather} 
 \kappa_{2 \ell} (X(\al)) = \sum_{m \in \Z} P_{2 \ell} ( \op{er} ( \al m) ) = \al^{-1} I ( P_{2 \ell}) + \bigo ( \al^{\infty})  
\end{gather}
in the limit $\al \rightarrow 0$. 
Furthermore,
\begin{gather} \label{eq:17}
 \ln E( e^{t X(\al)}) = \al^{-1}    I ( f( t, \cdot)) + \bigo ( \al^{\infty}) 
\end{gather}
where the $\bigo$ is uniform on compact subsets of $B$.
\end{prop}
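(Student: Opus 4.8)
The plan is to argue entirely through moment generating functions. Since $X_n(\al)$ is a sum of $2n+1$ independent random variables, $\mathbb{E}(e^{t X_n(\al)}) = \prod_{m=-n}^n \mathbb{E}(e^{t\widetilde{B}(\er(\al m))}) = \exp\bigl( \sum_{m=-n}^n f(t, \er(\al m)) \bigr)$ with $f$ the function of (\ref{eq:16}). For $t$ in the strip $B$ one has $1-\la+\la e^t \neq 0$ whenever $\la \in [0,1]$, so $\la \mapsto f(t,\la)$ is holomorphic on a neighbourhood of $[0,1]$ with $f(t,0) = f(t,1) = 0$, hence $|f(t,\la)| \leqslant C_t \min(\la, 1-\la)$ there, with $C_t$ bounded for $t$ in a compact subset of $B$. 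Together with the Gaussian decay $\er(\al m) = \bigo(e^{-\al^2 m^2/2})$ as $m \to -\infty$ and $1 - \er(\al m) = \bigo(e^{-\al^2 m^2/2})$ as $m \to +\infty$, read off from (\ref{eq:error_Asympt}), this shows that the series $\sum_{m \in \Z} f(t, \er(\al m))$ converges absolutely, uniformly on compact subsets of $B$. So $\mathbb{E}(e^{t X_n(\al)})$ converges, uniformly on compact subsets of $B$, to the holomorphic function $G(t) := \exp\bigl( \sum_{m \in \Z} f(t, \er(\al m)) \bigr)$; restricting to the imaginary axis and invoking L\'evy's continuity theorem produces $X(\al)$ with $\mathbb{E}(e^{t X(\al)}) = G(t)$, and uniform convergence of holomorphic functions forces convergence of all Taylor coefficients at $0$, that is, $\kappa_\ell(X(\al)) = \lim_n \kappa_\ell(X_n(\al))$ for every $\ell$.

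Next I would read off the cumulants. By additivity and $\kappa_\ell(\widetilde{B}(\la)) = \kappa_\ell(B(\la)) = P_\ell(\la)$ for $\ell \geqslant 2$, one has $\kappa_{2\ell}(X_n(\al)) = \sum_{m=-n}^n P_{2\ell}(\er(\al m))$, whence $\kappa_{2\ell}(X(\al)) = \sum_{m \in \Z} P_{2\ell}(\er(\al m))$ in the limit. For the vanishing of the odd cumulants I would use the elementary identity $f(t, 1-\la) = f(-t, \la)$ (immediate from (\ref{eq:16}); it reflects that $\widetilde{B}(1-\la)$ has the law of $-\widetilde{B}(\la)$) together with the symmetry $\er(-x) = 1 - \er(x)$: substituting $m \mapsto -m$ in $\ln G(t) = \sum_{m \in \Z} f(t,\er(\al m))$ shows that $\ln G$ is an even function of $t$, so all of its odd Taylor coefficients --- the odd cumulants of $X(\al)$ --- vanish. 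Equivalently, (\ref{eq:par_rel}) gives $P_{2\ell+1}(1-X) = -P_{2\ell+1}(X)$, so the $m$ and $-m$ terms of $\sum_m P_{2\ell+1}(\er(\al m))$ cancel and the $m=0$ term $P_{2\ell+1}(1/2)$ is zero.

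Finally I would extract the small-$\al$ asymptotics from Lemma \ref{lem:Euler_Maclaurin}, applied with $\tau = \al^{-1}$ to $x \mapsto g(\er(x))$ for $g = P_{2\ell}$ and for $g = f(t, \cdot)$. In either case $g$ is smooth near $[0,1]$ with $g(0)=g(1)=0$; since each derivative $\er^{(j)}$ with $j \geqslant 1$ is a polynomial times $e^{-x^2}$, the chain rule shows every derivative of $g \circ \er$ is a bounded function times a Gaussian, hence tends to $0$ at $\pm\infty$ and is integrable, so the hypotheses of the lemma are met. The lemma then yields $\sum_{m \in \Z} g(\er(\al m)) = \al^{-1} \int_\R g(\er(x))\,dx + \bigo(\al^\infty) = \al^{-1} I(g) + \bigo(\al^\infty)$, which is precisely the stated formula for $\kappa_{2\ell}(X(\al))$ and, taking $g = f(t,\cdot)$ and using $\ln \mathbb{E}(e^{t X(\al)}) = \sum_m f(t, \er(\al m))$, equation (\ref{eq:17}); the $\bigo$ is uniform for $t$ in a compact subset of $B$ because Euler--Maclaurin only involves finitely many sup-norms of derivatives of $f(t, \er(\cdot))$, which are bounded uniformly in such $t$. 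I do not expect a genuine obstacle here: the only points requiring care are verifying the regularity hypotheses of Lemma \ref{lem:Euler_Maclaurin} and the uniformity in $t$, both of which are routine.
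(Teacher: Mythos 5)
Your proof is correct, and the overall strategy coincides with the paper's: identify the limiting cumulant generating function with the series $\sum_{m\in\Z} f(t,\er(\al m))$, read off the cumulants, kill the odd ones by the $m\leftrightarrow -m$ symmetry combined with $\er(-x)=1-\er(x)$, and then feed $g\circ\er$ into Lemma \ref{lem:Euler_Maclaurin} to get the $\al\to 0$ expansion. The one place where your execution genuinely differs is the construction of $X(\al)$ and the proof of convergence. The paper pairs the $\pm m$ terms to obtain the explicit even product $E(e^{tX_n(\al)})=\cosh(t/2)\prod_{m\geqslant 1}\bigl(1+2\la_m(1-\la_m)(\cosh t-1)\bigr)$, deduces convergence from $\sum\la_m(1-\la_m)<\infty$, and defines $X(\al)$ by Fourier inversion on the lattice $\tfrac12\Z$ where $X_n(\al)$ takes its values; the evenness in $t$ is then manifest. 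You instead bound $|f(t,\la)|\leqslant C_t\min(\la,1-\la)$, use the Gaussian decay of $\er(\al m)$ and $1-\er(\al m)$ to get absolute convergence of the log-series uniformly on compacts of $B$, and invoke L\'evy's continuity theorem; the explicit product buys the paper a concrete description of the law of $X(\al)$, while your route is a bit more generic and avoids the lattice-valued inversion. Both arguments pass from convergence of the characteristic functions to convergence of the cumulants with the same (standard, slightly elided) identification of the Taylor coefficients of the analytic limit with the cumulants of $X(\al)$, and your verification of the hypotheses of Lemma \ref{lem:Euler_Maclaurin} and of the uniformity in $t$ is more detailed than what the paper records.
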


\begin{proof} 
Since the random variables $X_n(\al)$ take their values in $\frac{1}{2} \Z $, 
$$ P ( X_n ( \al) = \ell /2 ) = \frac{1}{\pi} \int_{-\pi/2}^{\pi/2} e^{ -it \ell/2} E( e^{it X_n( \al)}) \; dt .$$ 
Let us prove that these integral have a limit when $n \rightarrow \infty$. A straightforward computation leads to 
$$ E( e^{t (\tilde{B}( \la) +\tilde{B}(1-\la) )}) = 1 + 2 \la ( 1 - \la) ( \cosh t -1)$$
Using the relation $\er (x) + \er ( 1-x) =1$, we get that   
$$ E( e^{t X_n ( \al) }) = \cosh ( t/2) \prod_{m=1}^{n} ( 1 + 2 \la_m ( 1- \la_m) ( \cosh t -1)) $$
with $\la_m = \er ( \al m)$. Using that $I (P_2)$ is finite, we obtain that $\sum \la_m ( 1- \la_m) $ is finite, so $E( e^{t X_n ( \al) })$ converges as $n \rightarrow \infty$ to some limit $\varphi (\al)(t)$ uniformly on any compact set of $\C \ni t$. We now can define the random variables $X(\al)$ by 
$$   P ( X ( \al) = \ell /2 ) = \frac{1}{4\pi} \int_{-2\pi}^{2\pi} e^{ -it \ell/2} \varphi ( it)  \; dt , \quad \ell \in \Z. $$
The convergence of the characteristic functions being uniform on compact sets, we certainly have  that  
$$\kappa_{\ell} (X(\al)) = \sum_{m \in \Z} \ka_\ell ( \tilde{B}( \la_m) )  = \sum_{m \in \Z} P_{\ell} ( \op{er} ( \al m) )$$
which is equal to $ \al^{-1} I ( P_{\ell}) + \bigo ( \al^{\infty})$ by Lemma \ref{lem:Euler_Maclaurin}.  The proof of (\ref{eq:17}) is similar. 
\end{proof}

\bibliographystyle{alpha}
\bibliography{biblio}

\bigskip

\noindent
\begin{tabular}{ll}
Laurent Charles & Benoit Estienne \\  Sorbonne Universit\'e, CNRS, \qquad $ $ & Sorbonne Universit\'e, CNRS, \\
 Institut de Math\'{e}matiques  &  Laboratoire de Physique Th\'eorique \\
de Jussieu-Paris Rive Gauche,  &  et Hautes
\'Energies, \\
IMJ-PRG, F-75005 Paris, France. & LPTHE, F-75005 Paris, France. 
\end{tabular}

\end{document}